\pdfoutput=1
\documentclass[10pt]{article}
\usepackage[preprint]{tmlr}

\usepackage{graphicx}
\usepackage{amsmath,amssymb,amsfonts}
\usepackage{amsthm}
\usepackage{xcolor}
\usepackage{booktabs}
\usepackage{array}
\usepackage{algorithm}
\usepackage{algorithmicx}
\usepackage{algpseudocode}
\usepackage{hyperref}
\usepackage{url}
\hypersetup{colorlinks=true,allcolors=blue!55!black}

\theoremstyle{plain}
\newtheorem{theorem}{Theorem}
\newtheorem{proposition}[theorem]{Proposition}
\newtheorem{corollary}[theorem]{Corollary}
\newtheorem{lemma}[theorem]{Lemma}
\theoremstyle{definition}
\newtheorem{assumption}{Assumption}
\newtheorem{remark}{Remark}

\title{Asynchronous, Generation-Free ABC: Streaming AMIS Reweighting for Heterogeneous Simulator Workloads on HPC}

\author{\name Julian Herold \email julian.herold@kit.edu \\
      \addr Scientific Computing Center, Karlsruhe Institute of Technology, Karlsruhe, Germany
      \AND
      \name Markus Götz \email markus.goetz@kit.edu \\
      \addr Scientific Computing Center, Karlsruhe Institute of Technology, Karlsruhe, Germany
      \AND
      \name Alexander Schug \email alexander.schug@kit.edu \\
      \addr Scientific Computing Center, Karlsruhe Institute of Technology, Karlsruhe, Germany}

\def\month{MM}  %
\def\year{YYYY} %
\def\openreview{\url{https://openreview.net/forum?id=XXXX}} %

\begin{document}

\maketitle

\begin{abstract}
Approximate Bayesian computation (ABC) enables Bayesian inference for models with intractable likelihoods by estimating parameter posterior distributions from simulated data. Sequential ABC methods improve sampling efficiency by iteratively adapting the proposal distribution toward regions of high posterior probability. However, these methods typically operate in generations, where each generation must wait for its slowest simulation before the next can begin. On a parallel machine, this synchronization barrier leaves workers idle and reduces throughput, particularly when simulation runtimes vary substantially.
We introduce a generation-free ABC sampler that updates its proposal from the best particles evaluated so far as simulation results arrive, immediately assigning a new candidate to each available worker. For posterior estimation, we reuse the full evaluation history and reweight particles against a reconstructed mixture of sampling proposals. 
We show how errors in approximating the mixture of sampling proposals affect the estimated posterior and establish consistency for the smooth-ABC posterior when the approximation becomes asymptotically exact. In our experiments, the method achieves well-calibrated posterior estimates where a pyABC baseline with the same kernel and bandwidth schedule under-covers.
From the asynchronous run's timing alone, we predict how much a synchronization barrier would slow the sampler. These predictions agree with measured synchronization penalties ranging from $1.2\times$ to $400\times$. On a Cellular Potts tissue simulator, the asynchronous sampler completes $2.1$--$2.4\times$ as many evaluations as pyABC in the same hour, and the discrepancy threshold for its best $100$ evaluations is $4.1\times$ smaller. With the starting bandwidth set from prior-predictive discrepancies, it also produces a more concentrated two-parameter posterior with a larger effective sample size than pyABC. 
\end{abstract}

\section{Introduction}\label{sec:introduction}

Approximate Bayesian computation (ABC) enables Bayesian inference when the likelihood is intractable but data can be generated using a simulator \citep{cranmer2020frontier,beaumont2019abc}. Sequential ABC methods, such as ABC sequential Monte Carlo (ABC-SMC) and ABC population Monte Carlo (ABC-PMC) \citep{sisson2007,toni2009abcsmc,beaumont2009adaptive,delmoral2012adaptive}, iteratively update a population of weighted parameter samples, to favor parameters whose simulated data better match the observations. Although simulations of individual candidates can run in parallel, the generation-based implementations considered here update their proposal and ABC bandwidth only after the entire population has been evaluated. Workers that finish early must therefore wait for the slowest simulation, which can substantially reduce throughput when runtimes vary. This is particularly relevant to cell-based tissue models, where different parameter values can produce very different numbers of cells and, consequently, simulation runtimes.

Our sampler avoids this barrier by updating the proposal whenever a simulation result becomes available. It constructs a mixture from the $k$ lowest-discrepancy particles evaluated so far and sends a new candidate to the free worker. Because proposals change as results arrive, the completed candidates come from different sampling distributions. We account for this using streaming adaptive multiple importance sampling (AMIS) \citep{cornuet2012amis}, reweighting the history against a mixture of the proposals used. The recorded evaluations also let us recompute posterior estimates after a run and examine the effect of approximating this mixture, without repeating expensive simulations.

We compare the sampler with pyABC \citep{klinger2018pyabc}, a generation-based ABC implementation configured with the same kernel and bandwidth schedule. The cost of synchronization is especially clear when one worker remains slow: in this experiment, pyABC throughput falls by more than a factor of one hundred, while our sampler loses less than half its throughput. We also investigate whether the additional evaluations improve posterior quality and at what simulation cost asynchronous updates become beneficial.

We make four contributions, evaluated in the corresponding parts of \S\ref{sec:results}:
\begin{enumerate}
\item A generation-free sampler that combines smooth-kernel ABC with streaming AMIS reweighting. Its posterior estimate can be reconstructed from the evaluation log (\S\ref{sec:method}).
\item A runtime-based prediction of the synchronization penalty. We validate it against a synchronized variant of the same sampler for measured synchronization penalties ranging from $1.2\times$ to $400\times$ (C1, \S\ref{sec:results-barrier}).
\item An empirical assessment of how increased throughput affects tolerance and when the benefits of asynchronous execution exceed its coordination overhead (C2, \S\ref{sec:results-converts}; C3, \S\ref{sec:results-boundary}).
\item A theoretical analysis of how approximating the mixture of sampling proposals affects posterior estimation, including conditions for consistency with the intended smooth-ABC posterior (\S\ref{sec:theory}). We also assess the quality of that approximation, posterior calibration and recovery, and the effects of starting bandwidth and archive size (C4, \S\ref{sec:results-posterior}).
\end{enumerate}

\section{Background and Related Work}\label{sec:background}\label{sec:background-abc}\label{sec:background-sequential}\label{sec:background-amis}\label{sec:background-distributed}
ABC compares simulated data with observations through a discrepancy rather than evaluating the likelihood directly. Let $y$ denote the observations, $\theta$ the model parameters, and $\rho$ the discrepancy between simulated and observed summaries. At bandwidth $\epsilon$, the resulting ABC target is
\begin{equation}
\pi_{\epsilon}(\theta \mid y) \propto \pi(\theta)\, L_{\epsilon}(\theta),
\qquad
L_{\epsilon}(\theta) = \mathbb{E}_{\rho \sim p(\rho \mid \theta)}\!\left[K_{\epsilon}(\rho)\right],
\label{eq:abc-target}
\end{equation}
with $\pi$ the prior and $K_\epsilon$ the ABC kernel. The classical kernel is the hard indicator $\mathbf{1}[\rho<\epsilon]$, under which a simulation is either accepted or rejected and the output is a population of accepted particles. With a smooth kernel, such as the Gaussian $\exp(-\rho^2/2\epsilon^2)$ \citep{wilkinson2013,fearnhead2012}, every simulation contributes a continuous weight rather than a binary decision. This also avoids the discontinuity introduced by hard acceptance thresholds. Throughout, $\epsilon$ is the bandwidth of this kernel and the \emph{tolerance} a sampler has reached is a separate, empirical quantity defined in \S\ref{sec:experiments-metrics}.

\paragraph{Sequential ABC and the generation.}
Sequential ABC improves sampling efficiency by adapting its proposal across generations. Methods in this family include ABC-SMC \citep{sisson2007,toni2009abcsmc}, ABC-PMC \citep{beaumont2009adaptive}, and subsequent adaptive variants \citep{delmoral2012adaptive,drovandi2011,lenormand2013apmc}. More recently, \citet{pesonen2025stratified} improved sampling efficiency by stratifying the discrepancy space. These sequential methods maintain populations of $N$ particles while progressively reducing the ABC bandwidth. Generation $t$ fits a mixture proposal $q_t$ to the weighted population of generation $t-1$, draws candidates from it until $N$ have been accepted at the current bandwidth, and weights each accepted particle by $\pi(\theta_i)/q_t(\theta_i)$. The next proposal needs the whole weighted population, so generation $t+1$ cannot begin before the last candidate of generation $t$ has been simulated. This creates the synchronization barrier described in \S\ref{sec:introduction}. Generation-based implementations such as pyABC \citep{klinger2018pyabc,schalte2022pyabc}, ABCpy \citep{dutta2017abcpy}, and ELFI \citep{lintusaari2018elfi} can parallelize the simulations within a generation but wait before constructing the next one.

\paragraph{Reusing every draw: AMIS.} When the proposal changes during sampling, particles collected at different stages come from different distributions. Adaptive multiple importance sampling (AMIS) \citep{cornuet2012amis} combines them by weighting each particle against a mixture of the proposals used. After $T$ stages with proposals $q_1,\dots,q_T$, the weight of every past particle is updated using this mixture,
\begin{equation}
w_i = \frac{\pi(\theta_i)}{\frac1T\sum_{t=1}^{T} q_t(\theta_i)}
\qquad\text{instead of}\qquad
w_i = \frac{\pi(\theta_i)}{q_{t(i)}(\theta_i)},
\label{eq:balance-heuristic}
\end{equation}
the balance heuristic of \citet{veach1995}. The same denominator is used for particles from every stage. This allows earlier draws to contribute to the final estimate even after the proposal has changed. AMIS builds on adaptive importance sampling and population Monte Carlo \citep{cappe2004pmc,bugallo2017ais}. Consistency under regularity conditions was established by \citet{marin2019consistency}. We apply the same reweighting principle to a stream of individual draws. The proposal is updated after each arrival, and the denominator approximates the mixture of proposals used over the run. Recent work on parallel adaptive reweighting importance sampling also uses sample-centered Gaussian-mixture proposals and approximations to the cumulative proposal mixture \citep{liu2026paris}. Unlike our setting, that method assumes an evaluable target density and does not address asynchronous ABC simulations with heterogeneous runtimes.

\paragraph{Simulators with parameter-dependent cost.} Cell-based and agent-based models of tissues are a motivating example of simulators with parameter-dependent runtimes. These stochastic models represent individual cells either on a grid of sites (a lattice) or in continuous space. To infer model parameters from microscopy data, one can compare measured tissue features, such as cell counts and shape, with the corresponding features of simulated tissues. ABC is useful here because the likelihood is generally intractable \citep{kursawe2018abc,lambert2018agent}. The Cellular Potts model \citep{graner1992cpm,hirashima2017cpm} is the canonical lattice model of this class: each cell occupies a set of lattice sites, and cells grow, move and divide there. Because the number of simulated cells largely determines computational cost and depends on the inferred parameters, Cellular Potts runtimes can vary by orders of magnitude across the prior, making these simulations particularly susceptible to generation barriers (\S\ref{sec:experiments-suite}).

\paragraph{Barrier-free sampling elsewhere.} Parallel and off-barrier SMC has been studied for state-space models: the island particle model \citep{verge2015island}, asynchronous anytime particle methods \citep{paige2014cascade}, and parallel resampling \citep{murray2016anytime}. None of these targets ABC, and none combines barrier-free execution with AMIS-style reweighting against the mixture of proposals used. Look-ahead scheduling \citep{alamoudi2024lookahead} starts the next generation early on idle workers and so recovers part of the wait, but keeps the generation. 
Sequential neural simulation-based inference also commonly uses simulation rounds followed by retraining \citep{papamakarios2016snpe,greenberg2019apt,lueckmann2021benchmark}. More recently, dynamic SBI has removed fixed rounds from sequential neural posterior estimation by updating the simulation dataset and proposal during training \citep{lyu2026dynamicsbi}. Our method instead uses smooth-kernel ABC and reconstructs posterior weights directly from the completed simulations. Similar synchronization delays occur in distributed machine learning. In synchronous stochastic gradient descent, slow workers can delay an update \citep{dean2012distbelief,chen2016revisiting}; proposed remedies include backup workers and asynchronous updates \citep{recht2011hogwild}. Here the estimator is defined on whatever history has arrived, and the order of arrival matters only through the conditional distribution of each draw and how faithfully the posterior denominator represents it (\S\ref{sec:theory}).

\section{Method}\label{sec:method}
\begin{figure}[t]
\centering
\includegraphics[width=\linewidth]{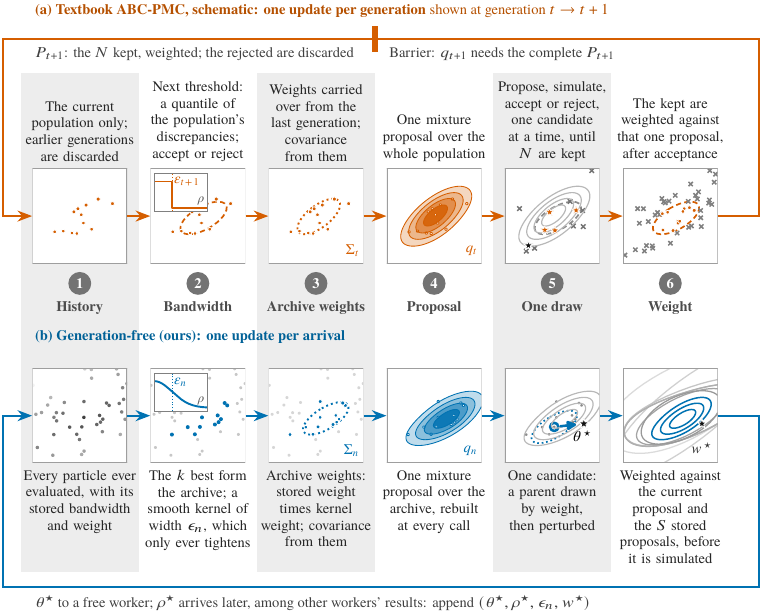}
\caption{One generation of ABC-PMC (a) beside one update of our sampler (b), step by step on a two-dimensional toy problem. ABC-PMC proposes a whole population from one mixture, simulates all of it, keeps the accepted particles and only then forms the next proposal, so a generation ends with its slowest simulation. Our sampler appends every completed evaluation to a history and rebuilds the archive, bandwidth and proposal from that history at every call. It draws one candidate for the free worker and weights that candidate against the current and stored proposals rather than against a single proposal. Row (a) is a schematic of textbook hard-threshold PMC, not the smooth-kernel pyABC baseline of \S\ref{sec:experiments-baselines}. Appendix~\ref{app:implementation} explains the symbols used in the figure.}
\label{fig:algorithm-strip}
\end{figure}

\subsection{Overview}\label{sec:method-overview}\label{sec:method-constraints}
Our sampler updates its proposal from completed evaluations and immediately assigns a new candidate to an available worker without waiting for the others. Figure~\ref{fig:algorithm-strip} contrasts this procedure (row b) with generation-based ABC-PMC (row a), which simulates a population before constructing its next proposal. Our method retains a growing history and uses the evaluations available when a worker requests another candidate (step 1). The $k$ particles with the smallest discrepancies form the \emph{archive}, and the bandwidth of the smooth ABC kernel is tightened as the archive improves (step 2). We construct a Gaussian-mixture proposal centered on the archive particles. Each mixture weight is proportional to the particle's stored parent weight multiplied by its kernel value at the current bandwidth (steps 3 and 4). One candidate is drawn from it and handed to the worker that has just become free (step 5). Since proposals change while simulations are running, we weight each candidate against a mixture of proposals rather than a single proposal. This applies the AMIS balance heuristic (\S\ref{sec:background}) to individual draws (step 6). The posterior estimate is not restricted to the last group of particles. After the run, we reweight all completed evaluations using a reconstructed mixture of sampling proposals.

Each worker constructs its next proposal from the history available to it, while the complete recorded history allows the posterior to be reconstructed during or after the run. Online parent selection uses an additional buffer of previous proposals (step 6), but this buffer is not needed for posterior reconstruction.

Weights serve different purposes during sampling and posterior estimation. During sampling, the stored \emph{parent weights} $w^\star$ and the current ABC kernel values determine the \emph{archive weights} $\tilde W_j$, which control the next proposal. For inference, we instead compute \emph{posterior weights} $W_{i,n}$ over the complete history after the run. The stored parent weights influence which candidates are proposed, but they are not the weights used to report the posterior. Sections~\ref{sec:method-history} to \ref{sec:method-weights} follow the figure's steps and \S\ref{sec:method-algorithm} assembles them.

\subsection{History and archive (steps 1 and 2)}\label{sec:method-history}
We retain all completed evaluations but use the $k$ best-fitting particles to construct the next proposal. This selected subset is the archive. Write the history at call $n$ as
\begin{equation}
\mathcal{H}_n = \bigl\{(\theta_i,\rho_i,\epsilon_i,w_i)\bigr\}_{i=1}^{n},
\label{eq:history}
\end{equation}
where $\theta_i$ is the parameter, $\rho_i$ the discrepancy, $\epsilon_i$ the bandwidth in force when $\theta_i$ was proposed, and $w_i$ its parent weight (\S\ref{sec:method-weights}). The archive $A_n$ is the set of the $k$ history members with the smallest discrepancies. Until the history holds $k$ members the sampler emits uniform prior draws (the \emph{bootstrap} phase), with $w_i=1$ and no bandwidth, so that the first archive can form. The history is append-only, and the archive, the bandwidth and the proposal below are recomputed from it at every call. The Propulate framework \citep{taubert2023propulate}, which hands the whole history to each call and takes one candidate back, provides this contract (Appendix~\ref{app:implementation}).

\subsection{Bandwidth (step 2)}\label{sec:method-tolerance}
Rather than accepting or rejecting each draw at a fixed threshold, we assign kernel weights as in \eqref{eq:abc-target}. As the archive improves, we reduce the bandwidth $\epsilon_n$ to give greater weight to better-fitting particles. Tightening it too quickly, however, can leave too few particles with appreciable weight. The scheduler therefore limits each decrease and waits for enough well-fitting evaluations before reducing the bandwidth further. The bandwidth never increases, and workers using different history prefixes follow a nonincreasing schedule. The bandwidth in force at call $n$ is the smaller of the running minimum of the bandwidths stored in the history and the value a bandwidth scheduler proposes from that history. The scheduler starts from a user-set $\epsilon_0$. At each search, the scheduler finds the largest bandwidth that retains at least $95\%$ of the archive's current kernel-weighted effective sample size. A single update can reduce the bandwidth by no more than a factor of two. The scheduler then holds the bandwidth until at least $2k$ evaluated particles have discrepancies below it, which brings $\epsilon_n$ close to the corresponding order statistic (settings in Appendix~\ref{app:implementation}). On an expensive simulator the run can end before the schedule has moved far from $\epsilon_0$, so $\epsilon_0$ has to be set from the scale of the prior-predictive discrepancies (\S\ref{sec:experiments-metrics}, \S\ref{sec:results-posterior}).

\subsection{Archive proposal (steps 3 to 5)}\label{sec:method-proposal}
At call $n$, we build the proposal by placing a Gaussian component at each archive member:
\begin{equation}
q_n(\theta) = \sum_{j\in A_n}\tilde W_j^{(n)}\, K_\Sigma(\theta-\theta_j),
\qquad
\tilde W_j^{(n)} \propto w_j \cdot K_{\epsilon_n}(\rho_j),
\label{eq:steady-state-proposal}
\end{equation}
The mixture weights $\tilde W_j^{(n)}$ are normalized to sum to one. Particles with larger weights are more likely to be selected as parents, and Gaussian perturbations explore nearby parameter values. The perturbation kernel $K_\Sigma$ has covariance $\Sigma_n = s\,\widehat{\mathrm{Cov}}_{\tilde W}(A_n)$, the weighted archive covariance scaled by $s$; each component is truncated to the prior box. Because the kernel is smooth, an archive member's contribution shrinks continuously as the bandwidth tightens instead of dropping out when a threshold crosses it, as it does in hard-threshold ABC-PMC. The numerical details are in Appendix~\ref{app:implementation}.

\subsection{Weights and the posterior estimate (step 6)}\label{sec:method-weights}\label{sec:method-amis}\label{sec:method-estimate}
Ordinary importance weights correct for the proposal used to generate a candidate. Here, however, the proposal changes with the stream of completed simulations, so the accumulated history contains draws from many proposals. We therefore use the AMIS idea of weighting against a mixture of proposals (\S\ref{sec:background}). During sampling, parent weights help determine which particles generate new candidates, whereas posterior estimation uses separate weights that account for the different proposals used throughout the run. A \emph{proposal snapshot} $q_s$ is the proposal \eqref{eq:steady-state-proposal} rebuilt from the history truncated at call $s$, at the bandwidth stored there. Any snapshot can be rebuilt from the history alone.

\emph{The parent weight.} The running sampler keeps the last $S$ snapshots it has taken, one every $k$ calls. A candidate $\theta^\star$ drawn from $q_n$ receives
\begin{equation}
w^\star = \frac{\pi(\theta^\star)}{\bar q^{\mathrm{on}}_n(\theta^\star)},
\qquad
\bar q^{\mathrm{on}}_n(\theta) = \frac{1}{1+|\mathcal{S}|}\Bigl[q_n(\theta) + \sum_{s\in\mathcal{S}} q_s(\theta)\Bigr],
\label{eq:amis-weight}
\end{equation}
with $\mathcal{S}$ the kept snapshots; $S=0$ recovers single-proposal weighting. This weight is stored with the particle as its $w_i$ and enters the archive weights of \eqref{eq:steady-state-proposal} and the scheduler's effective-sample-size search (\S\ref{sec:method-tolerance}). The kept snapshots are the only state the sampler carries between calls.

\emph{The posterior weights.} After the run, we reconstruct at most $S{+}1$ snapshots from $\mathcal{H}_n$ (two when $S=0$). Their history indices $\tau_1<\dots<\tau_m$ are spaced according to the numbers of draws represented by the snapshots. We weight each snapshot in proportion to that number, then reweight every evaluated particle against the resulting mixture. For a function $f$ of the parameter the estimate is
\begin{equation}
\widehat\pi_n(f) = \frac{\sum_{i=1}^n W_{i,n}\,f(\theta_i)}{\sum_{i=1}^n W_{i,n}},
\qquad
W_{i,n} = \frac{\pi(\theta_i)\,K_{\epsilon_n}(\rho_i)}{\bar q_n(\theta_i)},
\label{eq:posterior-estimator}
\end{equation}
with the denominator
\begin{equation}
\bar q_n = \lambda_n\,\pi + (1-\lambda_n)\sum_{s=1}^{m}\alpha_{s,n}\,q_{\tau_s},
\qquad \lambda_n=\max(\nu_n,\delta),\quad \alpha_{s,n}\ge0,\ \textstyle\sum_s\alpha_{s,n}=1.
\label{eq:snapshot-denominator}
\end{equation}
The denominator combines the reconstructed proposals with the prior. Each proposal snapshot receives a weight proportional to the number of draws it represents. The prior receives weight $\lambda_n$, given by the observed fraction $\nu_n$ of prior draws or a fixed minimum $\delta>0$, whichever is larger. This prior component bounds the importance weights by keeping the denominator bounded away from zero \citep{owen2000safe}. We use this finite snapshot mixture in place of the full mixture of sampling proposals. The accuracy of that approximation is the subject of \S\ref{sec:theory}. Computing $\widehat\pi_n$ is one $\mathcal{O}(nk)$ pass over the history, run after the timed budget and excluded from every throughput measurement (Appendix~\ref{app:implementation}).

\subsection{Update loop}\label{sec:method-algorithm}
Algorithm~\ref{alg:async-abc} is the update a worker runs whenever it needs a candidate; its lines follow the steps of Fig.~\ref{fig:algorithm-strip}. On $W$ workers each one runs the loop on its own copy of the history, which keeps growing as results arrive from every worker. No line waits for any other worker, and a worker that returns from a slow simulation finds a larger history than the one it left.
\begin{algorithm}
\caption{One update of the generation-free sampler, run by a worker on its copy of the history}\label{alg:async-abc}
\begin{algorithmic}
\Require History $\mathcal{H}_n$, archive size $k$, starting bandwidth $\epsilon_0$, kept snapshots $\mathcal{S}$
\State (1) \textbf{if} $|\mathcal{H}_n|<k$ \textbf{then return} a uniform prior draw with $w^\star=1$ \Comment{bootstrap}
\State (2) $A_n\gets$ the $k$ members of $\mathcal{H}_n$ with the smallest $\rho_i$; $\epsilon_n\gets$ the smaller of the stored running minimum and the scheduler's proposal
\State (3) $\tilde W^{(n)}_j\propto w_jK_{\epsilon_n}(\rho_j)$ over $A_n$; $\Sigma_n\gets s\,\widehat{\mathrm{Cov}}_{\tilde W}(A_n)$
\State (4) $q_n$ as in \eqref{eq:steady-state-proposal}; every $k$ calls, add $q_n$ to $\mathcal{S}$
\State (5) draw a parent $j\sim\tilde W^{(n)}$ and $\theta^\star\sim K_{\Sigma_n}(\cdot-\theta_j)$, retried into the prior box
\State (6) $w^\star$ as in \eqref{eq:amis-weight}
\State \Return $\theta^\star$ for simulation; $(\theta^\star,\rho^\star,\epsilon_n,w^\star)$ joins every worker's history when $\rho^\star$ arrives
\end{algorithmic}
\end{algorithm}
\section{Theoretical Analysis}\label{sec:theory}
Because the sampler changes its proposal throughout the run, importance weights must account for the different distributions from which parameters were drawn. We approximate their mixture using a bounded number of reconstructed proposals, reducing the cost of posterior estimation at the risk of introducing approximation error. We analyze this error for the estimator in \eqref{eq:posterior-estimator}, whose denominator $\bar q_n$ is a fixed-size mixture of reconstructed proposals. Write $\tilde q_i$ for the conditional density of the $i$-th candidate given the past (the filtration of Assumption~\ref{ass:filtration}). This conditional density captures the sampler's adaptive proposal mechanism. The theorem requires no further restriction on how these proposals evolve. The mixture the pooled sample was drawn from is
\begin{equation}
\bar q^\star_n:=\tfrac1n\textstyle\sum_{i\le n}\tilde q_i,
\label{eq:draw-mixture}
\end{equation}
and the \emph{fidelity ratio} of the run is $r_n:=\bar q^\star_n/\bar q_n$. This ratio compares the actual mixture of conditional sampling densities with the denominator used for reweighting. Where $\bar q_n$ is too small relative to $\bar q^\star_n$, the reweighted sample gives that region too much mass; where it is too large, the region receives too little. Theorem~\ref{thm:consistency} shows that the estimator tracks the smooth-ABC posterior at the realized bandwidth modified by $r_n$, with a vanishing tracking error under Assumptions~\ref{ass:kernel}--\ref{ass:filtration} and on $\{\epsilon_\infty>0\}$. When $r_n$ has a limit $r$, the estimate converges to the posterior tilted by $r$, so that $r\equiv1$ gives the posterior itself. The proof uses martingale arguments to handle proposals that depend on previously observed evaluations. Existing AMIS consistency results for recycling schemes \citep{marin2019consistency} require restrictions on adaptation that decouple the weighting stage. These restrictions do not hold for single-arrival updates, motivating a separate proof. Appendix~\ref{app:theory-more} gives the decomposition and partial measurement of $r$; the proofs are in Appendix~\ref{app:theory}. The bandwidth sequence is nonincreasing, so its limit $\epsilon_\infty:=\lim_n\epsilon_n$ exists, although it may equal zero. Our asymptotic statements apply on the event $\{\epsilon_\infty>0\}$. We return to the practical implications of this condition at the end of the section. $L_\epsilon(\theta)=\mathbb{E}[K_\epsilon(\rho)\mid\theta]$ is the smooth ABC likelihood of \eqref{eq:abc-target}.
\paragraph{Assumptions.}
\begin{assumption}[kernel]\label{ass:kernel} $K_\epsilon:[0,\infty)\to[0,1]$ is measurable with $K_\epsilon(0)=1$ (unit \emph{peak}, not unit integral, as is standard for a probabilistic-acceptance ABC kernel), and for every $a\in(0,\epsilon_0]$ there is $L_K(a)<\infty$ with $|K_\epsilon(\rho)-K_{\epsilon'}(\rho)|\le L_K(a)|\epsilon-\epsilon'|$ for all $\rho\ge0$ and $\epsilon,\epsilon'\in[a,\epsilon_0]$. The limiting target is non-degenerate: $Z_{\epsilon_\infty}=\int_\Theta\pi L_{\epsilon_\infty}>0$.\end{assumption}
\begin{assumption}[denominator class]\label{ass:class} $\Theta\subset\mathbb{R}^d$ is a non-degenerate axis-aligned box and the prior density satisfies $0<\pi_-\le\pi\le\pi_+<\infty$ on it. Each $q_{\tau_s}$ in \eqref{eq:snapshot-denominator} is a positive function on $\Theta$ of truncated-Gaussian-mixture form: at most $k$ components, means in $\Theta$, weights in the simplex, covariance $\Sigma$ with $\sigma_-^2I\preceq\Sigma\preceq\sigma_+^2I$ for fixed $0<\sigma_-\le\sigma_+<\infty$, and per-component normalizing constants evaluated as the implementation evaluates them.\end{assumption}
\begin{assumption}[prior floor in the denominator]\label{ass:defensive} $\delta\in(0,1/2]$ is fixed and $\lambda_n\ge\delta$ in \eqref{eq:snapshot-denominator}.\end{assumption}
\begin{assumption}[execution model]\label{ass:filtration} Candidates are indexed by proposal time. \emph{(B1)} Given its parameter $\theta$, a simulator call returns a discrepancy and a wall-clock duration, $(\rho,D)\sim P(\cdot\mid\theta)$, and the pairs of distinct calls are conditionally independent given their parameters. \emph{(B2)} The sampler's own randomness (parent choice, perturbation, rejection retries, redraws, bootstrap draws) is drawn fresh at each call from a stream independent of the simulator's. \emph{(B3)} Earlier proposal times, durations and worker identities, together with an exogenous source of randomness (message delivery), determine which worker proposes next, when it proposes, and which earlier results it has received by then. The exogenous source is independent of the streams in (B1) and (B2). Lemma~\ref{lem:adapted} derives from this model the filtration $(\mathcal{F}_i)$ used in the proofs, the record, in proposal order, of every candidate launched so far with its result, duration and worker. Under that filtration, $\theta_i\mid\mathcal{F}_{i-1}\sim\tilde q_i$, with $\tilde q_i$ the proposal density the worker emits, and $\rho_i\mid\mathcal{F}_{i-1}\vee\sigma(\theta_i)\sim p(\cdot\mid\theta_i)$. Under sequential execution (B3) is empty.\end{assumption}
\begin{assumption}[fidelity]\label{ass:stab} The denominator tracks the draw mixture up to a limiting ratio: for some $r$ with $0<r_-\le r\le r_+<\infty$,
\begin{equation}
\bigl\|\bar q^\star_n/\bar q_n-r\bigr\|_\infty\longrightarrow0\qquad\text{a.s.}
\label{eq:fidelity}
\end{equation}
\end{assumption}

Assumption~\ref{ass:kernel} holds for the Gaussian kernel with $L_K(a)=2/(ea)$ and Epanechnikov with $2/a$; it excludes the hard indicator, which is not Lipschitz in $\epsilon$. This is one reason the method uses smooth kernels. Assumptions~\ref{ass:class} and \ref{ass:defensive} constrain only the $m$-snapshot denominator $\bar q_n$, which the implementation builds to that exact recipe. Assumption~\ref{ass:filtration} is a property of the execution, not of the proposal path; Table~\ref{tab:assumption-status} records it for the implementation. Assumption~\ref{ass:stab} is the only place the sampler's proposal path enters; Theorem~\ref{thm:consistency} does not need it.

\begin{theorem}[tracking at the realized bandwidth]\label{thm:consistency}
Let Assumptions~\ref{ass:kernel}--\ref{ass:filtration} hold and write $r_n:=\bar q^\star_n/\bar q_n$. Then for every bounded measurable $h:\Theta\to\mathbb{R}$, almost surely on $\{\epsilon_\infty>0\}$,
\begin{equation}
\frac1n\sum_{i\le n}W_{i,n}\,h(\theta_i)\;-\;\int_\Theta \pi\,h\,L_{\epsilon_n}\,r_n\;\longrightarrow\;0 .
\label{eq:tracking}
\end{equation}
Consequently, if in addition $\liminf_n\tfrac1n\sum_{i\le n}W_{i,n}>0$, then for every bounded measurable $f$,
\begin{equation}
\widehat\pi_n(f)\;-\;\pi^{r_n}_{\epsilon_n}(f)\;\longrightarrow\;0,
\qquad
\pi^{\varrho}_{\epsilon}(f):=\frac{\int_\Theta f\,\pi\,L_{\epsilon}\,\varrho}{\int_\Theta \pi\,L_{\epsilon}\,\varrho}.
\label{eq:tracking-ratio}
\end{equation}
\end{theorem}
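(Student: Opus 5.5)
The plan is to split the centered sum into a martingale part and a "compensator" part. The compensator reproduces $\int\pi hL_{\epsilon_n}r_n$ up to terms that vanish.

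The difficulty is that $W_{i,n}$ depends on the final index $n$ in two ways: through $\bar q_n$ (reconstructed from the whole history) and through $\epsilon_n$. So $W_{i,n}$ is not $\mathcal F_i$-measurable. I would handle this by a uniform-in-function argument. Define the function class
\[
\mathcal G=\Bigl\{\theta,\rho\mapsto \pi(\theta)h(\theta)K_\epsilon(\rho)/g(\theta):\ \epsilon\in[a,\epsilon_0],\ g=\lambda\pi+(1-\lambda)\textstyle\sum_s\alpha_sq_s\Bigr\},
\]
where $a=\epsilon_\infty/2>0$ (on the event $\{\epsilon_\infty>0\}$, working on $\{\epsilon_\infty>a\}$ for rational $a$), $\lambda\ge\delta$, and the $q_s$ range over the class of Assumption~\ref{ass:class}. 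Assumption~\ref{ass:defensive} gives $g\ge\delta\pi_-$, so every element of $\mathcal G$ is bounded by $\|h\|_\infty\pi_+/(\delta\pi_-)$.

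For fixed $\phi\in\mathcal G$, set
\[
M_n(\phi)=\sum_{i\le n}\bigl(\phi(\theta_i,\rho_i)-\mathbb E[\phi(\theta_i,\rho_i)\mid\mathcal F_{i-1}]\bigr).
\]
This is a martingale with bounded increments under the filtration of Lemma~\ref{lem:adapted} and Assumption~\ref{ass:filtration}. By Azuma--Hoeffding plus Borel--Cantelli, $M_n(\phi)/n\to0$ almost surely, and indeed $P(|M_n|>nt)\le 2e^{-nt^2/(2B^2)}$. Each conditional expectation equals
\[
\int\tilde q_i\,\pi hL_\epsilon/g,
\]
using $\rho_i\mid\mathcal F_{i-1}\vee\sigma(\theta_i)\sim p(\cdot\mid\theta_i)$. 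Averaging over $i$ gives
\[
\int\bar q^\star_n\,\pi hL_\epsilon/g ,
\]
and with $g=\bar q_n$, $\epsilon=\epsilon_n$ this is exactly $\int\pi hL_{\epsilon_n}r_n$.

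Everything therefore reduces to a uniform law of large numbers: $\sup_{\phi\in\mathcal G}|M_n(\phi)|/n\to0$ almost surely. I expect this to be the main obstacle. My approach is a finite $\eta$-net in sup norm for $\mathcal G$ whose log-cardinality is polynomial in $\log(1/\eta)$, combined with a union bound over the net in the exponential inequality. Since $\mathcal G$ is finite dimensional, choosing $\eta_n=n^{-1}$ keeps the union bound summable in $n$.

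The net comes from Lipschitz parametrisation over compact parameter sets:
\begin{enumerate}
\item \emph{Bandwidth.} The map $\epsilon\mapsto K_\epsilon$ is $L_K(a)$-Lipschitz uniformly in $\rho$ (Assumption~\ref{ass:kernel}).
\item \emph{Mixture components.} Truncated Gaussian components with means in the compact box $\Theta$ and covariances in $\{\sigma_-^2I\preceq\Sigma\preceq\sigma_+^2I\}$ are Lipschitz in (mean, covariance) in sup norm. Their truncation normalisers are bounded below because the box is non-degenerate and $\sigma_+<\infty$.
\item \emph{Weights.} The mixture weights $\lambda,\alpha,\tilde W$ lie in simplices.
\item \emph{Inversion.} The map $g\mapsto1/g$ is Lipschitz on $\{g\ge\delta\pi_-\}$.
\end{enumerate}
One point needs care: $m\le S+1$ and $k$ are fixed, so the dimension of the parameter space is finite. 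Each element of the net is also a deterministic function, so the martingale bound applies to it. The $\eta$-approximation error is at most $2\eta$ for both the empirical average and the compensator, which gives \eqref{eq:tracking}. Because $\epsilon_n\ge\epsilon_\infty>a$ eventually, the realized pair $(\epsilon_n,\bar q_n)$ is plugged into the uniform statement.

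For \eqref{eq:tracking-ratio}, apply \eqref{eq:tracking} with $h=f$ and with $h\equiv1$; the denominator statement is the sum $\frac1n\sum_iW_{i,n}$ itself. Then write $\widehat\pi_n(f)$ as the ratio of these two averages. The assumed $\liminf\frac1n\sum_iW_{i,n}>0$ together with the tracking of the denominator shows that $\int\pi L_{\epsilon_n}r_n$ is eventually bounded below. Both numerators are bounded because $r_n\le\bar q^\star_n/(\delta\pi_-)$ and $\int\bar q^\star_n=1$. A difference of ratios with bounded numerators and denominators bounded away from zero then converges to zero.
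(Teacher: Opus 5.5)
Your proposal is correct and follows essentially the same route as the paper. Both arguments split the weighted average into a bounded-increment martingale and its compensator, apply Azuma--Hoeffding with a union bound over a net of the compact, finite-dimensional Lipschitz-parametrized class $[a,\epsilon_0]\times P_q$, localize on the deterministic events $\{\epsilon_\infty>a\}$ for countably many $a$ (your rational-$a$ version is the legitimate one, not the random $a=\epsilon_\infty/2$), substitute the realized pair $(\epsilon_n,\bar q_n)$, and finish with the same ratio argument. The one minor variation is that you shrink the net with $n$ ($\eta_n=1/n$), whereas the paper fixes the net and intersects over countably many $(t,\eta)$; either choice gives the uniform strong law.
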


Theorem~\ref{thm:consistency} distinguishes two sources of error: random variation in the weighted estimate and the approximation of its denominator. The first vanishes under the stated assumptions and the remaining difference is described by $r_n$. The normalized conclusion also requires that the average unnormalized weight does not collapse, in addition to a limiting bandwidth above zero. These conditions can be checked against a recorded run. Passing to a limit needs the ratio to settle:

\begin{corollary}[tilted limit]\label{cor:tilt}
Let Assumptions~\ref{ass:kernel}--\ref{ass:stab} hold. Then for every bounded measurable $f$, almost surely on $\{\epsilon_\infty>0\}$,
\begin{equation}
\widehat\pi_n(f)\;\longrightarrow\;\pi^{r}_{\epsilon_\infty}(f)=\frac{\int_\Theta f\,\pi\,L_{\epsilon_\infty}\,r}{\int_\Theta \pi\,L_{\epsilon_\infty}\,r}.
\label{eq:tilted-target}
\end{equation}
\end{corollary}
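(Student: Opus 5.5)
The plan is to obtain the corollary from Theorem~\ref{thm:consistency} by letting $r_n\to r$ and $\epsilon_n\to\epsilon_\infty$ inside the tilted functional $\pi^{\varrho}_{\epsilon}$, and then checking the extra hypothesis $\liminf_n\frac1n\sum_iW_{i,n}>0$ that the normalized part of the theorem needs. Fix a bounded measurable $f$ and work on the event $\{\epsilon_\infty>0\}$ intersected with the almost-sure event of Assumption~\ref{ass:stab} and the almost-sure events of \eqref{eq:tracking} for $h\equiv1$ and $h=f$. All statements below are pathwise on this intersection, which has full probability relative to $\{\epsilon_\infty>0\}$.

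\emph{Step 1: convergence of the deterministic integrals.} For $h\in\{1,f\}$ I will show
\[
\int_\Theta\pi\,h\,L_{\epsilon_n}\,r_n\to\int_\Theta\pi\,h\,L_{\epsilon_\infty}\,r .
\]
I split the difference as
\[
\int\pi h L_{\epsilon_n}(r_n-r)+\int\pi h(L_{\epsilon_n}-L_{\epsilon_\infty})r .
\]
The first term is at most $\|h\|_\infty\|r_n-r\|_\infty\int\pi$, since $0\le L_\epsilon\le1$ by Assumption~\ref{ass:kernel}, and this tends to $0$ by \eqref{eq:fidelity}.

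For the second term, take $a=\epsilon_\infty>0$. Eventually $\epsilon_n\in[a,\epsilon_0]$, because the sequence is nonincreasing and starts at or below $\epsilon_0$. The Lipschitz bound on the kernel passes through the conditional expectation, giving $|L_{\epsilon_n}(\theta)-L_{\epsilon_\infty}(\theta)|\le L_K(a)(\epsilon_n-\epsilon_\infty)$ uniformly in $\theta$. Since $r\le r_+$, the second term is therefore bounded by $\|h\|_\infty r_+L_K(a)(\epsilon_n-\epsilon_\infty)\to0$.

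\emph{Step 2: the denominator stays positive.} The limit of the normalizing integral is $\int\pi L_{\epsilon_\infty}r\ge r_-Z_{\epsilon_\infty}>0$, using $r\ge r_-$ and Assumption~\ref{ass:kernel}. Applying \eqref{eq:tracking} with $h\equiv1$ and combining with Step~1 gives $\frac1n\sum_iW_{i,n}\to\int\pi L_{\epsilon_\infty}r>0$. This supplies the $\liminf$ hypothesis of Theorem~\ref{thm:consistency}.

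\emph{Step 3: the ratio.} The estimate is
\[
\widehat\pi_n(f)=\frac{\frac1n\sum_iW_{i,n}f(\theta_i)}{\frac1n\sum_iW_{i,n}} .
\]
By \eqref{eq:tracking} and Step~1, the numerator converges to $\int f\pi L_{\epsilon_\infty}r$. By Step~2, the denominator converges to a strictly positive limit. Continuity of division away from zero then yields \eqref{eq:tilted-target}. Equivalently, one can invoke \eqref{eq:tracking-ratio} and show $\pi^{r_n}_{\epsilon_n}(f)\to\pi^r_{\epsilon_\infty}(f)$ via Step~1.

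\emph{Main obstacle.} The only delicate point is Step~1's uniform control of $L_{\epsilon_n}-L_{\epsilon_\infty}$. It requires passing the kernel's Lipschitz bound through $\mathbb{E}[\,\cdot\mid\theta]$, which is immediate because the bound holds uniformly in $\rho$. It also requires $\epsilon_n$ to stay in a compact interval bounded away from $0$, which is where the restriction to $\{\epsilon_\infty>0\}$ is used. A second point to check is that Assumption~\ref{ass:stab} gives a single null set that works for all $n$, so the intersection of countably many almost-sure events is harmless. For a fixed $f$ only two applications of \eqref{eq:tracking} are needed, so this causes no difficulty.
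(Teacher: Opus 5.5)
Your proposal is correct and follows essentially the paper's proof: apply \eqref{eq:tracking} with $h=f$ and $h\equiv1$, pass to the limit in the integral via the Lipschitz bound on $L_\epsilon$ (through the conditional expectation) and \eqref{eq:fidelity}, then divide using $\int\pi L_{\epsilon_\infty}r\ge r_-Z_{\epsilon_\infty}>0$. The only differences are minor and harmless. You split so as to use $r\le r_+$ rather than the paper's eventual bound $R=r_++1$ on $r_n$. You also take $a=\epsilon_\infty$ pathwise instead of localizing on $\{1/j\le\epsilon_\infty\}$; this is legitimate because this step is a deterministic, pathwise estimate.
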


\begin{corollary}[exact fidelity]\label{cor:exact}
If moreover $r\equiv1$, that is, if the denominator is asymptotically the density the sample was drawn from, then $\widehat\pi_n(f)\to\pi_{\epsilon_\infty}(f)$, the smooth-ABC posterior at the limiting bandwidth.
\end{corollary}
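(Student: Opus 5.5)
The plan is to obtain Corollary~\ref{cor:exact} as the special case $r\equiv1$ of Corollary~\ref{cor:tilt}. No new probabilistic argument is needed.

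\emph{Checking the hypotheses.} The constant function $r\equiv1$ satisfies the bounds in Assumption~\ref{ass:stab} with $r_-=r_+=1$. So ``moreover $r\equiv1$'' means that Assumption~\ref{ass:stab} holds with this $r$, i.e.\ $\|\bar q^\star_n/\bar q_n-1\|_\infty\to0$ almost surely. Together with Assumptions~\ref{ass:kernel}--\ref{ass:filtration}, which are in force, all hypotheses of Corollary~\ref{cor:tilt} hold. That corollary then gives, for every bounded measurable $f$ and almost surely on $\{\epsilon_\infty>0\}$,
\[
\widehat\pi_n(f)\to\pi^{1}_{\epsilon_\infty}(f)=\frac{\int_\Theta f\,\pi\,L_{\epsilon_\infty}}{\int_\Theta \pi\,L_{\epsilon_\infty}}.
\]
Comparing with \eqref{eq:abc-target}, this is the expectation of $f$ under $\pi_{\epsilon_\infty}(\cdot\mid y)\propto\pi L_{\epsilon_\infty}$. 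The normalizing constant is $Z_{\epsilon_\infty}>0$ by Assumption~\ref{ass:kernel}, so the ratio is well defined. The convergence statement inherits the qualifier ``almost surely on $\{\epsilon_\infty>0\}$''.

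\emph{What Corollary~\ref{cor:tilt} supplies.} For completeness, this is how I expect Corollary~\ref{cor:tilt} to follow from Theorem~\ref{thm:consistency}, which is where the substance lies.

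First, on $\{\epsilon_\infty>0\}$ all $\epsilon_n$ lie in $[\epsilon_\infty,\epsilon_0]$. The Lipschitz bound of Assumption~\ref{ass:kernel} with $a=\epsilon_\infty$ then gives $\sup_\theta|L_{\epsilon_n}-L_{\epsilon_\infty}|\le L_K(\epsilon_\infty)(\epsilon_n-\epsilon_\infty)\to0$.

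Second, $r_n\to r$ uniformly by Assumption~\ref{ass:stab}. Since $\pi$ is bounded on the bounded box $\Theta$ and $h$ is bounded, $\int\pi h L_{\epsilon_n}r_n\to\int\pi hL_{\epsilon_\infty}r$.

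Third, taking $h\equiv1$ in \eqref{eq:tracking} gives $\frac1n\sum_i W_{i,n}\to\int\pi L_{\epsilon_\infty}r\ge r_-Z_{\epsilon_\infty}>0$. This verifies the non-collapse condition of Theorem~\ref{thm:consistency} automatically.

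Finally, the ratio of the two convergent averages (with $h=f$ and $h=1$) converges to $\pi^r_{\epsilon_\infty}(f)$.

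\emph{Main obstacle.} The only point requiring care is that the denominator limit is strictly positive, so that taking the ratio is legitimate. With $r\equiv1$ this limit is exactly $Z_{\epsilon_\infty}$, and its positivity is guaranteed by Assumption~\ref{ass:kernel}. The corollary is therefore immediate once Corollary~\ref{cor:tilt} is in hand.
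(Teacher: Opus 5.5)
Your proposal is correct and matches the paper, which proves Corollary~\ref{cor:exact} simply as the case $r\equiv1$ of Corollary~\ref{cor:tilt}, with Assumption~\ref{ass:stab} holding for $r_-=r_+=1$. Your sketch of Corollary~\ref{cor:tilt} also follows the paper's own argument: the Lipschitz-in-$\epsilon$ bound on $L_{\epsilon_n}-L_{\epsilon_\infty}$, the uniform convergence of $r_n$, and division justified by $\int\pi L_{\epsilon_\infty}r\ge r_-Z_{\epsilon_\infty}>0$.
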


\paragraph{What $r$ is, and how much of it is measured.} Four implementation choices contribute to $r$: the prior floor $\delta$ in the denominator, the $m$ reconstructed snapshots standing in for the $n$ proposals actually used, the approximate normalization of the truncated mixture components, and the sampler's departures from drawing exactly from $q_n$ (Appendix~\ref{sec:theory-r}). The first two can be estimated from a stored run by recomputing the denominator at a much larger $m$ and at the observed prior share: the estimate lies $0.2$--$0.5\%$ in total variation from that reference on the five Cellular Potts production runs and $0.26\%$ on a one-dimensional Gaussian history. We bound the other two effects rather than measuring them. Because two contributions to $r$ are bounded rather than measured, the empirical assessment remains partial. The limiting results additionally require $r$ to converge to a finite, strictly positive function (Table~\ref{tab:assumption-status}). The experiments use $S=20$, so $m\le21$, and the floor is $\delta=0.5/(m{+}1)$ (Appendix~\ref{app:implementation}).

\paragraph{Which runs the limits describe.} Our convergence results require the bandwidth used for posterior estimation to remain positive as the number of evaluations increases. A minimum bandwidth ensures this condition, though it does not itself guarantee that the reconstructed proposal mixture is accurate. The condition is $\{\epsilon_\infty>0\}$ for the reporting bandwidth $\epsilon_n$ in \eqref{eq:posterior-estimator}. The bandwidths used to construct proposals enter the proofs only through $\tilde q_i$, on which Theorem~\ref{thm:consistency} places no condition. We recommend setting the minimum to $10^{-4}\epsilon_0$ (\texttt{min\_tol}, Appendix~\ref{app:implementation}). The floor acts only by clipping a bandwidth that would fall below it, so a run whose schedule stays above the floor follows exactly the path it would have followed with the floor set. That holds for every g-and-k and Cellular Potts run whose posterior we report, including a replayed sample of the g-and-k calibration trials: their final bandwidths lie more than ten times above $10^{-4}\epsilon_0$ (Table~\ref{tab:assumption-status}), so the limits describe these runs as they were executed.
The Gaussian-mean and Lotka--Volterra benchmarks, inexpensive simulators that run for millions of evaluations, tighten to below $10^{-5}\epsilon_0$ and are not covered at that floor. These conditions do not imply that a finite simulation budget has reached the asymptotic regime. No reported run reached its floor, so each is still an estimate whose effective support is of the order of the archive size $k$: the scheduler holds the bandwidth until $2k$ evaluated particles lie inside it (\S\ref{sec:method-tolerance}), only $O(k)$ particles carry appreciable kernel mass, and the effective-sample-size ceiling of \S\ref{sec:results-posterior} follows from this. Once the floor is reached, the average weight converges to a positive constant and the effective sample size grows linearly in $n$; the limits describe that continuation, and the measurements above and the calibration checks of \S\ref{sec:results-posterior} quantify the error at the budgets used.

\section{Experimental Design}\label{sec:experiments}

\subsection{Benchmarks}\label{sec:experiments-suite}
We use separate benchmarks to assess posterior accuracy, synchronization overhead, and performance on a cell-based simulator. Table~\ref{tab:benchmarks} lists the five configurations, with $d$ the number of inferred parameters and CV the coefficient of variation of the runtime. The one-dimensional Gaussian mean provides an analytic posterior, while the four-dimensional g-and-k model \citep{rayner2002gk} has a reference posterior. Their short simulation times also make them useful for measuring per-arrival coordination overhead. The two studies with injected runtime heterogeneity run on the Gaussian mean: a persistently slow worker and a lognormal multiplier of spread $\sigma$ applied to every evaluation's runtime. Lotka--Volterra is a stochastic predator--prey simulator whose cost varies with the trajectory and whose parameters send $98\%$ of simulations to extinction (Appendix~\ref{app:protocol}). It is used for strong-scaling measurements, which increase the worker count at a fixed problem size, and to locate the cost boundary: the per-simulation cost above which barrier removal saves more time than per-arrival coordination consumes. It is not scored for posterior quality.

In the Cellular Potts benchmark, runtime variation comes from the model itself rather than from an imposed slowdown. How fast cells divide and how large they grow determine how many cells the lattice holds, and the cell count fixes both the summaries and the runtime (correlation $0.99$ between runtime and cell count). We run the \texttt{cellsInSilico} implementation \citep{berghoff2020cis} of the three-dimensional Cellular Potts model \citep{graner1992cpm} at two lattice sizes. We infer cell division rate and target cell volume using the final cell count and cluster radius as summary statistics. Each evaluation averages four simulations, since individual realizations are too noisy to reliably rank nearby parameter values. The simulations take seconds to minutes rather than the hours of a production study, and the observations are synthetic with a known truth, so the benchmark scores both the systems claims and posterior recovery. The $50^3$ runs of the synchronized variant (next subsection) and of the scaling study used an earlier parameterization of the same simulator (division rate and cell motility; one simulation per evaluation); only its runtime distribution enters those results. Appendix~\ref{app:cpm} gives the screening experiment that fixed the parameters and summaries, the prior ranges, and both configurations.
\begin{table}[t]
\caption{Benchmark suite. Cost is the mean wall-clock time of one simulation in the asynchronous runs (an evaluation on Cellular Potts is four replicate simulations); runtime spread is the in-run coefficient of variation of that time, or how it is injected. Budgets are per replicate; five replicates throughout except three at $80^3$.}\label{tab:benchmarks}
\centering
\footnotesize
\setlength{\tabcolsep}{3pt}
\begin{tabular}{@{}>{\raggedright\arraybackslash}p{0.13\textwidth}c>{\raggedright\arraybackslash}p{0.085\textwidth}>{\raggedright\arraybackslash}p{0.20\textwidth}>{\raggedright\arraybackslash}p{0.15\textwidth}>{\raggedright\arraybackslash}p{0.255\textwidth}@{}}
\toprule
Benchmark & $d$ & cost/sim & runtime spread & budget, workers & role \\
\midrule
Gaussian mean & 1 & $0.1$\,ms & injected: straggler $0$--$20\times$; lognormal $\sigma\le2$ & $300$\,s, $48$ ($16$ in the straggler study) & calibration; barrier cost under a known runtime distribution; the near-instantaneous end of the cost boundary \\
g-and-k & 4 & $2$\,ms & none & $600$\,s, $48$ & 4-D calibration and recovery against a reference posterior \\
Lotka--Volterra & 4 & $4$\,ms & trajectory-dependent; $98\%$ extinct & $600$\,s, $48$; scaling to $288$ & strong scaling; the cost boundary \\
Cellular Potts $50^3$ & 2 & $13$\,s & CV $0.05$ & $3600$\,s, $48$; scaling to $384$ & posterior recovery against a known truth; barrier cost on a real simulator \\
Cellular Potts $80^3$ & 2 & $170$\,s & CV $0.13$--$0.23$ & $3600$\,s, $48$ & barrier cost at realistic cost and heterogeneity \\
\bottomrule
\end{tabular}
\end{table}

\subsection{Baselines and instruments}\label{sec:experiments-baselines}\label{sec:implementation}
We use three comparators to answer different questions. The \emph{synchronized variant} isolates the cost of adding a barrier to our sampler, \emph{pyABC} provides a practical baseline for overall performance, and \emph{rejection ABC} tests the benefit of an adaptive proposal over prior sampling. For the practical baseline, we modify pyABC \citep{klinger2018pyabc,schalte2022pyabc} to use the same smooth-kernel acceptance rule and bandwidth schedule as the asynchronous sampler (Appendix~\ref{app:implementation}); it is the comparator for C2 and C3. With a population of $100$ and $47$ simulating workers (one process only dispatches), pyABC requires at least three rounds per generation. The last round is $13\%$ full, so pyABC idles even on a homogeneous simulator. The scaling study gives it a population equal to the worker count to alleviate this issue. Our synchronized variant uses the same propagator (Propulate's operator that maps the current history to the next candidate) but imposes a collective barrier before each proposal. It therefore differs in synchronization, not in its proposal-update mechanism. Histories are not truncated to generation boundaries, so each worker proposes from the history it has received at the barrier. Comparing against this synchronized variant lets us attribute differences in worker idle time to synchronization rather than to a different sampling algorithm, and it is the comparator for C1. Because a collective barrier needs identical call counts on every worker, the synchronized variant runs to a fixed simulation count and the asynchronous sampler to a fixed wall clock. We therefore compare them as rates. For the posterior comparison the asynchronous sampler is also re-run at exactly the synchronized variant's simulation count and seeds. \emph{Rejection ABC} draws candidates from the prior and retains the $k$ lowest-discrepancy samples. It receives the same simulation budget and uses all $48$ workers, rather than one worker per replicate. It is the reference for the per-simulation gain of an adaptive proposal. All three run on the JUWELS cluster over the Message Passing Interface (MPI; Appendix~\ref{app:protocol}) and are samplers of the same estimator class.

\subsection{Metrics}\label{sec:experiments-metrics}
\paragraph{Tolerance at a matched budget.} To compare samplers at the same simulation budget, we use the $k$-th smallest discrepancy among a method's first $n$ completed evaluations, denoted $\epsilon_{(k)}(n)$. It is the threshold at which the best $k$ draws would be accepted. It needs no reference posterior, it does not depend on the bandwidth a schedule happened to report at, and it is what rejection ABC reports at by construction. The \emph{per-simulation efficiency} of the asynchronous sampler over pyABC is the ratio of their $\epsilon_{(k)}$ at the same $n$, above one when the asynchronous tolerance is the tighter; the \emph{equal-wall-clock} ratio compares their tolerances at the end of the same wall-clock budget and therefore also reflects differences in throughput. The slope of $\epsilon_{(k)}(n)$ in $\log n$ shows whether more simulations keep converting into a tighter tolerance. We use $k=100$ throughout, over completed simulations only, ordered by completion time (Appendix~\ref{app:protocol}).

\paragraph{Throughput, utilization, and the predicted barrier cost.} Throughput measures completed simulations per second of wall clock, whereas utilization measures the fraction of worker time spent inside the simulator. We use durations recorded during the asynchronous run to predict the throughput lost to synchronization and test this prediction against the synchronized variant (C1). Let $D_i$ be those per-evaluation durations (recorded simulation time plus the measured per-evaluation coordination overhead). A barrierized generation of $W$ evaluations takes $\mathbb{E}[\max_{i\le W} D_i]$, so the synchronous throughput is predicted as $T_{\mathrm{sync}}=W/\mathbb{E}[\max_{i\le W}D_i]$. The expectation is taken by resampling $W$ durations from the asynchronous record, or from the injected runtime distribution where the heterogeneity is injected. The predicted cost is $T_{\mathrm{async}}/T_{\mathrm{sync}}$ with $T_{\mathrm{async}}$ the measured asynchronous throughput, and it is compared with the ratio measured against the synchronized variant. For identically distributed, uncensored durations, $T_{\mathrm{async}}=W/\mathbb{E}[D]$, and the predicted synchronization penalty reduces to $\mathbb{E}[\max_{i\le W}D_i]/\mathbb{E}[D]$. With a persistently slow worker, or when a wall-clock budget censors the long durations, the two differ, and we use the general form. The prediction has a domain: the simulation time must dominate the latency of the collective barrier itself, which the model does not contain, and the timing sample must be long enough to cover the tail of the runtime distribution. Configurations outside this domain are marked as such in \S\ref{sec:results-barrier}.

\paragraph{Posterior quality.} We assess calibration using simulation-based calibration \citep{talts2018sbc}. In each of $1000$ trials, we draw a parameter from the prior and simulate a dataset at that value. We then check the empirical coverage of central posterior intervals at nominal levels $0.50$, $0.80$, $0.90$, and $0.95$. We also compute the rank of the true parameter among posterior draws; for a calibrated posterior, these ranks are uniformly distributed. The effective sample size (ESS) of a weighted posterior is $(\sum_i W_i)^2/\sum_i W_i^2$. Where a reference posterior exists we report the Wasserstein distance $W_1$ between each method's own posterior and that reference (in more than one dimension the sliced $W_1$ over random projections \citep{bonneel2015sliced}, on the parameters' natural scale). The reference is the analytic posterior for the Gaussian mean; for g-and-k it is the posterior given the same octile summaries the discrepancy is built from, obtained from their asymptotic distribution (Appendix~\ref{app:protocol}). The asynchronous estimator is replayed over the prefix of the history completed by each checkpoint. pyABC is scored on its last completed generation, with pyABC's population weights where the history was kept and uniform weights otherwise, and rejection ABC on its $k$ best draws. On Cellular Potts, whose truth is known but whose posterior is not, we report each parameter's \emph{contraction} $1-\mathrm{sd}/\mathrm{sd}_{\mathrm{prior}}$ in prior-normalized units, whether the central $90\%$ interval covers the truth, and the ESS of the posterior weights.

\paragraph{Starting bandwidth.} On an expensive simulator the bandwidth schedule may not move far from $\epsilon_0$ within the budget (\S\ref{sec:method-tolerance}), so $\epsilon_0$ is set before the run from the discrepancies of a few thousand prior-predictive draws, at about a fifth of their median. \S\ref{sec:results-posterior} measures what a starting value far above that costs.

\section{Results}\label{sec:results}
Unless noted, results are medians over five replicates with the replicate range shown, \emph{asynchronous} is the proposed sampler, \emph{pyABC} the baseline with the same kernel and schedule, and \emph{synchronized variant} our sampler with a collective barrier inserted (\S\ref{sec:experiments-baselines}).

\subsection{Predicting the cost of synchronization barriers (C1)}\label{sec:results-barrier}

\paragraph{Predicted and measured synchronization costs.} Figure~\ref{fig:barrier} compares the runtime-based prediction with measured synchronization costs for every configuration on which we ran the synchronized variant: a persistently slow worker at five slowdowns, lognormal runtime heterogeneity at five spreads, and the Cellular Potts simulator at four worker counts and two lattice sizes. For each, the barrier's cost as predicted from the asynchronous run's timing alone (\S\ref{sec:experiments-metrics}) is plotted against the cost measured with the synchronized variant. Over the twelve configurations inside the predictor's domain the predicted-to-measured ratio runs from $0.84$ to $1.19$ with median $0.996$, across measured costs from $1.2\times$ to $400\times$ (Tables~\ref{tab:twin}, \ref{tab:twin-hetero} and \ref{tab:twin-cpm} in Appendix~\ref{app:moved}). The largest prediction errors are $16\%$ and $19\%$, both at the widest runtime spreads. In these configurations, the synchronized variant completes only ten generations per replicate, while the $60$\,s budget censors the longest asynchronous evaluations. Three configurations lie outside the domain and are drawn open. The first two are the $0\times$ and $1\times$ slowdowns of the straggler study on the cost-free simulator, where the collective's own latency, about $0.3$\,s per barrier, dominates a $4$\,ms workload. At $384$ Cellular Potts workers, contention among the workers gives the asynchronous timing sample a tail of long durations that the synchronized variant's generations did not show.

\begin{figure}[t]
\centering
\includegraphics[width=0.55\linewidth]{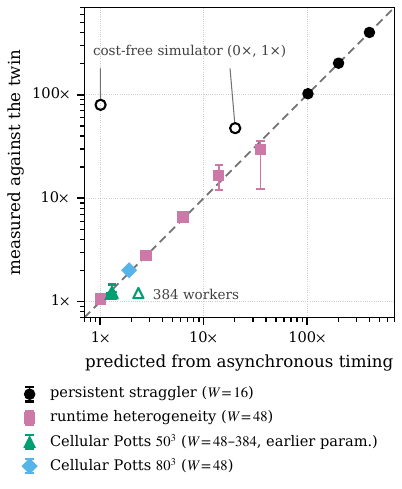}
\caption{The barrier's cost predicted from the asynchronous run's timing alone (\S\ref{sec:experiments-metrics}) against the cost measured with the synchronized variant, on every configuration the synchronized variant was run on: a persistent straggler at five slowdowns, lognormal runtime heterogeneity at five spreads, and Cellular Potts at $50^3$ ($48$--$384$ workers, earlier parameterization, utilization ratio) and $80^3$ ($48$ workers, against pyABC). Points show replicate medians and ranges. The dashed line indicates equal predicted and measured costs; open markers identify the three configurations outside the predictor's domain.}\label{fig:barrier}
\end{figure}

\paragraph{Separating synchronization overhead from other effects.} On Cellular Potts at $50^3$ the synchronized variant's throughput was $1.9$--$2.8\times$ below the asynchronous sampler's, but only part of that is the barrier. Since we cannot attribute the synchronized variant's longer simulation times to the barrier, we compare worker utilization to isolate the synchronization penalty. The resulting $1.2\times$ ratio agrees with the prediction at every worker count (Table~\ref{tab:twin-cpm}). The barrier's cost on this workload is therefore $1.2\times$, as its runtime spread predicts, and it rises with the lattice: at $80^3$, where an evaluation costs $170$\,s and the runtime coefficient of variation is $0.23$, the predicted cost is $1.9\times$ and the gap measured against pyABC $2.1\times$. The same comparison against pyABC at $50^3$ gives $2.3$--$2.4\times$, of which only $1.2\times$ is the barrier. The rest is pyABC's fixed per-generation overhead, a large fraction of a $13$\,s evaluation and negligible against a $170$\,s one. The asynchronous sampler itself scales almost linearly on this workload, at $97\%$ parallel efficiency (throughput relative to perfect linear scaling) from one to eight nodes (Fig.~\ref{fig:scaling}, right).

\paragraph{Against pyABC.} The same experiments score pyABC (Fig.~\ref{fig:barrier-pyabc}, Appendix~\ref{app:moved}). Under the persistent straggler its throughput falls more than a hundredfold while the asynchronous sampler's falls by less than half, because every generation waits for the slow worker. Under runtime heterogeneity its workers idle $51$--$79\%$ of their time against $1$--$27\%$ for ours, and its throughput is $2.1$--$4.8\times$ below ours. That is less than the synchronized variant loses, because pyABC closes a generation on the first $100$ finishers and discards the late results of simulations it has already run, so the comparison a practitioner faces understates the barrier's full cost. On Cellular Potts pyABC runs at $43$--$49\%$ utilization against $98$--$99.8\%$, and the asynchronous sampler completes $2.1$--$2.4\times$ its simulations in the same budget.

\begin{figure}[t]
\centering
\includegraphics[width=\linewidth]{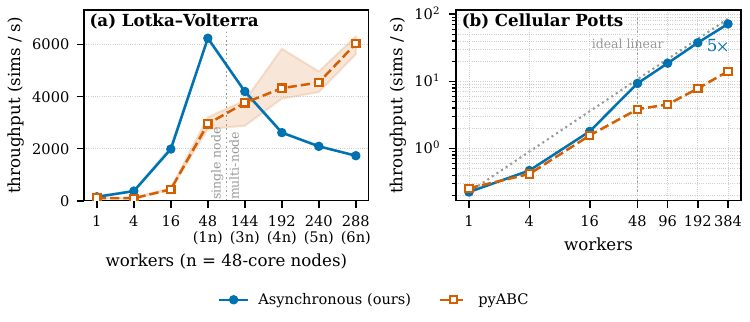}
\caption{Strong scaling at a fixed wall-clock budget against pyABC with population $\max(100,W)$. \textbf{Left:} Lotka--Volterra ($4$\,ms; $180$\,s; medians and inter-quartile ranges over five replicates). \textbf{Right:} Cellular Potts $50^3$ (earlier parameterization, $5.1$\,s; $1800$\,s; log--log), with linear scaling dotted.}\label{fig:scaling}
\end{figure}

\subsection{Throughput converts into a tighter tolerance (C2)}\label{sec:results-converts}

We first examine whether the additional simulations reduce the empirical tolerance. Table~\ref{tab:matched-eps} reports $\epsilon_{(100)}(n)$ for each benchmark (curves in Fig.~\ref{fig:eps-curves}, Appendix~\ref{app:moved}). At pyABC's own simulation count the asynchronous sampler is the more efficient per simulation on three of the four benchmarks; the one on which it is worse is the one-dimensional analytic target. These two comparisons address different questions: how well a method uses each simulation and how far it progresses within a fixed time. At equal wall clock, the asynchronous sampler reaches a tighter tolerance on Lotka--Volterra and a $4.1\times$ tighter tolerance on Cellular Potts. On the two benchmarks with near-instantaneous simulators, the asynchronous and pyABC tolerance curves have similar slopes, suggesting an approximately constant per-simulation efficiency ratio. On Cellular Potts, both adaptive methods reduce tolerance faster than the $n^{-1}$ trend of rejection ABC in this noise-dominated regime. Their advantage therefore increases with the simulation budget.
\begin{table}[t]
\caption{Tolerance at a matched budget on every benchmark ($k=100$ order statistic of each method's own discrepancies; five replicates; medians, with the replicate range of each ratio in parentheses; completed simulations only). Per-simulation: $\epsilon$ of pyABC over the asynchronous sampler at pyABC's own simulation count; equal wall clock: the same ratio at the end of both runs. Bold marks the larger simulation count.}\label{tab:matched-eps}
\centering
\footnotesize
\setlength{\tabcolsep}{4pt}
\begin{tabular}{@{}lrrccc@{}}
\toprule
Benchmark & cost/sim & sims (async / sync) & throughput & per-simulation & equal wall clock \\
\midrule
Gaussian mean   & $0.1$\,ms & $1.31$M / \textbf{{\boldmath$4.61$}M} & $0.28\times$ & $0.78\times$ ($0.77$--$1.12$) & $0.24\times$ ($0.21$--$0.28$) \\
g-and-k         & $2$\,ms   & $1.75$M / \textbf{{\boldmath$3.81$}M} & $0.46\times$ & $1.06\times$ ($1.02$--$1.15$) & $0.88\times$ ($0.85$--$0.92$) \\
Lotka--Volterra & $4$\,ms   & $1.51$M / \textbf{{\boldmath$1.56$}M} & $0.96\times$ & $1.30\times$ ($1.08$--$1.51$) & $1.30\times$ ($1.02$--$1.43$) \\
Cellular Potts  & $13$\,s   & {\boldmath$12{,}900$} / $5{,}400$ & $2.39\times$ & $1.6\times$ ($1.14$--$1.94$) & $4.09\times$ ($3.1$--$6.6$) \\
\bottomrule
\end{tabular}
\end{table}

\paragraph{Cellular Potts, fixed production run.} Table~\ref{tab:cpm-production} reports the production run, where both higher throughput and greater per-simulation efficiency contribute to the tighter tolerance at equal wall clock. In the same hour the asynchronous sampler completes $2.4\times$ the simulations of pyABC, of which \S\ref{sec:results-barrier} attributes $1.2\times$ to the barrier at this lattice size and most of the rest to pyABC's per-generation overhead. It is $1.6\times$ more efficient per simulation, read at pyABC's count of $5{,}400$ where pyABC's curve is the steeper (Fig.~\ref{fig:eps-curves}d), and ends $4.1\times$ tighter at equal wall clock. A rejection sampler given the same budget and the whole machine ends $39\times$ looser. At $80^3$ the asynchronous sampler completes $2.1\times$ the simulations and is $2.1\times$ more efficient per simulation.
\subsection{Where the net advantage becomes positive (C3)}\label{sec:results-boundary}

Each arrival incurs a roughly fixed coordination cost: rebuilding the $\mathcal{O}(k)$ proposal, weighting the arrival against the kept snapshots, and exchanging it with every other worker. This cost becomes less important as the simulation cost increases. Figure~\ref{fig:crossover} plots the throughput and per-simulation ratios of Table~\ref{tab:matched-eps} against the cost of one simulation. Both ratios increase with simulation cost. Across benchmarks, throughput improves relative to pyABC as the fixed per-arrival overhead becomes smaller compared with simulation time. Changes in per-simulation efficiency are harder to compare, because the inference problems differ in dimension, summaries, and noise. We therefore base the boundary on throughput: parity is reached at about $4$\,ms per simulation, and the net advantage at equal wall clock turns positive between $2$ and $4$\,ms. This is an empirical crossover for our worker placement and time budget, not a universal threshold. The benchmarks with inexpensive simulators ran $48$ workers spread over four nodes (Appendix~\ref{app:protocol}), so every arrival crosses the network. On one packed node, the same Lotka--Volterra configuration gives a throughput ratio of $2.1$ instead of $0.96$ (Table~\ref{tab:scaling}). Below the boundary, the asynchronous sampler's coordination costs exceed the time saved by removing the barrier: on the $0.1$\,ms Gaussian mean pyABC completes $3.5\times$ more simulations and ends with a $4\times$ tighter tolerance at equal wall clock.

When evaluations are low-cost, coordination also limits scaling to larger worker counts. On Lotka--Volterra the exchange of every arrival with every worker grows with the worker count, so a worker is inside the simulator $25\%$ of its time at $48$ workers and $2\%$ at six nodes; throughput peaks on one node, and pyABC with a population equal to the worker count overtakes beyond three nodes (Fig.~\ref{fig:scaling}, left). At $13$\,s per evaluation the same coordination is negligible, and the method tracks linear scaling to eight nodes (Fig.~\ref{fig:scaling}, right).
\begin{figure}[t]
\centering
\includegraphics[width=0.55\linewidth]{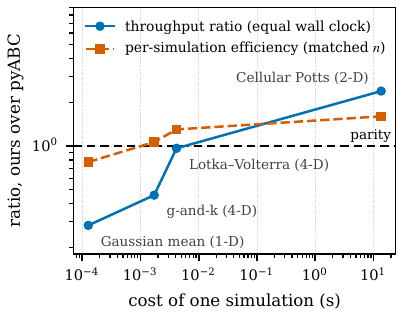}
\caption{Throughput ratio and per-simulation efficiency of the asynchronous sampler over pyABC against the cost of one simulation, one point per benchmark (Table~\ref{tab:matched-eps}).}\label{fig:crossover}
\end{figure}

\subsection{The posterior: calibration, recovery, and what limits it (C4)}\label{sec:results-posterior}

\paragraph{Calibration.} Table~\ref{tab:sbc} reports simulation-based calibration of the full-history estimator ($1000$ trials, matched simulation budget), showing near-nominal coverage for the one-dimensional Gaussian and bimodal targets but conservative coverage for two parameters of the four-dimensional g-and-k model. On the one-dimensional Gaussian mean it tracks the nominal coverage at every level, while pyABC under-covers at every level (its nominal $0.90$ interval covers $0.75$); the difference exceeds the Monte Carlo error of ${\pm}0.01$. On a bimodal target ($y_i\sim\mathcal N(\theta^2,\sigma^2)$, the case a top-$k$ archive is most at risk of collapsing onto one mode) the archive retains both modes in $86\%$ of trials and coverage is nominal. On the four-dimensional g-and-k model the estimator is \emph{conservative} on $A$ and $g$ and near nominal on $B$ and $k$, with $B$'s rank histogram the only uniform one at the $1\%$ level. pyABC is near nominal on $A$ and $g$ and under-covers $B$ and $k$, with no uniform rank histogram. The top-$k$ archive under-covers ($0.85/0.89$ at nominal levels $0.90/0.95$), whereas the full-history estimator over-covers. Coverage increases monotonically through nominal as the reporting support grows from the archive to the full history, pointing to reporting-support size rather than importance weighting as the main source of miscalibration. Doubling the simulation budget leaves the coverage unchanged. A sweep over the archive size and the number of kept snapshots on a $d$-dimensional Gaussian target (Appendix~\ref{app:moved}) finds a miscalibrated region at $d=8$, $k=50$. The same sweep of the archive sizes indicates $k=100$ is the best calibrated.

\begin{table}[t]
\caption{Simulation-based calibration of the posterior estimate ($1000$ trials; empirical coverage of central intervals at the nominal levels; Monte Carlo standard error ${\pm}0.007$--$0.016$). Nominal is best, and bold marks the method closer to it.}\label{tab:sbc}
\centering
\begin{tabular}{@{}llcccc@{}}
\toprule
Target & Method & 0.50 & 0.80 & 0.90 & 0.95 \\
\midrule
Gaussian mean (1-D)     & asynchronous & \textbf{0.51} & \textbf{0.80} & \textbf{0.90} & \textbf{0.94} \\
                        & pyABC & 0.35 & 0.63 & 0.75 & 0.83 \\
Bimodal $\mathcal N(\theta^2,\sigma^2)$ (1-D) & asynchronous & \textbf{0.50} & \textbf{0.82} & 0.91 & 0.96 \\
                        & pyABC & 0.45 & \textbf{0.82} & \textbf{0.90} & \textbf{0.95} \\
g-and-k (4-D), $A$      & asynchronous, full history & 0.62 & 0.91 & 0.96 & 0.99 \\
                        & pyABC & \textbf{0.54} & \textbf{0.82} & \textbf{0.91} & \textbf{0.95} \\
g-and-k (4-D), $B$      & asynchronous, full history & \textbf{0.54} & \textbf{0.83} & \textbf{0.91} & \textbf{0.96} \\
                        & pyABC & 0.45 & 0.74 & 0.85 & 0.90 \\
g-and-k (4-D), $g$      & asynchronous, full history & 0.58 & 0.88 & 0.95 & 0.98 \\
                        & pyABC & \textbf{0.51} & \textbf{0.82} & \textbf{0.91} & \textbf{0.95} \\
g-and-k (4-D), $k$      & asynchronous, full history & 0.56 & \textbf{0.83} & \textbf{0.92} & \textbf{0.96} \\
                        & pyABC & \textbf{0.47} & 0.74 & 0.86 & 0.92 \\
g-and-k (4-D), $A$      & asynchronous, top-$k$ archive & --- & --- & 0.85 & 0.89 \\
\bottomrule
\end{tabular}
\end{table}

\paragraph{Recovery against a reference posterior.} Figure~\ref{fig:reported-recovery} compares the estimated posteriors with available references at successive wall-clock checkpoints. On the Gaussian mean, the asynchronous estimator is closer to the analytic posterior than pyABC's final population at every checkpoint. A rejection sampler given the same simulation budget is as close. With this many draws, simple rejection sampling already performs well in one dimension. On the four-dimensional g-and-k, pyABC is closer to the reference posterior: it ends at $W_1=0.029$ and the asynchronous estimator at $0.069$, with rejection far behind at $0.47$. The effective sample sizes explain the difference: on the Gaussian run, the posterior effective sample size stays at $238$--$296$ at every checkpoint while the history grows from $10^5$ to $1.3\times10^6$ particles. A controlled sweep puts $\mathrm{ESS}/k$ between $1.9$ and $3.6$ for $k\in\{50,100,200\}$. The estimate is reported at the tightest bandwidth reached, which the scheduler sets by requiring that the archive's ESS retain a set fraction of its value (\S\ref{sec:method-tolerance}). Consequently, only a relatively small number of particles contribute substantial weight to the posterior estimate, even after millions of evaluations. The effective support remains on the order of a few multiples of $k$. In one dimension that support is small and stable; in four it is small and erratic ($7$ to $2987$ across checkpoints and replicates), and accuracy follows it. The low and variable effective sample size suggests intermittent weight degeneracy; these results do not establish a systematic bias. The limitation appears once the target is identified well enough for the bandwidth to tighten.

\begin{figure}[t]
\centering
\includegraphics[width=\linewidth]{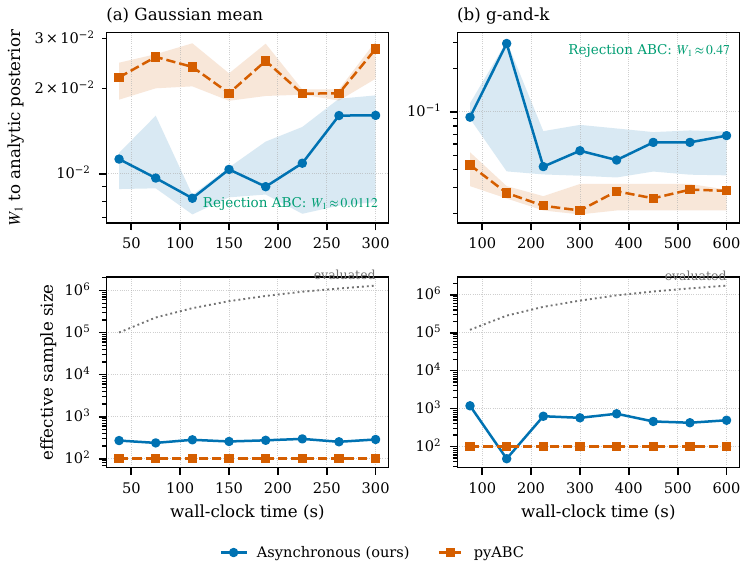}
\caption{Each method's own posterior (\S\ref{sec:experiments-metrics}) scored against the analytic posterior (a) and the g-and-k reference posterior (b) over wall-clock prefixes. The upper panels show median $W_1$ and interquartile ranges over five replicates. The lower panels show posterior effective sample size; dotted lines indicate the number of evaluated particles.}\label{fig:reported-recovery}
\end{figure}

\paragraph{A known truth on the real simulator.} On the two-parameter Cellular Potts setup (one synthetic dataset with a known truth), all three methods recover the truth (Fig.~\ref{fig:cpm-posterior}, Table~\ref{tab:cpm-production}). The division rate is well identified by every method ($91$--$92\%$ contraction against the prior). For cell volume, contraction is $81\%\pm1\%$ for the asynchronous estimator, $72\%\pm4\%$ for pyABC's last generation with its population weights, and $67\%$ for the best $100$ rejection draws. The truth lies inside the central $90\%$ interval of every run (five per adaptive method, one rejection run). The asynchronous estimator is the tightest and has the largest effective sample ($303$ against pyABC's $66$ of $100$, medians).

\begin{figure}[t]
\centering
\includegraphics[width=0.62\linewidth]{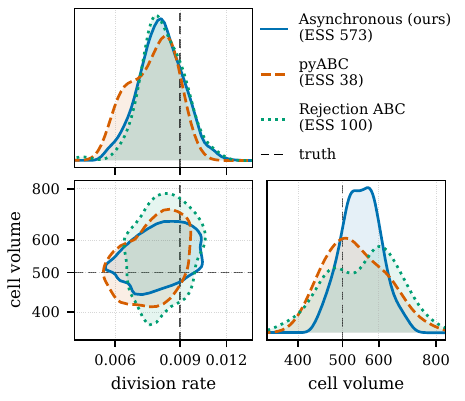}
\caption{Cellular Potts posterior (division rate, cell volume) on the fixed production run, replicate $0$: the asynchronous AMIS estimator over the full history, pyABC's last completed generation with its population weights, and the $100$ best rejection draws, each drawn as a weighted kernel density estimate of the method's own particles (effective sample sizes in the legend) (Table~\ref{tab:cpm-production}). Axes are the log-uniform prior coordinates ticked in physical units; the joint panel outlines each method's $90\%$ highest-density region, and dashed lines mark the truth.}\label{fig:cpm-posterior}
\end{figure}

\begin{table}[t]
\caption{Cellular Potts $50^3$, five replicates per adaptive method and one rejection run, $3600$\,s on $48$ workers (means ${\pm}$ standard deviation for contraction, medians otherwise). The first row is the asynchronous sampler at the benchmark's original starting bandwidth ($10$, twenty times the prior-predictive median discrepancy), the rest at the production value ($0.1$). Rejection ABC is one run of $13{,}000$ prior draws on all $48$ workers, keeping the best $100$. Contraction is $1-\mathrm{sd}/\mathrm{sd}_{\mathrm{prior}}$; the truth lies inside the central $90\%$ interval of every run. Bold marks the best value per column.}\label{tab:cpm-production}
\centering
\footnotesize
\setlength{\tabcolsep}{4pt}
\begin{tabular}{@{}lcccccc@{}}
\toprule
 & & & & \multicolumn{2}{c}{contraction} & \\
\cmidrule(lr){5-6}
 & sims & util. & $\epsilon_{(100)}$, end & division rate & cell volume & ESS \\
\midrule
Asynchronous, $\epsilon_0=10$   & \textbf{13{,}300} & \textbf{99.8\%} & {\boldmath$6.3\times10^{-5}$} & $84\pm6\%$ & $13\pm14\%$ & \textbf{645} \\
Asynchronous, $\epsilon_0=0.1$  & 12{,}900 & \textbf{99.8\%} & $7.6\times10^{-5}$ & {\boldmath$92\pm1\%$} & {\boldmath$81\pm1\%$} & 303 \\
pyABC, $\epsilon_0=0.1$ & 5{,}400 & 42.8\% & $3.1\times10^{-4}$ & $91\pm1\%$ & $72\pm4\%$ & 66 \\
Rejection ABC, best $100$ of $13{,}000$ & 13{,}000 & --- & $2.9\times10^{-3}$ & $91\%$ & $67\%$ & 100 \\
\bottomrule
\end{tabular}
\end{table}

\paragraph{The starting bandwidth.} With expensive simulations, the bandwidth may remain close to its initial value throughout the available budget, leaving the reported posterior sensitive to $\epsilon_0$ even after the sampled parameters have concentrated near the observations. The first row of Table~\ref{tab:cpm-production} is the same sampler started at the benchmark's original $\epsilon_0=10$, twenty times the prior-predictive median discrepancy of $0.46$. It reports $13\%\pm14\%$ contraction on the cell volume, erratic as well as poor, although its draws are as concentrated as the production run's (the end tolerances agree within $20\%$). The runs differ in the bandwidth used for reporting. After the run's $62$ calls per worker, the schedule had not moved from $\epsilon_0$, which ended $321\times$ above the tolerance the draws supported, against $2\times$ at $\epsilon_0=0.1$. The bandwidth transient is negligible when millions of evaluations are available, explaining why it was not observed in the inexpensive benchmarks. It remains a limitation in the regime the method is designed for. The rule of \S\ref{sec:experiments-metrics}, about a fifth of the prior-predictive median, recovers the posterior. Re-reporting a stored history at the $k$-th order statistic of its own discrepancies costs no simulations and reaches a comparable posterior ($93\%/85\%$ contraction on replicate $0$). The synchronized variant shows the same transient on the straggler benchmark, whose Gaussian-mean simulator is near-instantaneous (Appendix~\ref{app:twin-straggler}).

\paragraph{The archive size.} The effective sample size stays near a few multiples of $k$. Increasing $k$ does not necessarily solve this limitation: When we recompute the posterior from the same stored history using order-statistic bandwidths as $k$ increases from $10$ to $1000$, the bandwidth increases and cell-volume contraction falls from $85\%$ to $78\%$. Running with $k=300$ leaves the schedule at $\epsilon_0$ for the whole hour. On the benchmarks with inexpensive simulators, $k=100$ is Pareto-optimal for throughput and calibration (Appendix~\ref{app:moved}). The archive size is a benchmark-dependent choice, made by checking calibration and the bandwidth reached.

\section{Discussion}\label{sec:discussion}
On the tissue simulator, the asynchronous sampler completes more evaluations per hour and reaches a tighter tolerance than the matched pyABC baseline, while also achieving greater contraction of the weakly identified parameter and a larger posterior effective sample size (Table~\ref{tab:cpm-production}). On the one-dimensional Gaussian target, our sampler is calibrated where pyABC under-covers. On the four-dimensional g-and-k benchmark, however, pyABC achieves a smaller Wasserstein distance to the reference posterior. Asynchronous execution is also less effective when evaluations are very low-cost. Together, these measurements suggest four practical guidelines for the workloads and implementation settings examined here.

\emph{Consider asynchronous sampling when evaluations take at least a few milliseconds.} Below that the per-arrival coordination costs more than a barrier does, and a generation-based sampler with a population sized to the machine is the better tool. Above it, removing the barrier recovers a cost that grows with the worker count and with the spread of simulation times (\S\ref{sec:results-boundary}).

\emph{Estimate the synchronization cost from runtime measurements.} The barrier's cost is computable from the asynchronous run's own per-evaluation timing, within the domain stated in \S\ref{sec:experiments-metrics} (\S\ref{sec:results-barrier}).

\emph{Set the starting bandwidth from the prior-predictive discrepancy scale.} For expensive simulations, the initial bandwidth may remain too large throughout the run. We therefore recommend setting it from prior-predictive discrepancies and checking the reported posterior by recomputing it from the completed evaluations at their $k$-th smallest discrepancy (\S\ref{sec:results-posterior}). A floor at $10^{-4}\epsilon_0$ satisfies the positive-bandwidth condition in \S\ref{sec:theory} and would not bind on the reported g-and-k and Cellular Potts runs.

\emph{Choose the archive size with calibration in mind.} Enlarging the archive loosens the final bandwidth and lengthens the transient, and the calibration checks are inexpensive to repeat (\S\ref{sec:results-posterior}).

Increasing the number of completed simulations does not automatically increase the effective sample size of the posterior estimate. In our experiments, that size stays within a few multiples of the archive size even as the history grows. The reporting rule contributes to this limit: for simplicity, the posterior is evaluated at the tightest bandwidth reached by the scheduler, so only a restricted part of the history receives appreciable weight. Adjusting the posterior reporting bandwidth independently of the sampling schedule may improve accuracy without modifying asynchronous execution.

\section{Limitations}\label{sec:limitations}
The consistency results have two important qualifications. First, the estimator converges to the intended smooth-ABC posterior when the fidelity ratio approaches one, under the stated assumptions. Of the four factors contributing to this ratio, two are measured and two bounded. Second, the asymptotic statements apply only when the reporting bandwidth remains bounded away from zero. A floor of $10^{-4}\epsilon_0$ guarantees this. It would have clipped the Gaussian-mean and Lotka--Volterra runs but none of the g-and-k and Cellular Potts runs. Even where the limits apply, they describe the run's continuation, since no reported run reached its floor (\S\ref{sec:theory}). At a fixed deadline, fast simulations are more likely to have finished than slow ones. The completed evaluations may therefore over-represent parameter regions with short runtimes, an instance of length bias in deadline-limited sampling \citep{murray2021anytime}. Under the execution model, the direct effect of evaluations still running at the deadline is limited to at most $W$ candidates, an $O(W/n)$ perturbation (Proposition~\ref{prop:censoring}); the remaining effect operates through the adaptive proposal sequence and is covered by the fidelity ratio $r$. A runtime-coupled study found no measurable shift of the posterior mean at the couplings tested, and we do not correct for the $O(W/n)$ term (Appendix~\ref{app:moved}).

\section{Conclusion}\label{sec:conclusion}
Generation-based ABC can leave workers idle while slow simulations finish. We quantify this synchronization penalty using simulation times recorded during an asynchronous run and validate the prediction against a synchronized variant on synthetic runtime distributions and a tissue simulator. The generation-free sampler removes the barrier and, wherever a simulation costs more than a few milliseconds, converts the simulations it gains into a tighter tolerance. The full-history posterior estimator can be reconstructed from the simulation log. Our analysis characterizes its potential deviation from the smooth-ABC posterior through a fidelity ratio that is only partly measured. The estimator is well calibrated in the one-dimensional tests but conservative for some parameters of the four-dimensional benchmark. What it delivers is bounded by the starting bandwidth and the archive size, and \S\ref{sec:discussion} says how to set them.

\subsubsection*{Broader Impact Statement}
This work is methodology for likelihood-free inference on parallel machines. Its direct effect is on compute: it removes the idle time a generation barrier imposes on heterogeneous simulator workloads, and the prediction of \S\ref{sec:results-barrier} lets a practitioner estimate that saving from their own timing data before spending an allocation. The method does not introduce an application-specific risk beyond those associated with the models and inferences to which it is applied.

\subsubsection*{Reproducibility}
The code is available at \url{https://github.com/ju-he/async-abc}. It contains the implementation (the asynchronous sampler as a Propulate propagator, the matched-kernel pyABC baseline, the benchmarks and runners), every run configuration, the cluster launchers and the test suite, together with the summary data behind every figure and table, the scripts that regenerate each, and the committed output of every diagnostic quoted in the text, so every figure, table and quoted number can be regenerated from the repository alone. The raw per-particle records (${\approx}30$\,GB for the Gaussian-mean experiment alone) and the per-worker timing logs will be released publicly upon publication.

\subsubsection*{Declaration of AI Assistance}
We used AI-based tools (Claude Opus and GPT Astra) to improve the clarity, grammar, and structure of the manuscript. Additionally, Claude Fable and GPT Astra were used to assist in checking mathematical derivations and proofs for correctness and consistency. All AI-assisted suggestions were reviewed by the authors, who take full responsibility for the scientific content, mathematical arguments, and conclusions presented in this work.

\subsubsection*{Acknowledgments}
The authors gratefully acknowledge the Gauss Centre for Supercomputing e.V. (\url{https://www.gauss-centre.eu}) for funding this project by providing compute time through the John von Neumann Institute for Computing on the GCS Supercomputer JUWELS at Forschungszentrum J\"ulich under the grant TissueTwin. A.S. and J.H. recognize support by the Helmholtz Information and Data Science School for Health. M.G., A.S. and J.H. recognize support by the Helmholtz Association's Initiative and Networking Fund under the Helmholtz AI platform grant and the grant ZebraTwin. A.S. recognizes support by the Helmholtz Foundation Model Initiative of the Helmholtz Association under the grants PROFOUND and VirtualCell. The funders had no role in study design, data collection, analysis, decision to publish or preparation of the manuscript.

\bibliography{references}
\bibliographystyle{tmlr}

\appendix

\section{The fidelity ratio}\label{app:theory-more}
This appendix relates the analysis to existing AMIS theory, bounds how far the tilt by $r$ can move the target, and decomposes $r$ and measures part of it on stored runs.

\paragraph{Relation to existing AMIS theory.} For a fixed proposal, the $r$-tilted limit of Corollary~\ref{cor:tilt} is the elementary observation that self-normalized importance sampling against a mismatched denominator converges to a correspondingly tilted target. The present scheme, however, is adaptive at every draw, so the argument runs through a martingale array rather than an i.i.d.\ one. The available AMIS results do not cover it: the convergence argument of \citet{cornuet2012amis} is heuristic, and \citet{marin2019consistency} restrict the adaptation so that the weighting stage decouples. The prior floor in the \emph{denominator} bounds every weight by $1/\delta$ with no moment or density-ratio condition. This lets the method use a local archive proposal at all: a proposal built around the $k$ best particles puts far less mass in the tails than the prior, so the usual bound on the prior-to-proposal density ratio does not hold.

\begin{proposition}[how far the tilt can move the target]\label{prop:tilt}
Let $\zeta_0:=\pi_{\epsilon_\infty}(|r-1|)$. Then
\begin{equation}
\bigl\|\pi^r_{\epsilon_\infty}-\pi_{\epsilon_\infty}\bigr\|_{\mathrm{TV}}
=\frac{\pi_{\epsilon_\infty}\bigl(|r-\pi_{\epsilon_\infty}(r)|\bigr)}{2\,\pi_{\epsilon_\infty}(r)}
\qquad\text{and}\qquad
\zeta_0<1\ \Longrightarrow\ \bigl\|\pi^r_{\epsilon_\infty}-\pi_{\epsilon_\infty}\bigr\|_{\mathrm{TV}}\le\frac{\zeta_0}{1-\zeta_0}.
\label{eq:tilt-bound}
\end{equation}
If $|r-1|\le\eta_A$ on a set $A\subseteq\Theta$ and $|r-1|\le M$ off it, then $\zeta_0\le\zeta:=\eta_A+M\,\pi_{\epsilon_\infty}(A^c)$, and the bound holds with $\zeta$ in place of $\zeta_0$ whenever $\zeta<1$.
\end{proposition}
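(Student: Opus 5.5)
The plan is to read the tilt as a change of density with respect to $\pi_{\epsilon_\infty}$. Write $\mu:=\pi_{\epsilon_\infty}$ and $c:=\mu(r)=\int r\,d\mu$, which is finite and positive because $r_-\le r\le r_+$. By the definition in \eqref{eq:tracking-ratio}, $\pi^r_{\epsilon_\infty}$ has density $r/c$ with respect to $\mu$: the normalizer $\int\pi L_{\epsilon_\infty} r$ equals $Z_{\epsilon_\infty}\,\mu(r)$, and $Z_{\epsilon_\infty}>0$ by Assumption~\ref{ass:kernel}. The total variation distance between two measures with $\mu$-densities $r/c$ and $1$ is half the $L^1(\mu)$ distance of the densities:
\begin{equation*}
\bigl\|\pi^r_{\epsilon_\infty}-\mu\bigr\|_{\mathrm{TV}}=\tfrac12\,\mu\bigl(|r/c-1|\bigr)=\frac{\mu(|r-c|)}{2c}.
\end{equation*}
This is the identity in \eqref{eq:tilt-bound}.

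For the inequality, I would bound the numerator and the denominator separately in terms of $\zeta_0=\mu(|r-1|)$. For the numerator, the triangle inequality gives $|r-c|\le|r-1|+|1-c|$, and $|1-c|=|\mu(r-1)|\le\zeta_0$, so $\mu(|r-c|)\le2\zeta_0$. For the denominator, $c=1+\mu(r-1)\ge1-\zeta_0>0$ when $\zeta_0<1$. Combining the two gives the bound $2\zeta_0/(2(1-\zeta_0))=\zeta_0/(1-\zeta_0)$.

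For the last claim, I would split the integral over $A$ and $A^c$:
\begin{equation*}
\zeta_0=\int_A|r-1|\,d\mu+\int_{A^c}|r-1|\,d\mu\le\eta_A\,\mu(A)+M\,\mu(A^c)\le\eta_A+M\,\mu(A^c)=\zeta.
\end{equation*}
Since $x\mapsto x/(1-x)$ is increasing on $[0,1)$, the condition $\zeta<1$ forces $\zeta_0\le\zeta<1$, and then $\zeta_0/(1-\zeta_0)\le\zeta/(1-\zeta)$.

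I expect no real obstacle. The one point that needs care is the measurability and integrability bookkeeping: $r$ must be measurable and bounded, which Assumption~\ref{ass:stab} provides, and $Z_{\epsilon_\infty}>0$, so that $\pi^r_{\epsilon_\infty}$ is a well-defined probability measure absolutely continuous with respect to $\mu$. The rest is the standard $L^1$ characterization of total variation.
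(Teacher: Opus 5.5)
Your proof is correct and is essentially the paper's argument: the same identity, the numerator bound $\pi_{\epsilon_\infty}(|r-\pi_{\epsilon_\infty}(r)|)\le2\zeta_0$ (which the paper states in dual form as $|\pi_{\epsilon_\infty}(f(r-1))-\pi_{\epsilon_\infty}(f)\,\pi_{\epsilon_\infty}(r-1)|\le2\|f\|_\infty\zeta_0$), the denominator bound $\pi_{\epsilon_\infty}(r)\ge1-\zeta_0$, and the same split over $A$ and $A^c$ followed by monotonicity of $x\mapsto x/(1-x)$. The only difference is that you compute total variation as half the $L^1(\pi_{\epsilon_\infty})$ distance between the densities $r/\pi_{\epsilon_\infty}(r)$ and $1$, whereas the paper uses the variational form $\tfrac12\sup_{\|f\|_\infty\le1}|\pi^r_{\epsilon_\infty}(f)-\pi_{\epsilon_\infty}(f)|$ and exhibits the maximizer $f=\operatorname{sign}(r-\pi_{\epsilon_\infty}(r))$; your density formulation makes that step unnecessary.
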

The identity says that the tilt moves the target through the variation of $r$ over posterior mass, not through its distance from one: $\pi^{cr}_{\epsilon_\infty}=\pi^r_{\epsilon_\infty}$ for every constant $c>0$. The bound replaces the centered absolute moment by the uncentered $\zeta_0$ and the divisor $\pi_{\epsilon_\infty}(r)$ by $1-\zeta_0$, and is loose by exactly that much. Estimating $\zeta_0$ from a stored run requires \emph{untilting}: the normalized weights $\widehat w_i$ target $\pi^r$, not $\pi_{\epsilon_\infty}$, so $\sum_i\widehat w_i|r(\theta_i)-1|$ estimates $\pi^r(|r-1|)$; weights proportional to $W_{i,n}/r(\theta_i)$ target $\pi_{\epsilon_\infty}$ and give $\widehat\zeta_0$.

\subsection{What $r$ is for this implementation}\label{sec:theory-r}
We decompose $r$ by inserting intermediate mixtures between the density actually sampled from and the function used in the denominator, changing one implementation feature at a time. Let $\bar q^{(1)}=\tfrac1n\sum_{i\le n}q_i$ use the \emph{nominal} archive proposals: the truncated mixtures the propagator intends, with exact normalizing constants and without the rejection cap or the underflow redraws, two implementation details described under (d) below. We obtain $\bar q^{(2)}$ by replacing those constants with the implementation's product-of-marginals values, and $\bar q^{(3)}$ by taking the $m$-point draw-proportional version of $\bar q^{(2)}$ carrying the observed share $\nu_n$ of bootstrap (prior) draws. Assume each of the four displayed ratios has a uniform limit, finite and bounded away from zero. Fidelity alone does not give this, since \eqref{eq:fidelity} constrains only their product. Writing $\bar q$ for the $m$-snapshot denominator \eqref{eq:snapshot-denominator},
\[
r=\underbrace{\lim\frac{\bar q^\star}{\bar q^{(1)}}}_{\text{(d) cap, redraws, asynchrony}}\cdot
\underbrace{\lim\frac{\bar q^{(1)}}{\bar q^{(2)}}}_{\text{(c) normalizer}}\cdot
\underbrace{\lim\frac{\bar q^{(2)}}{\bar q^{(3)}}}_{\text{(b) quadrature}}\cdot
\underbrace{\lim\frac{\bar q^{(3)}}{\bar q}}_{\text{(a) floor}} .
\]
The product telescopes for any positive intermediates, so the content is in the choice of intermediates: a different insertion order attributes the total differently, and the factors are a decomposition of one quantity rather than four independent errors.

\emph{(a) The prior floor.} When genuine bootstrap draws carry less than a $\delta$ share of the history, $\lambda_n=\delta>\nu_n$ and the denominator holds more prior mass than the sample does. With $u=q_\infty/\pi$ the limiting proposal-to-prior ratio, this contributes
$r_{\mathrm{floor}}=\bigl(\nu_\infty+(1-\nu_\infty)u\bigr)/\bigl(\delta+(1-\delta)u\bigr)$, so $r_{\mathrm{floor}}-1=(\delta-\nu_\infty)(u-1)/(\delta+(1-\delta)u)$: $r_{\mathrm{floor}}$ is exactly $1$ when $\nu_\infty\ge\delta$, below $1$ wherever the proposal is thinner than the prior, and $r_{\mathrm{floor}}\to1$ as $\delta\to0$, i.e.\ as $m$ grows. This is the one contribution available in closed form (Lemma~\ref{lem:floor}).

\emph{(b) The snapshot quadrature.} The $m$-point mixture stands in for the $n$-point one, with error at most $V_n/m$ where $V_n=\sum_{i<n}\|q_{i+1}-q_i\|_\infty$ is the total variation of the \emph{reconstructed} proposal path (the summed size of its steps); this controls factor (b), not the fidelity of the whole. Fixing $m$ therefore trades cost against fidelity; see the remark below.

\emph{(c) The truncation constant.} Each mixture component is normalized by the product of its per-axis in-box masses, which is exact for a diagonal perturbation covariance and increasingly accurate as a component's mass sits away from the faces of the box relative to its own scale, but an approximation otherwise.

\emph{(d) Capped rejection, underflow redraws, and (asynchronously) the arrival-order reconstruction.} A parent's perturbation is rejection-sampled into the box for at most $992$ attempts, so component $j$, with mixture weight $W_j$ and in-box mass $Z_j$, is realized with probability $W_j[1-(1-Z_j)^{992}]$ and the residual is emitted as a prior draw. A candidate whose proposal-time denominator falls below $10^{-12}$ is redrawn up to five times, the last one being emitted regardless. The cap changes the distribution relative to unlimited rejection whenever $Z_j<1$, through the residual probability $(1-Z_j)^{992}$; $Z_j\approx10^{-3}$ is the scale below which that residual becomes appreciable, not a threshold above which it vanishes. Under asynchronous execution, this factor additionally includes the difference between the proposal each candidate came from, formed on the prefix its worker held, and the snapshots the post-hoc pass rebuilds from an arrival-ordered log. By Lemma~\ref{lem:adapted} the emitted density is the conditional distribution of the candidate, so no further gap arises at that step, and each rebuilt snapshot differs from the proposal in force at that index by at most the $W$ candidates in flight at the time.

\paragraph{Measured, in part.} Two of the four factors, the prior floor and the snapshot quadrature, can be estimated rather than bounded, because $\widehat\pi_n$ is a post-hoc pass: recomputing \eqref{eq:snapshot-denominator} at a much larger $m$ and at the observed prior share gives a reference against which those factors are evaluated at every particle. The truncation constant and factor (d) are bounded rather than measured. The five Cellular Potts production replicates (Appendix~\ref{app:cpm}) have about $12{,}900$ asynchronous evaluations each across $48$ workers and two correlated dimensions, so that factor (c), exact only for a diagonal covariance, differs from one. Both measured factors, (a) and (b), differ from one: the bootstrap draws make up $1.1\%$ of each history against a floor of $\delta=2.4\%$, so (a) contributes $\widehat\zeta=0.011$ and (b) $0.024$--$0.028$, together $0.035$--$0.039$ with $r\in[0.24,1.41]$. The posterior estimate lies $0.2$--$0.5\%$ in total variation from the reference at $m=400$ and the observed prior share. Posterior means agree within $0.012$ standard deviations and marginal widths within $1\%$, narrower in nine of the ten marginals, which is the tail-suppressing direction that (a) predicts; the effective sample size is unchanged. On a one-dimensional Gaussian-mean history, where the exact $n$-point mixture is computable, contributions (a) and (b) together give $\widehat\zeta=0.042$; the total-variation distance from the reference that removes them is $0.26\%$, and there too the estimate is $0.5\%$ narrower. Of contribution (d), the two events a stored history records, a rejection fallback and an exhausted redraw, occurred neither in the $12{,}000$ draws of that history nor in the $63{,}699$ archive-phase draws of the five Cellular Potts replicates. The bound in \eqref{eq:tilt-bound} is loose against both direct measurements by an order of magnitude ($3.6$--$4.0\%$ against $0.2$--$0.5\%$ on Cellular Potts, $4.4\%$ against $0.26\%$ on the one-dimensional history), as sup-norm-free bounds valid for every $f$ generally are. The identity in \eqref{eq:tilt-bound} says what the direct measurements estimate: the centered absolute moment of the removed factors under the reference, divided by twice their mean.

\paragraph{Does the proposal path settle?} Factor (b) is controlled by the total variation $V_n$ of the reconstructed proposal path, which a stored history makes measurable. Rebuilding $q_\tau$ along the five Cellular Potts production replicates, the per-step drift $\|q_{\tau+1}-q_\tau\|_\infty$ decays as $\tau^{-0.7}$ to $\tau^{-0.95}$. The archive's membership changes only $825$--$874$ times in about $12{,}900$ draws, tracking $k\log n$ at a ratio between $0.54$ and $0.92$ across the whole run, the rate a record-statistics heuristic predicts, since the $k$-th best of $n$ draws improves with probability ${\approx}k/n$. Membership changes alone would thus add $O(\log n)$ to the path's total variation if each moves the proposal by a bounded amount; the measured drift, which also includes the bandwidth updates, sums to between $O(n^{0.05})$ and $O(n^{0.3})$. Either way $V_n$ grows sublinearly, the regime in which a snapshot count growing faster than $V_n$ removes factor (b) (Remark~\ref{rem:growing-m}). The measurement spans one benchmark over $1.3$ decades of $n$ per replicate, so it is a diagnostic rather than an asymptotic statement.

\paragraph{Fixed versus growing $m$.} With $m$ fixed, the $m$-point mixture tracks the $n$-point one only if the proposal path settles, the error being $V_n/m$. Letting $m=m_n\to\infty$ with $V_n/m_n\to0$ removes \emph{this} contribution to $r$ (it does not give $r\equiv1$, since the other three factors remain), and the proofs go through for any $m_n=o(n/\log n)$ \emph{provided the floor $\delta$ is held fixed rather than tied to $m$}. The implementation ties them through $\delta=0.5/(m{+}1)$, so growing $m$ there would send $\delta\to0$ and lose the uniform weight bound $1/\delta$ of Lemma~\ref{lem:bounded}, which fixes the constants in the law of large numbers. The post-hoc pass can decouple the two (Appendix~\ref{app:theory}, Remark~\ref{rem:growing-m}). The retroactive pass then costs $\mathcal{O}(n\,m_n\,k)$ instead of $\mathcal{O}(nk)$, and the literal cumulative mixture is the endpoint $m_n=n$ at $\mathcal{O}(n^2k)$. On the Gaussian-mean history the posterior estimate is within $3\times10^{-4}$ in total variation of the exact $n$-point mixture by $m=50$, so a $2.5\times$ more expensive post-hoc pass removes factor (b) almost entirely. The paper reports the $m=21$ end; \S\ref{sec:method-amis} gives its cost.

\section{Proofs}\label{app:theory}

\subsection{Setting}\label{app:theory-setting}
We work with a filtration $(\mathcal{F}_i)$ for which, writing $\mathbb{E}_{i-1}[\cdot]=\mathbb{E}[\cdot\mid\mathcal{F}_{i-1}]$, the pair $\xi_i=(\theta_i,\rho_i)$ is $\mathcal{F}_i$-measurable and
\begin{equation}
\theta_i\mid\mathcal{F}_{i-1}\sim \tilde q_i,
\qquad
\rho_i\mid\mathcal{F}_{i-1}\vee\sigma(\theta_i)\sim p(\cdot\mid\theta_i).
\label{eq:app-conditional-law}
\end{equation}
Particles are indexed by \emph{proposal} time, not arrival time. Under sequential execution $\mathcal{F}_i=\sigma(\xi_1,\dots,\xi_i)$ serves and \eqref{eq:app-conditional-law} holds by construction. Under asynchronous execution a discrepancy $\rho_i$ may still be in flight when candidate $i+1$ is proposed, so the $\sigma$-field of everything realized by that moment does not contain $\xi_i$, and a larger filtration is needed. A two-worker example shows the difficulty. Worker A proposes candidate 1, whose simulation is slow. Worker B proposes candidate 2, receives its result, and proposes candidate 3 while candidate 1 is still running. When candidate 3 is proposed, $\xi_1$ has not been realized. Conditioning on $\xi_1,\xi_2$ instead does not recover the density worker B emitted either: B built it from $\xi_2$ alone, and which results had reached B depends on the durations of the two simulations, which $\xi_1,\xi_2$ do not record. The following lemma supplies a filtration from the execution model that adds the durations, the worker identities and the delivery randomness, and under which the emitted density is the conditional distribution of the candidate.

\begin{lemma}[adapted filtration under the execution model]\label{lem:adapted}
Let Assumption~\ref{ass:filtration} hold, let $w_i$ be the worker that proposes candidate $i$ (in this lemma and its proof $w_i$ is a worker index, not the parent weight of the history) and $D_i$ the duration of its simulation, and let $\mathcal{F}_i:=\sigma\bigl((\theta_j,\rho_j,D_j,w_j)_{j\le i},\,\mathcal{E}\bigr)$, where $\mathcal{E}$ is the $\sigma$-field of the exogenous scheduling randomness of (B3). Then $\xi_i$ is $\mathcal{F}_i$-measurable and \eqref{eq:app-conditional-law} holds with $\tilde q_i$ the proposal density that worker $w_i$ emits on the prefix of the history it has received. Moreover, for every $\tau$ the arrival-ordered prefix of length $\tau$ and the proposal-ordered prefix $\{1,\dots,\tau\}$ differ in at most $W$ candidates each.
\end{lemma}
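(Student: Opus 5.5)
The plan is to read the execution model of Assumption~\ref{ass:filtration} as a generative description of the whole run. Each of the three random sources (simulator, sampler, exogenous delivery) is realized lazily, and everything the sampler does at proposal time $i$ is a measurable function of $\mathcal{F}_{i-1}$ together with fresh sampler randomness. Measurability of $\xi_i$ in $\mathcal{F}_i$ is immediate from the definition, so the work lies in the two conditional laws and the counting claim.

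\emph{Step 1 (the proposal is $\mathcal{F}_{i-1}$-determined).} By (B3), the identity of $w_i$, the proposal time, and the set of earlier results that $w_i$ has received are measurable functions of $(D_j,w_j,\text{proposal times})_{j<i}$ and $\mathcal{E}$. Proposal times are themselves built recursively from durations and worker identities, so an induction on $i$ puts them in $\mathcal{F}_{i-1}$. The received prefix consists of pairs $(\theta_j,\rho_j)$ with $j<i$, which also lie in $\mathcal{F}_{i-1}$. The density $\tilde q_i$ that $w_i$ emits is a deterministic function of that prefix: the archive, bandwidth, snapshot buffer and truncated mixture of Algorithm~\ref{alg:async-abc}, together with the cap and redraw mechanics of factor (d). Hence $\tilde q_i$ is $\mathcal{F}_{i-1}$-measurable.

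\emph{Step 2 (conditional laws).} The candidate is $\theta_i=G(\tilde q_i,U_i)$, where $U_i$ is fresh sampler randomness. By (B2), $U_i$ is independent of the simulator stream and of $\mathcal{E}$. It is also independent of $\mathcal{F}_{i-1}$, because $\mathcal{F}_{i-1}$ is generated by $\mathcal{E}$, the earlier $U_j$, and simulator outputs attached to earlier calls. The freezing lemma then gives $\theta_i\mid\mathcal{F}_{i-1}\sim\tilde q_i$. For $\rho_i$, (B1) says that the pair $(\rho_i,D_i)$ is drawn from $P(\cdot\mid\theta_i)$ independently of the other calls' outputs given their parameters. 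So conditioning on $\mathcal{F}_{i-1}\vee\sigma(\theta_i)$ leaves the $\rho_i$-marginal $p(\cdot\mid\theta_i)$.

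The main obstacle is the subtle point that $\mathcal{F}_{i-1}$ contains $\rho_j,D_j$ for candidates $j<i$ that were still \emph{in flight} at time $i$. Conditioning on these future quantities must not distort the law of $\theta_i$. I would handle this by constructing the simulator stream as an i.i.d.\ array indexed by call number, with $(\rho_j,D_j)=F(\theta_j,V_j)$. This representation is legitimate by (B1). Then $\mathcal{F}_{i-1}\subseteq\sigma(\mathcal{E},U_{<i},V_{<i})$, which is independent of $(U_i,V_i)$, and both conditional statements follow by freezing. The key observation is that in-flight results depend only on $V_j$ with $j<i$, never on $U_i$.

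\emph{Step 3 (prefix comparison).} At any moment at most $W$ candidates are in flight, one per worker. Let $\tau$ be given, and consider a candidate in the proposal-ordered prefix $\{1,\dots,\tau\}$ that is missing from the first $\tau$ arrivals. Such a candidate must still be running when the $\tau$-th arrival occurs, since it was proposed before an arriving candidate did, or at least before the $\tau$-th arrival's proposal was issued. There are therefore at most $W$ such candidates. Both sets have cardinality $\tau$, so the symmetric difference splits evenly, and the reverse count also has at most $W$ elements.
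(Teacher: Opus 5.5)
Your proposal is correct and follows the paper's proof step for step: measurability of the proposer, proposal time and received set by induction from (B3); fresh sampler and simulator randomness for the two conditional laws; and the in-flight count for the prefix claim. Your i.i.d.\ noise-array representation $(\rho_j,D_j)=F(\theta_j,V_j)$ makes explicit the independence from still-running $\rho_j,D_j$ that the paper asserts in one line.

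One clause in Step 3 is false and should be dropped. A missing $j\le\tau$ need not have been proposed before the $\tau$-th arrival's own proposal, since that arrival can be candidate $1$. What does hold, by pigeonhole, is that $j$ precedes some arrived candidate of index larger than $\tau$. Equivalently, as in the paper, $\{1,\dots,\tau\}$ lies in the set of candidates proposed before the $\tau$-th arrival, because that set has at least $\tau$ elements, and this is all the count needs.
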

\begin{proof}
Candidate $i$ is proposed when a worker becomes free, so its proposal time $T_i$ is the completion time $T_j+D_j$ of an earlier candidate $j<i$ of the same worker (or a start time, for the first $W$ candidates). By (B3) and induction on $i$, the time $T_i$, the worker $w_i$ and the set $A_i\subset\{1,\dots,i-1\}$ of candidates whose results have reached $w_i$ by $T_i$ are functions of $(T_j,D_j,w_j)_{j<i}$ and of $\mathcal{E}$. They are therefore $\mathcal{F}_{i-1}$-measurable. The emitted density $q^{\mathrm{em}}_i$ is a function of the history restricted to $A_i$, so it is $\mathcal{F}_{i-1}$-measurable too. By (B2), $\theta_i$ is drawn from $q^{\mathrm{em}}_i$ with randomness independent of $\mathcal{F}_{i-1}$, so the conditional distribution of $\theta_i$ given $\mathcal{F}_{i-1}$ is $q^{\mathrm{em}}_i$: the first part of \eqref{eq:app-conditional-law}, with $\tilde q_i=q^{\mathrm{em}}_i$. By (B1) and (B3), $(\rho_i,D_i)$ given $\theta_i$ is independent of $(\theta_j,\rho_j,D_j,w_j)_{j<i}$ and of $\mathcal{E}$, which generate $\mathcal{F}_{i-1}$; hence $\rho_i\mid\mathcal{F}_{i-1}\vee\sigma(\theta_i)\sim p(\cdot\mid\theta_i)$. For the last claim, let $P$ be the set of candidates proposed before the $\tau$-th arrival and $A$ the set of the first $\tau$ arrivals. Then $A\subseteq P$, $\{1,\dots,\tau\}\subseteq P$ since $|P|\ge\tau$, and $P\setminus A$ is the set in flight at that moment, of size at most $W$; so $\{1,\dots,\tau\}\setminus A\subseteq P\setminus A$ has at most $W$ elements, and $A\setminus\{1,\dots,\tau\}$ has the same number.
\end{proof}

Completion times steer which worker proposes next and which prefix it sees. The lemma says that this steering changes \emph{which} proposal density is used, a predictable choice, and not the distribution of the candidate given that density, which is fixed by fresh randomness. The post-hoc estimator consumes the \emph{arrival}-ordered log of one rank (one MPI process; here every worker is a rank) and reconstructs the $q_{\tau_s}$ from its prefixes. By the last clause of the lemma, each reconstructed snapshot differs from the proposal in force at that index by a boundary of at most $W$ particles. This boundary is the asynchrony contribution to $r$ (\S\ref{sec:theory-r}(d)). The lemma fails if simulator outputs are coupled across calls (a shared seed, say) or if the sampler's stream is correlated with runtimes; both are checkable implementation properties, and Table~\ref{tab:assumption-status} records them for ours.

No structural form is imposed on $\tilde q_i$: it is whatever the archive rule, the capped rejection, the underflow redraws and the bootstrap phase jointly produce, and it enters the proofs only through $\bar q^\star_n=\tfrac1n\sum_{i\le n}\tilde q_i$ of \eqref{eq:draw-mixture}, for which the denominator $\bar q_n$ of \eqref{eq:snapshot-denominator} is an $m$-component stand-in. Let $\mathcal{Q}$ be the set of positive functions of the form \eqref{eq:snapshot-denominator} whose components satisfy Assumption~\ref{ass:class}; nothing below requires an element of $\mathcal{Q}$ to integrate to one. Elements of $\mathcal{Q}$ are parameterized by $p_q=(\text{means},\text{component weights},\text{covariances},\alpha,\lambda)$ ranging over a compact set $P_q\subset\mathbb{R}^{D}$, $D=m\,(kd+k+d(d+1)/2)+m+1$, the last coordinate being the prior share $\lambda\in[\delta,1]$ of \eqref{eq:snapshot-denominator}, on which $\bar q_n$ depends affinely.

\begin{lemma}[bounded weights]\label{lem:bounded}
Under Assumptions~\ref{ass:class}--\ref{ass:defensive}, $0\le W_{i,n}\le1/\delta$ for all $i\le n$ and all $n$, surely, whatever the sampler does.
\end{lemma}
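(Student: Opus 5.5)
The bound is a pointwise statement about the weight formula \eqref{eq:posterior-estimator}, so I will prove it directly from the definitions, with no probabilistic argument. Fix $n$ and $i\le n$ and write
\[
W_{i,n}=\frac{\pi(\theta_i)\,K_{\epsilon_n}(\rho_i)}{\bar q_n(\theta_i)},
\qquad
\bar q_n=\lambda_n\pi+(1-\lambda_n)\sum_{s}\alpha_{s,n}q_{\tau_s}.
\]

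\textbf{Nonnegativity.} Every factor in the numerator is nonnegative. The prior density is positive on $\Theta$ by Assumption~\ref{ass:class}. The kernel $K_{\epsilon_n}$ takes values in $[0,1]$; this is Assumption~\ref{ass:kernel}, and the sampler emits bootstrap weights without a bandwidth, which I must check do not enter the reporting formula, since there $\epsilon_n$ is always the reporting bandwidth. Every $\theta_i$ lies in $\Theta$, because all draws are truncated or retried into the prior box and the residual draws come from the prior itself. Hence $\pi(\theta_i)>0$ and the denominator is defined.

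\textbf{Lower bound on the denominator.} Each snapshot $q_{\tau_s}$ is a positive function on $\Theta$ (Assumption~\ref{ass:class}), and the coefficients $\alpha_{s,n}\ge0$ and $1-\lambda_n\ge0$ are nonnegative because $\lambda_n\in[\delta,1]$. Dropping the second summand therefore gives
\[
\bar q_n(\theta_i)\ \ge\ \lambda_n\,\pi(\theta_i)\ \ge\ \delta\,\pi(\theta_i)\ >\ 0,
\]
where the middle inequality is Assumption~\ref{ass:defensive}. This step does not need snapshot normalization constants to be exact, only positivity.

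\textbf{Upper bound.} Combining the two steps with $K\le1$,
\[
W_{i,n}\ \le\ \frac{\pi(\theta_i)\cdot 1}{\delta\,\pi(\theta_i)}\ =\ \frac1\delta .
\]
No step uses any property of the sampler's path $\tilde q_i$, so the bound holds surely, for every realization.

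\textbf{Main obstacle.} The algebra is trivial; the care lies in the domain bookkeeping. I need to confirm two things: that every $\theta_i$ lies in $\Theta$, where $\pi>0$, including the capped-rejection fallback and the redraws, both of which emit in-box or prior points; and that $\lambda_n\le1$, so that the snapshot term enters with a nonnegative coefficient. Both follow from the construction in \S\ref{sec:method-weights} together with $\lambda_n=\max(\nu_n,\delta)$, since $\nu_n$ is a fraction and $\delta\le1/2$.
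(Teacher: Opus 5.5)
Your proof is correct and matches the paper's argument. Every component of the denominator is nonnegative, so $\bar q_n\ge\delta\pi$ on $\Theta$, and then $K_{\epsilon}\le1$ lets the prior cancel, giving $W_{i,n}\le1/\delta$; your extra checks ($\theta_i\in\Theta$, $\lambda_n\le1$) are left implicit in the paper's one-line version.
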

\begin{proof}
All components of \eqref{eq:snapshot-denominator} are non-negative, so $\bar q_n\ge\delta\pi$ pointwise on $\Theta$. Hence $W_{i,n}=\pi(\theta_i)K_{\epsilon_n}(\rho_i)/\bar q_n(\theta_i)\le K_{\epsilon_n}(\rho_i)/\delta\le1/\delta$, the last step by $K_\epsilon\le1$ (Assumption~\ref{ass:kernel}). The prior cancels exactly, so the bound needs neither a density-ratio condition nor a moment condition.
\end{proof}

\begin{lemma}[uniform strong law of large numbers for bounded martingale differences]\label{lem:uslln}
Let $P\subset\mathbb{R}^{D}$ be compact and $\{\psi_p\}_{p\in P}$ measurable functions on a measurable space $\mathcal{X}$ with $\sup_{p}\|\psi_p\|_\infty\le B$ and $|\psi_p(x)-\psi_{p'}(x)|\le\Lambda\|p-p'\|$ for every $x\in\mathcal{X}$. Let $(\xi_i)$ be $\mathcal{X}$-valued with $\xi_i$ $\mathcal{F}_i$-measurable. Then
\[
\sup_{p\in P}\Bigl|\tfrac1n\sum_{i\le n}\bigl(\psi_p(\xi_i)-\mathbb{E}_{i-1}\psi_p(\xi_i)\bigr)\Bigr|\;\longrightarrow\;0
\qquad\text{a.s.}
\]
\end{lemma}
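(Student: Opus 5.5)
The plan is to prove the pointwise strong law first and then make it uniform with a finite net of $P$. A net suffices here because, for every $\eta>0$, a single finite $\eta$-net serves all $n$ at once. For fixed $p\in P$, set $d_i^p:=\psi_p(\xi_i)-\mathbb{E}_{i-1}\psi_p(\xi_i)$. Since $\xi_i$ is $\mathcal{F}_i$-measurable and $|\psi_p|\le B$, the sequence $(d_i^p)$ is a martingale difference sequence adapted to $(\mathcal{F}_i)$ with $|d_i^p|\le 2B$. Two routes give $\tfrac1n\sum_{i\le n}d_i^p\to0$ almost surely, and I would use either.
\begin{itemize}
\item Via Azuma--Hoeffding: $\Pr\bigl(|\sum_{i\le n}d_i^p|>n t\bigr)\le 2\exp(-n t^2/(8B^2))$. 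These probabilities are summable in $n$, so Borel--Cantelli gives the limit for each rational $t>0$.
\item Via a martingale: $M_n=\sum_{i\le n}d_i^p/i$ is an $L^2$-bounded martingale, since $\sum_i 4B^2/i^2<\infty$. It therefore converges almost surely, and Kronecker's lemma gives the limit.
\end{itemize}
No independence and no structure on the conditional laws is needed, which matches how the lemma will be applied to the adaptive proposals $\tilde q_i$.

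Next comes the uniformity step. Fix $\eta>0$. Because $P$ is compact, it has a finite $\eta$-net $N_\eta\subset P$. The union of the null sets from the pointwise step over $p\in N_\eta$ and over $\eta\in\{1/\ell:\ell\in\mathbb{N}\}$ is still a null set; off it, $\max_{p'\in N_\eta}\bigl|\tfrac1n\sum_{i\le n}d_i^{p'}\bigr|\to0$ for every such $\eta$.

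Now take any $p\in P$ and let $p'\in N_\eta$ satisfy $\|p-p'\|\le\eta$. The Lipschitz condition holds pointwise in $x$, so $|\psi_p(\xi_i)-\psi_{p'}(\xi_i)|\le\Lambda\eta$ surely. By monotonicity of conditional expectation, also $|\mathbb{E}_{i-1}\psi_p(\xi_i)-\mathbb{E}_{i-1}\psi_{p'}(\xi_i)|\le\Lambda\eta$ almost surely. To avoid an uncountable family of exceptional sets here, I would apply this only to $p$ in a countable dense subset of $P$. Together these give $|d_i^p-d_i^{p'}|\le2\Lambda\eta$, and hence
\[
\sup_{p}\Bigl|\tfrac1n\textstyle\sum_{i\le n}d_i^p\Bigr|\le\max_{p'\in N_\eta}\Bigl|\tfrac1n\textstyle\sum_{i\le n}d_i^{p'}\Bigr|+2\Lambda\eta .
\]
Taking $\limsup_n$ and then letting $\eta=1/\ell\to0$ gives the claim.

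I expect the main obstacle to be measure-theoretic rather than probabilistic: making the supremum over an uncountable $P$ measurable, and choosing versions of the conditional expectations $\mathbb{E}_{i-1}\psi_p(\xi_i)$ that are jointly well behaved in $p$. I would handle this in two steps.
\begin{itemize}
\item Restrict the supremum to a countable dense $P_0\subset P$, which makes it measurable.
\item Fix regular versions of the conditional expectations on $P_0$. These satisfy the $\Lambda$-Lipschitz bound on a common full-measure set, so they extend uniquely and continuously to $P$.
\end{itemize}
With those versions, the supremum over $P$ equals the supremum over $P_0$, because $p\mapsto\frac1n\sum_{i\le n}d_i^p$ is then $2\Lambda$-Lipschitz. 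The resulting statement is insensitive to the choice of versions, since any two differ only on a null set for each $p\in P_0$.
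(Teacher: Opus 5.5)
Your proposal is correct and follows the paper's route: pointwise Azuma--Hoeffding plus Borel--Cantelli, one fixed finite $\eta$-net of $P$, the Lipschitz transfer $|d_i^p-d_i^{p'}|\le 2\Lambda\eta$, and a countable intersection over $\eta=1/j$. The paper's union bound over the net is equivalent to your intersecting finitely many almost-sure events. Your extra care in restricting to a countable dense $P_0$ and choosing versions of $\mathbb{E}_{i-1}\psi_p(\xi_i)$ that are continuous in $p$ is a refinement the paper leaves implicit.
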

\begin{proof}
Fix $p$. The variables $D^p_i:=\psi_p(\xi_i)-\mathbb{E}_{i-1}\psi_p(\xi_i)$ form a martingale difference sequence with $|D^p_i|\le2B$, so by the Azuma--Hoeffding inequality $\mathbb{P}(|\sum_{i\le n}D^p_i|\ge nt)\le2\exp(-nt^2/8B^2)$ for every $t>0$. Fix $\eta>0$ and let $P_\eta$ be a finite $\eta$-net of $P$, of cardinality at most $C_P\eta^{-D}$. A union bound gives $\mathbb{P}(\max_{p\in P_\eta}|\sum_{i\le n}D^p_i|\ge nt)\le2C_P\eta^{-D}\exp(-nt^2/8B^2)$, which is summable in $n$ (the net is fixed, so its cardinality does not grow with $n$); by Borel--Cantelli, $\limsup_n\max_{p\in P_\eta}|\tfrac1n\sum_{i\le n}D^p_i|\le t$ a.s. For arbitrary $p\in P$ pick $p'\in P_\eta$ with $\|p-p'\|\le\eta$; then $|D^p_i-D^{p'}_i|\le2\Lambda\eta$ by the Lipschitz hypothesis (which also bounds the conditional expectations), so $\limsup_n\sup_{p\in P}|\tfrac1n\sum_{i\le n}D^p_i|\le t+2\Lambda\eta$ a.s. Intersecting the almost-sure events over the countable family $t=1/r$, $\eta=1/j$ with $r,j\in\mathbb{N}$ (here $r$ is an integer index, not the fidelity ratio), gives the claim.
\end{proof}

\begin{lemma}[conditional-mean identity]\label{lem:condmean}
Let $\phi:\Theta\to\mathbb{R}$ be bounded measurable, and let the bandwidth $\epsilon\in(0,\epsilon_0]$ and the positive function $q\ge\delta\pi$ each be either deterministic or $\mathcal{F}_{i-1}$-measurable. Then under \eqref{eq:app-conditional-law},
\[
\mathbb{E}_{i-1}\!\left[\frac{\phi(\theta_i)\,K_\epsilon(\rho_i)}{q(\theta_i)}\right]
=\int_\Theta \phi\,L_\epsilon\,\frac{\tilde q_i}{q}\,d\theta .
\]
\end{lemma}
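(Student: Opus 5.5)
We prove Lemma~\ref{lem:condmean} by conditioning in two stages, first on $\mathcal{F}_{i-1}\vee\sigma(\theta_i)$ to integrate out $\rho_i$, then on $\mathcal{F}_{i-1}$ to integrate out $\theta_i$. Before conditioning we check integrability. Since $q\ge\delta\pi$ and $K_\epsilon\le1$, the integrand is bounded in absolute value by $\|\phi\|_\infty\pi(\theta_i)/q(\theta_i)\le\|\phi\|_\infty/\delta$, so every conditional expectation below exists and Fubini applies to the conditional kernels. Measurability also needs a brief check. The map $(\theta,\rho)\mapsto\phi(\theta)K_\epsilon(\rho)/q(\theta)$ must be jointly measurable with $\epsilon$ and $q$ as $\mathcal{F}_{i-1}$-measurable inputs. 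For $q$ we use the parameterization $p_q\in P_q$ of $\mathcal{Q}$: $q$ is a measurable function of an $\mathcal{F}_{i-1}$-measurable parameter, and the evaluation map is jointly measurable (indeed continuous in $p_q$). For $K_\epsilon(\rho)$, Assumption~\ref{ass:kernel} makes it Lipschitz in $\epsilon$ on $[a,\epsilon_0]$, hence continuous in $\epsilon$, and measurable in $\rho$. It is therefore a Carathéodory function and jointly measurable.

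\emph{Inner step.} We condition on $\mathcal{G}_i:=\mathcal{F}_{i-1}\vee\sigma(\theta_i)$. On $\mathcal{G}_i$ the quantities $\epsilon$, $q$ and $\theta_i$ are all known, so $\phi(\theta_i)/q(\theta_i)$ factors out. By the second half of \eqref{eq:app-conditional-law}, $\rho_i$ has law $p(\cdot\mid\theta_i)$ given $\mathcal{G}_i$. The ``freezing'' lemma for conditional expectations (a jointly measurable bounded function of a $\mathcal{G}_i$-measurable variable and a variable with a regular conditional law) then gives $\mathbb{E}[K_\epsilon(\rho_i)\mid\mathcal{G}_i]=\int K_\epsilon(\rho)\,p(d\rho\mid\theta_i)$. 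This equals $L_\epsilon(\theta_i)$ evaluated at the $\mathcal{F}_{i-1}$-measurable $\epsilon$.

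\emph{Outer step.} By the tower property, $\mathbb{E}_{i-1}[\phi(\theta_i)K_\epsilon(\rho_i)/q(\theta_i)]=\mathbb{E}_{i-1}[\phi(\theta_i)L_\epsilon(\theta_i)/q(\theta_i)]$. By the first half of \eqref{eq:app-conditional-law}, $\theta_i$ has conditional density $\tilde q_i$ given $\mathcal{F}_{i-1}$. Applying the freezing lemma again with the $\mathcal{F}_{i-1}$-measurable inputs $(\epsilon,q)$, we get $\int_\Theta\phi L_\epsilon\tilde q_i/q\,d\theta$.

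The only real subtlety is that $\epsilon$ and $q$ are random rather than fixed. Plugging them in is justified only through the joint-measurability and freezing argument, not by conditioning on a fixed value. We would state the freezing lemma explicitly; regular conditional distributions exist because $\Theta\times[0,\infty)$ is Polish. The bound $1/\delta$ removes every other analytic concern.
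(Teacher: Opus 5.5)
Your two-stage conditioning (integrate out $\rho_i$ against $p(\cdot\mid\theta_i)$ to produce $L_\epsilon(\theta_i)$, then integrate out $\theta_i$ against $\tilde q_i$) is essentially the paper's argument, and you spell out the measurability of the random plug-ins $(\epsilon,q)$ more carefully than the paper does.

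One slip in your integrability check needs fixing. The integrand is $\phi(\theta_i)K_\epsilon(\rho_i)/q(\theta_i)$, with no factor $\pi(\theta_i)$ in the numerator, so $q\ge\delta\pi$ alone gives only $\|\phi\|_\infty/(\delta\,\pi(\theta_i))$. To reach the uniform bound $\|\phi\|_\infty/(\delta\pi_-)$ that the paper uses, you also need the prior lower bound $\pi\ge\pi_->0$ of Assumption~\ref{ass:class}. Your $\|\phi\|_\infty/\delta$ is really the bound for the special case $\phi=\pi h$ applied later, stated there with $\|h\|_\infty$ in place of $\|\phi\|_\infty$.
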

\begin{proof}
Condition on $\mathcal{F}_{i-1}$, given which $\epsilon$ and $q$ are constants, and integrate first over $\rho_i\sim p(\cdot\mid\theta_i)$, which gives the factor $L_\epsilon(\theta_i)=\mathbb{E}[K_\epsilon(\rho)\mid\theta_i]$, then over $\theta_i\sim\tilde q_i$. Fubini applies because the integrand is bounded by $\|\phi\|_\infty/(\delta\pi_-)$. An $\mathcal{F}_\infty$-measurable $q$, such as the run's final denominator $\bar q_n$, is \emph{not} admissible here; the proof below handles it through uniformity instead. The case $\phi=\pi h$ is the one used below.
\end{proof}

\paragraph{The class used below.} Take $\mathcal{X}=\Theta\times[0,\infty)$, $\xi_i=(\theta_i,\rho_i)$, and, for a fixed bounded measurable $h$,
\begin{equation}
\psi_{(\epsilon,p_q)}(\theta,\rho):=\frac{\pi(\theta)\,h(\theta)\,K_\epsilon(\rho)}{q_{p_q}(\theta)},
\qquad
(\epsilon,p_q)\in P_a:=[a,\epsilon_0]\times P_q,\qquad a\in(0,\epsilon_0] .
\label{eq:app-class}
\end{equation}
Each $P_a$ is compact \emph{and deterministic}, which matters: Lemma~\ref{lem:uslln} cannot be applied over a parameter set that depends on the run, and $[\epsilon_\infty,\epsilon_0]$ would be such a set when $\epsilon_\infty$ is random. We therefore apply the lemma on $P_{1/j}$ for every $j\in\mathbb{N}$ with $1/j\le\epsilon_0$ and intersect the countably many almost-sure events; since $\epsilon_n\downarrow\epsilon_\infty>0$ a.s., each path has $\epsilon_n\in[1/j,\epsilon_0]$ for all $n$ once $1/j\le\epsilon_\infty$, so the resulting uniform bound covers $(\epsilon_n,\bar q_n)$ on that path. No deterministic bandwidth floor is required. On $P_a$, $\sup\|\psi\|_\infty\le\|h\|_\infty/\delta$ by the computation of Lemma~\ref{lem:bounded}. For the Lipschitz hypothesis: each truncated component density $\varphi(\theta;\mu_j,\Sigma)/Z(\mu_j,\Sigma)$ has parameter-derivatives bounded uniformly over $\theta\in\Theta$, because $\Theta$ is compact, $\sigma_-^2I\preceq\Sigma\preceq\sigma_+^2I$ bounds $\varphi$ and its derivatives, and the truncation mass is bounded below, $Z(\mu_j,\Sigma)\ge c_0>0$, uniformly for $\mu_j\in\Theta$. The implementation's constant is the \emph{product of per-axis} in-box masses, which is where the box form of $\Theta$ in Assumption~\ref{ass:class} enters. Each factor is $\Phi((\mathrm{hi}_d-\mu_{jd})/\sigma_d)-\Phi((\mathrm{lo}_d-\mu_{jd})/\sigma_d)$. Since $\mu_{jd}$ lies in $[\mathrm{lo}_d,\mathrm{hi}_d]$, that interval contains a half-width $w_d/2$ on one side of $\mu_{jd}$, giving a factor at least $\Phi(w_d/2\sigma_+)-\tfrac12>0$. The product over coordinates, $\prod_{d}\bigl(\Phi(w_d/2\sigma_+)-\tfrac12\bigr)$, is therefore bounded below by a constant depending only on the box and $\sigma_+$ and is smooth in $(\mu_j,\Sigma)$ with bounded derivatives there. (The exact joint mass is likewise bounded below, as a positive continuous function of $(\mu_j,\Sigma)$ on a compact set, so nothing in the argument depends on which normalizer is used; the box is needed to define the implementation's constant, not to bound it.) Hence every $q\in\mathcal{Q}$ (a set of positive \emph{functions}, not necessarily normalized; Assumption~\ref{ass:class}) obeys $\delta\pi_-\le q\le q_+<\infty$. The map $q\mapsto q(\theta)$ is Lipschitz in $p_q$ uniformly in $\theta$ and depends affinely on the prior share $\lambda$, with coefficients bounded by $\pi_+$ and $q_+$. Because $q$ is bounded away from $0$, $1/q$ is also Lipschitz. With $K_\epsilon$ Lipschitz in $\epsilon$ (Assumption~\ref{ass:kernel}) and $\pi h$ bounded, \eqref{eq:app-class} satisfies the hypotheses of Lemma~\ref{lem:uslln}. Note that on $\{1/j\le\epsilon_\infty\}$ we have $(\epsilon_n,\bar q_n)\in P_{1/j}$ for every $n$, and that $\bar q_\infty\ge\delta\pi$ by taking uniform limits.

\subsection{Proof of Theorem~\ref{thm:consistency} and Corollaries~\ref{cor:tilt}--\ref{cor:exact}}\label{app:theory-consistency}
The prior floor bounds every weight (Lemma~\ref{lem:bounded}). For a fixed bandwidth and denominator, the weighted average minus its predictable compensator is an average of bounded martingale differences and vanishes (Lemma~\ref{lem:uslln}). The class of bandwidths and denominators is compact and the weights are Lipschitz in its parameters, so this convergence is uniform over the class. Uniformity allows the run-dependent pair $(\epsilon_n,\bar q_n)$ to be substituted, after which Lemma~\ref{lem:condmean} identifies the compensator as the tilted integral.

Work on the event $\{1/j\le\epsilon_\infty\}$ for a fixed $j$, so that $\widehat p_n:=(\epsilon_n,\bar q_n)\in P_{1/j}$ for every $n$; the conclusions then follow on the union over $j$, which is $\{\epsilon_\infty>0\}$. Fix a bounded measurable $h$ and define, for each \emph{fixed} $p\in P_{1/j}$, $\Gamma_i(p):=\mathbb{E}_{i-1}[\psi_p(\xi_i)]$ and $\bar\Gamma_n(p):=\tfrac1n\sum_{i\le n}\Gamma_i(p)$. On the almost-sure event of Lemma~\ref{lem:uslln} for $P_{1/j}$ the bound holds simultaneously for all $p\in P_{1/j}$, so the \emph{function} $\bar\Gamma_n$ may be evaluated at the random $\widehat p_n$:
\begin{equation}
\frac1n\sum_{i\le n}W_{i,n}h(\theta_i)
=\bar\Gamma_n(\widehat p_n)+o(1)
\overset{\text{Lem.\ \ref{lem:condmean}}}{=}\int_\Theta \pi\,h\,L_{\epsilon_n}\,\frac{\bar q^\star_n}{\bar q_n}\,d\theta+o(1)
\quad\text{a.s.},
\label{eq:app-consistency-main}
\end{equation}
using $\tfrac1n\sum_{i\le n}\tilde q_i=\bar q^\star_n$. This is \eqref{eq:tracking}. The middle term is the function $\bar\Gamma_n(\cdot)$ evaluated at $\widehat p_n$; it must not be read as $\mathbb{E}_{i-1}[\psi_{\widehat p_n}(\xi_i)]$, because $\widehat p_n$ depends on draws after $i$ and is not $\mathcal{F}_{i-1}$-measurable, so Lemma~\ref{lem:uslln} does not control that conditional expectation. For \eqref{eq:tracking-ratio}, apply \eqref{eq:tracking} with $h=f$ and with $h\equiv1$, and write $A_n=a_n+o(1)$ and $B_n=b_n+o(1)$, with $A_n,B_n$ the two weighted averages and $a_n,b_n$ the two integrals. If $\liminf_nB_n>0$ then $b_n\ge c>0$ for all large $n$, and $|A_n/B_n-a_n/b_n|\le|A_n-a_n|/B_n+|a_n/b_n|\,|b_n-B_n|/B_n\to0$, since $|a_n/b_n|\le\|f\|_\infty$. This proves the theorem.

\emph{Corollary~\ref{cor:tilt}.} Split the integral in \eqref{eq:tracking} against $\int\pi hL_{\epsilon_\infty}r$:
\[
\Bigl|\int \pi h L_{\epsilon_n}\tfrac{\bar q^\star_n}{\bar q_n}-\int\pi hL_{\epsilon_\infty}r\Bigr|
\le \|h\|_\infty\,\pi_+|\Theta|\,\Bigl[\,R\,L_K(1/j)\,|\epsilon_n-\epsilon_\infty|
+\Bigl\|\tfrac{\bar q^\star_n}{\bar q_n}-r\Bigr\|_\infty\Bigr],
\]
on the event $\{1/j\le\epsilon_\infty\}$ of the localization above, using $|L_{\epsilon_n}-L_{\epsilon_\infty}|\le L_K(1/j)|\epsilon_n-\epsilon_\infty|$ (Assumption~\ref{ass:kernel} and Jensen; every $\epsilon_n$ and $\epsilon_\infty$ lies in $[1/j,\epsilon_0]$ there) and $L_{\epsilon_\infty}\le1$. Here $R=r_++1$ bounds $\bar q^\star_n/\bar q_n$ for all large $n$ by \eqref{eq:fidelity}, with no structural bound on $\tilde q_i$ needed. Both bracketed terms vanish, the first because $\epsilon_n\to\epsilon_\infty$ and the second by \eqref{eq:fidelity}. Hence $\tfrac1n\sum_iW_{i,n}h(\theta_i)\to\int\pi hL_{\epsilon_\infty}r$ a.s.\ on the event. Applying this with $h=f$ and with $h\equiv1$ and dividing, which is legitimate because $Z_r=\int\pi L_{\epsilon_\infty}r\ge r_-Z_{\epsilon_\infty}>0$, gives $\widehat\pi_n(f)\to\pi^r_{\epsilon_\infty}(f)$. Corollary~\ref{cor:exact} is the case $r\equiv1$. (Theorem~\ref{thm:consistency} does not use the monotonicity of $\epsilon_n$, only $\epsilon_\infty>0$; for a schedule that may loosen one localizes on $\{\epsilon_n\ge1/j\text{ for all large }n\}$ instead.)\hfill$\square$

\begin{remark}\label{rem:weaker-c4}
Theorem~\ref{thm:consistency} uses nothing about the form of $\tilde q_i$, and Corollary~\ref{cor:tilt} adds only \eqref{eq:fidelity}. Both are therefore statements about the algorithm as implemented, not about an idealized variant of it: any sampler whose draw mixture $\bar q^\star_n$ is tracked by the denominator $\bar q_n$ in the ratio sense of \eqref{eq:fidelity} is covered by the corollary, the archive rule, the capped rejection and the underflow redraws included.
\end{remark}

\subsection{Deadline censoring}\label{app:theory-censoring}
At the wall-clock deadline the run has proposed $n'$ candidates, of which the set $R$ still in flight, $|R|\le W$, has no discrepancy; the estimate is computed from the $n=n'-|R|$ completed ones. Let $\widehat\pi_{n'}(f)$ be the estimate the launched set would give at the same bandwidth and denominator, i.e.\ with the sums in \eqref{eq:posterior-estimator} extended over all $n'$ candidates, the unobserved discrepancies included.
\begin{proposition}[deadline censoring]\label{prop:censoring}
Under Assumptions~\ref{ass:kernel}--\ref{ass:defensive}, for every bounded $f$,
\[
\bigl|\widehat\pi_n(f)-\widehat\pi_{n'}(f)\bigr|
\le\frac{2\|f\|_\infty\,|R|/\delta}{\sum_{i\le n'}W_{i,n}}
\le\frac{2\|f\|_\infty\,W}{\delta\,n\,\overline W_n},
\qquad \overline W_n:=\tfrac1n\textstyle\sum_{i\le n}W_{i,n} .
\]
The censoring is therefore negligible whenever $W/n\to0$ and the average weight stays bounded below, and it is $o(n^{-1/2})$ whenever $W=o(\sqrt n)$.
\end{proposition}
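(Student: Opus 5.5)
The bound compares two self-normalized ratios that share weights on the $n$ completed indices and differ only through the at most $W$ in-flight candidates in $R$. Write $N:=\sum_{i\le n}W_{i,n}f(\theta_i)$ and $B:=\sum_{i\le n}W_{i,n}$ for the completed sums. Write $N_R:=\sum_{i\in R}W_{i,n}f(\theta_i)$ and $B_R:=\sum_{i\in R}W_{i,n}$ for the contributions of the in-flight set, evaluated at the same bandwidth $\epsilon_n$ and denominator $\bar q_n$ as the definition of $\widehat\pi_{n'}$ prescribes. Then
\[
\widehat\pi_n(f)=\frac{N}{B},\qquad \widehat\pi_{n'}(f)=\frac{N+N_R}{B+B_R}.
\]
The first step is Lemma~\ref{lem:bounded}, which applies to the in-flight candidates as well, because its proof uses only $\bar q_n\ge\delta\pi$ and $K_\epsilon\le1$, whatever the unobserved $\rho_i$ are. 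This gives $0\le B_R\le|R|/\delta$ and $|N_R|\le\|f\|_\infty|R|/\delta$.

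Next comes the elementary ratio-perturbation identity
\[
\frac{N}{B}-\frac{N+N_R}{B+B_R}=\frac{N B_R-B N_R}{B(B+B_R)}=\frac{\widehat\pi_n(f)\,B_R-N_R}{B+B_R}.
\]
Using $|\widehat\pi_n(f)|\le\|f\|_\infty$, a convex combination of values of $f$, the numerator is at most $\|f\|_\infty B_R+|N_R|\le 2\|f\|_\infty|R|/\delta$. The denominator is $B+B_R=\sum_{i\le n'}W_{i,n}$. This gives the first inequality. If that sum is zero, the estimates are undefined, and I would state the bound with the convention that it is vacuous.

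For the second inequality, I would drop the nonnegative term $B_R\ge0$ to get $\sum_{i\le n'}W_{i,n}\ge B=n\overline W_n$, and use $|R|\le W$, the number of workers. By the definition of the execution model at most one candidate per worker is in flight. The concluding sentence then follows directly. If $W/n\to0$ and $\overline W_n$ stays bounded below, the right side tends to zero. If $W=o(\sqrt n)$, then $W/n=o(n^{-1/2})$, so the right side is $o(n^{-1/2})$ under the same lower bound on $\overline W_n$.

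There is no real obstacle. The only point that needs care is confirming that Lemma~\ref{lem:bounded} covers indices in $R$ even though their discrepancies are unobserved. It does, because the weight bound is sure and pathwise and never uses the value of $\rho_i$. I would also note that the statement compares the two estimators at the fixed pair $(\epsilon_n,\bar q_n)$, so no martingale argument is needed.
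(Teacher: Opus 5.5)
Your proposal is correct and follows the paper's proof: the same ratio-perturbation identity $(\widehat\pi_n(f)B_R-N_R)/(B+B_R)$, Lemma~\ref{lem:bounded} applied to the in-flight set to get $B_R\le|R|/\delta$, and $B+B_R\ge B=n\overline W_n$; the paper bounds the numerator as $\sum_{i\in R}W_{i,n}|f(\theta_i)-\widehat\pi_n(f)|\le 2\|f\|_\infty B_R$ instead of splitting it, which is the same estimate. One cosmetic point: candidates are indexed by proposal time, so the completed sums run over $\{i\le n': i\notin R\}$, as the paper writes them, and not literally over $\{1,\dots,n\}$.
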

\begin{proof}
Write $W_i$ for $W_{i,n}$, $A=\sum_{i\le n',\,i\notin R}W_if(\theta_i)$, $B=\sum_{i\le n',\,i\notin R}W_i$, $a=\sum_{i\in R}W_if(\theta_i)$ and $b=\sum_{i\in R}W_i$. Then $\widehat\pi_{n'}(f)-\widehat\pi_n(f)=(A+a)/(B+b)-A/B=\bigl(a-(A/B)\,b\bigr)/(B+b)$, and $|a-(A/B)b|\le\sum_{i\in R}W_i\,|f(\theta_i)-A/B|\le2\|f\|_\infty\,b$. Lemma~\ref{lem:bounded} gives $b\le|R|/\delta\le W/\delta$, and $B+b\ge B=n\overline W_n$.
\end{proof}
The launched set is the one the filtration of Lemma~\ref{lem:adapted} indexes, so Theorem~\ref{thm:consistency} and its corollaries apply to $\widehat\pi_{n'}$, and the proposition transfers them to the completed-set estimate the implementation reports. The length bias of deadline-limited sampling (\S\ref{sec:limitations}) thus acts on the estimate in two ways. The bound controls the direct effect through the at most $W$ censored candidates; the indirect effect acts through the proposal sequence, whose archive is formed from arrived particles and so leans toward regions that simulate quickly. This second effect is part of $\tilde q_i$ and is corrected by the importance weighting up to $r$. On the five Cellular Potts production replicates, $W=48$ and $n\approx12{,}900$, so $W/n\approx0.4\%$; the worst-case factor $1/\delta\approx42$ makes the displayed bound loose at that budget, and the drain-after-deadline experiment of Appendix~\ref{app:moved}, which lets the in-flight simulations finish, measures the effect on the posterior mean at $0.013$ or less.

\begin{remark}[growing snapshot count]\label{rem:growing-m}
Nothing above requires $m$ to be fixed except the compactness and fixed dimension $D$ of $P_q$, and the fixed floor $\delta$. The implementation ties $\delta$ to $m$ as $0.5/(m{+}1)$, so the two must be decoupled before $m$ is allowed to grow, since Lemma~\ref{lem:bounded}'s bound is $1/\delta$. With $\delta$ fixed and $m=m_n\to\infty$, run the proof of Lemma~\ref{lem:uslln} with an $n$-dependent net: the union bound costs $D_n\log(1/\eta_n)$ with $D_n=\Theta(m_n)$, and the Lipschitz constant over the mixture weights grows at most polynomially in $m_n$, so choosing $\eta_n\to0$ fast enough leaves the Borel--Cantelli step summable for any $m_n=o(n/\log n)$. In that regime factor (b) of $r$ (\S\ref{sec:theory-r}) vanishes whenever $V_n/m_n\to0$, with $V_n=\sum_{i<n}\|q_{i+1}-q_i\|_\infty$ the total variation of the proposal path: the draw-proportional $m_n$-point mixture is then a converging quadrature of the empirical mixture. The retroactive pass costs $\mathcal{O}(n\,m_n\,k)$.
\end{remark}

\subsection{Proof of Proposition~\ref{prop:tilt}}\label{app:theory-tilt}
Write $\pi$ for $\pi_{\epsilon_\infty}$. For bounded $f$, since $\pi^r(f)=\pi(fr)/\pi(r)$ and $\pi(r-\pi(r))=0$,
\[
\pi^r(f)-\pi(f)=\frac{\pi\bigl(f\,(r-\pi(r))\bigr)}{\pi(r)}
=\frac{\pi\bigl((f-c)\,(r-\pi(r))\bigr)}{\pi(r)}\quad\text{for every constant }c .
\]
The supremum of the modulus over $\|f\|_\infty\le1$ is attained at $f=\operatorname{sign}(r-\pi(r))$ and equals $\pi(|r-\pi(r)|)/\pi(r)$; halving gives the identity. For the bound, the first form also reads $[\pi(f(r-1))-\pi(f)\pi(r-1)]/\pi(r)$, whose modulus is at most $2\|f\|_\infty\zeta_0/(1-\zeta_0)$ because $\pi(r)\ge1-\zeta_0$; the supremum over $\|f\|_\infty\le1$, halved, is $\zeta_0/(1-\zeta_0)$. Under the set hypothesis $\zeta_0\le\eta_A\pi(A)+M\pi(A^c)\le\zeta$, and $x\mapsto x/(1-x)$ increases on $[0,1)$. For $\zeta_0\ge1$ only the trivial $\|\cdot\|_{\mathrm{TV}}\le1$ survives.\hfill$\square$

\begin{lemma}[prior-floor factor]\label{lem:floor}
With $u=q_\infty/\pi$ the limiting proposal-to-prior ratio and $\nu_\infty<\delta$ the limiting bootstrap share, the prior floor's contribution to $r$ in \S\ref{sec:theory-r}(a) is
\[
r_{\mathrm{floor}}-1=\frac{\nu_\infty+(1-\nu_\infty)u}{\delta+(1-\delta)u}-1=\frac{(\delta-\nu_\infty)(u-1)}{\delta+(1-\delta)u},
\]
and on $\{u\ge c\}$, splitting at $u=1$ and bounding each branch, $|r_{\mathrm{floor}}-1|\le(\delta-\nu_\infty)\max\{(1-\delta)^{-1},(\delta+(1-\delta)c)^{-1}\}$, which is the $\eta_A$ of Proposition~\ref{prop:tilt} for $A=\{u\ge c\}$. The factor is identically $1$ when $\nu_\infty\ge\delta$, and $\to1$ as $\delta\to0$.
\end{lemma}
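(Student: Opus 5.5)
The identity is pure algebra on the two mixtures, and the bound follows from a branch split at $u=1$. I would take the expression for $r_{\mathrm{floor}}$ as given by the definition in \S\ref{sec:theory-r}(a). The asymptotic sample mixture there has prior share $\nu_\infty$ and proposal share $1-\nu_\infty$. The denominator \eqref{eq:snapshot-denominator} has prior share $\lambda=\max(\nu_\infty,\delta)=\delta$, since $\nu_\infty<\delta$. Dividing both numerator and denominator by $\pi>0$ (Assumption~\ref{ass:class}) gives the displayed ratio in terms of $u=q_\infty/\pi$.

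\textbf{The identity.} Subtract $1$ and put everything over the common denominator $\delta+(1-\delta)u$. The numerator becomes
\[
\nu_\infty+(1-\nu_\infty)u-\delta-(1-\delta)u=(\nu_\infty-\delta)-(\nu_\infty-\delta)u=(\delta-\nu_\infty)(u-1),
\]
which is the claimed formula. The denominator is at least $\delta>0$, so the expression is well defined.

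\textbf{The bound on $\{u\ge c\}$.} I would split at $u=1$.

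\emph{Case $u\ge1$.} The map $u\mapsto (u-1)/(\delta+(1-\delta)u)$ is increasing, since its derivative has numerator $\delta+(1-\delta)=1>0$. It tends to $1/(1-\delta)$ as $u\to\infty$. Hence $|r_{\mathrm{floor}}-1|\le(\delta-\nu_\infty)/(1-\delta)$ on this branch.

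\emph{Case $c\le u<1$.} Here $|u-1|=1-u\le1$. The denominator is increasing in $u$, because $1-\delta>0$, so it is at least $\delta+(1-\delta)c$. This gives the bound $(\delta-\nu_\infty)/(\delta+(1-\delta)c)$. (Monotonicity of the whole ratio would give the sharper $(1-c)$ in the numerator, but the weaker form is what the statement records.)

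Taking the maximum of the two branches yields the stated bound. Since it is a bound on $|r-1|$ over $A=\{u\ge c\}$, it serves as $\eta_A$ in Proposition~\ref{prop:tilt}.

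\textbf{The degenerate cases.} When $\nu_\infty\ge\delta$ we have $\lambda=\nu_\infty$, so the two mixtures coincide and the ratio is identically $1$. As $\delta\to0$ we also have $\nu_\infty<\delta\to0$, so the prefactor $\delta-\nu_\infty\le\delta$ vanishes. For $u$ bounded away from zero the maximum in the bound stays bounded, so the factor tends to $1$ uniformly on $\{u\ge c\}$. Near $u=0$ the convergence holds only pointwise, and I would state that explicitly.

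The only delicate point is the branch $u<1$. The bound there degrades as $c\to0$, because the denominator can be as small as $\delta$. This is exactly why the lemma restricts to $\{u\ge c\}$ and leaves the complement to the $M\,\pi_{\epsilon_\infty}(A^c)$ term of Proposition~\ref{prop:tilt}. On $A^c$ itself, $|r-1|\le(\delta-\nu_\infty)/\delta\le1$, which supplies $M$.
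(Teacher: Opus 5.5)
Your proof is correct and follows the paper's argument. The paper compares the limiting draw mixture $\nu_\infty\pi+(1-\nu_\infty)q_\infty$ with the floored denominator $\delta\pi+(1-\delta)q_\infty$, divides by $\pi$ and subtracts one, and your branch split at $u=1$ spells out the bounds the statement only describes.

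One small correction to your closing caveat. At $u=0$ the factor equals $\nu_\infty/\delta$, which need not tend to $1$ (take $\nu_\infty=0$), so the $\delta\to0$ limit holds pointwise only on $\{u>0\}$, not near or at $u=0$. This is moot when $q_\infty$ is a limit of the truncated Gaussian mixtures of Assumption~\ref{ass:class}: those are bounded below on the box, which keeps $u$ away from zero.
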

\begin{proof}
Evaluate \eqref{eq:fidelity} for the floor alone: the draw mixture is $\nu_\infty\pi+(1-\nu_\infty)q_\infty$ and the denominator $\delta\pi+(1-\delta)q_\infty$; divide both by $\pi$ and subtract one.
\end{proof}
\subsection{Status of each assumption in the implementation}\label{app:theory-status}
\begin{table}[h]
\caption{What the implementation guarantees, what it assumes, and what it violates. ``By construction'' means the code enforces the property for every run, not that it was checked post hoc.}\label{tab:assumption-status}
\centering
\footnotesize
\setlength{\tabcolsep}{3pt}
\begin{tabular}{@{}p{0.16\textwidth}p{0.13\textwidth}p{0.675\textwidth}@{}}
\toprule
Assumption & Status & Basis \\
\midrule
\ref{ass:kernel} kernel, Lipschitz part & by construction & Gaussian and Epanechnikov are peak-normalized and Lipschitz in $\epsilon$ on every $[a,\epsilon_0]$, with $L_K(a)=2/ea$ and $2/a$. The hard indicator is not, and is excluded from the theory. \\
\ref{ass:kernel} $Z_{\epsilon_\infty}>0$ & holds for Gaussian & The Gaussian kernel is strictly positive, so $L_{\epsilon_\infty}>0$ and the limiting target is non-degenerate whatever the discrepancy distribution. For a compactly supported kernel (Epanechnikov) it requires the achievable discrepancies to reach inside the support, which we do not verify. \\
\ref{ass:class} denominator class & by construction & The prior support is a box and the archive lies in it, so component means do; the covariance carries an additive jitter $10^{-9}\mathrm{tr}(\widehat\Sigma)/d+10^{-12}\bar b^2$ ($\bar b^2$ the mean squared box width), whose second term is a run-independent floor, giving $\sigma_-^2>0$, and $\sigma_+^2$ follows from the archive lying in a bounded box. The truncation \emph{normalizer} is the product of per-axis in-box masses, so with a correlated covariance $\bar q_n$ does not integrate to one. Assumption~\ref{ass:class} does not require that: it constrains a positive function, and the discrepancy is carried by $r$ instead (\S\ref{sec:theory-r}(c)). \\
\ref{ass:filtration} execution model & holds; implementation property & (B1), (B2): the propagator draws from its own stream, seeded once per rank, and every simulator call receives a fresh seed derived from the run seed, the rank and a per-rank call counter, so the two streams are independent in the sense in which pseudo-random streams are, and no two calls share a seed. (B3): a worker proposes when its own simulation completes and builds the proposal from the results that have reached it; delivery is asynchronous message passing, exogenous to both streams. Lemma~\ref{lem:adapted} then gives the adapted filtration with $\tilde q_i$ the emitted density. What asynchrony leaves in $r$ is the arrival-order reconstruction of the snapshots, a boundary of at most $W$ particles per snapshot (\S\ref{sec:theory-r}(d)). \\
\ref{ass:defensive} prior floor & by construction & The retroactive denominator floors the prior component's mass at $\delta=0.5/(m{+}1)$ (Appendix~\ref{app:implementation}), so $\bar q_n\ge\delta\pi$ and Lemma~\ref{lem:bounded} bounds every posterior weight by $1/\delta$, independently of what the sampler does. \\
\ref{ass:stab} fidelity \eqref{eq:fidelity} & assumed; two of its four factors measured & The one place the sampler enters for Corollary~\ref{cor:tilt}; Theorem~\ref{thm:consistency} does not use it. Decomposed into four factors in \S\ref{sec:theory-r}, two of them estimated from stored histories rather than bounded; since the factors multiply, the measured part is a partial accounting rather than a bound on the whole. \\
$\{\epsilon_\infty>0\}$ (the event of the limit statements) & \textbf{floor never reached} on g-and-k and Cellular Potts; not met on Gaussian mean and Lotka--Volterra & The event concerns the reporting bandwidth only; the sampler's own bandwidths enter through $\tilde q_i$, on which no condition is placed. A strictly positive \texttt{min\_tol} forces $\epsilon_\infty\ge\texttt{min\_tol}>0$, since $\{\epsilon_n\}$ is non-increasing and clipped from below, which is all Theorem~\ref{thm:consistency} and Corollary~\ref{cor:tilt} need. The floor acts only through that clip, so a run whose final bandwidth lies above it is path-identical to the floored run. Final bandwidths over $\epsilon_0$: g-and-k $2.2$--$3.0\times10^{-3}$, Cellular Potts production $1.3$--$3.2\times10^{-3}$ (and $1.2\times10^{-3}$ or more at $\epsilon_0=10$; the $80^3$ runs never left $\epsilon_0$), g-and-k calibration trials $1.1\times10^{-2}$--$0.55$ (a replayed sample of $30$ of the $1{,}000$, on $6$ workers rather than $48$), all above a floor of $10^{-4}\epsilon_0$; Gaussian mean $6$--$7\times10^{-6}$ and Lotka--Volterra $1.3$--$2.4\times10^{-7}$, below it. The bandwidth a finite budget reaches is not an asymptotic $\epsilon_\infty$: no run reached its floor, and the effective-sample-size ceiling (\S\ref{sec:theory}) is the consequence. \\
\bottomrule
\end{tabular}
\end{table}

\section{Implementation Details}\label{app:implementation}

\paragraph{Relation to classical ABC-SMC.}\label{sec:method-comparison}
Table~\ref{tab:method-comparison} sets the method against the generation-staged ABC-SMC/PMC implementation we compare with (the pyABC baseline), property by property. None of the listed properties is intrinsic to ABC-SMC as a family (smooth kernels and richer proposals exist), but each characterizes pyABC as configured. Because all state is a pure function of the history (last row), there is nothing to synchronize between workers, which is what lets the workers run without waiting for one another (\S\ref{sec:method-overview}).
\begin{table}[h]
\caption{A typical generation-staged ABC-SMC/PMC implementation (the pyABC baseline used here) versus the proposed generation-free method. The listed properties characterize pyABC as configured, not ABC-SMC as a family. The running sampler does carry the online snapshot buffer of \S\ref{sec:method-amis}; the posterior estimate rebuilds it from the history.}\label{tab:method-comparison}
\centering
\begin{tabular}{@{}lll@{}}
\toprule
Property & Classical ABC-SMC & This work \\
\midrule
Update style & generation, barrier-synchronized & event-driven (single-arrival), no barrier \\
Archive & explicit, generational & reconstructed, sliding top-$k$ \\
ABC likelihood & hard threshold & smooth kernel \\
Importance weight & against $q_{t-1}$ only & balance heuristic over $\mathcal{S}$ \\
Archive state & mutable, carried & none; reconstructed per call \\
Posterior estimate & final population & pure function of the history \\
\bottomrule
\end{tabular}
\end{table}

\paragraph{Carried state and restart.} Between calls the running sampler carries only the kept snapshots of \S\ref{sec:method-weights} and the scheduler's two controls: the search cadence and the acceptance gate described below. A restart therefore reproduces the estimator over whatever history was evaluated, not the trajectory the run would have taken (kill-and-resume, Appendix~\ref{app:protocol}).

\paragraph{Hyperparameters.} Table~\ref{tab:hyperparameters} lists every setting of the sampler with its option name in the implementation and the value used in the experiments.
\begin{table}[h]
\caption{Settings of the asynchronous sampler. Option names are those of the implementation; values are those of the production experiments, except where a study sweeps them (Appendix~\ref{app:moved}).}\label{tab:hyperparameters}
\centering
\footnotesize
\begin{tabular}{@{}>{\raggedright\arraybackslash}p{0.07\textwidth}>{\raggedright\arraybackslash}p{0.30\textwidth}>{\raggedright\arraybackslash}p{0.22\textwidth}>{\raggedright\arraybackslash}p{0.32\textwidth}@{}}
\toprule
Symbol & Meaning & Option & Value \\
\midrule
$k$ & archive size & \texttt{k} & $100$ ($200$ in the four-dimensional calibration runs) \\
$\epsilon_0$ & starting bandwidth & \texttt{tol\_init} & $5$ (Gaussian mean), $10$ (g-and-k), $5\times10^{5}$ (Lotka--Volterra), $0.1$ (Cellular Potts; $10$ in its earlier parameterization) \\
$K_\epsilon$ & ABC kernel & \texttt{kernel} & Gaussian \\
$s$ & scale of the perturbation covariance $\Sigma_n$ & \texttt{perturbation\_scale} & $0.8$ \\
$S$ & kept snapshots & \texttt{amis\_snapshots} & $20$ \\
 & snapshot cadence, in calls & \texttt{amis\_interval} & $k$ \\
 & ESS retention per search & \texttt{ess\_target} & $0.95$ \\
 & largest tightening per search & \texttt{max\_tighten\_factor} & $0.5$ \\
 & search cadence, in calls per rank & \texttt{bisect\_interval} & $k$ \\
$\mu$ & acceptance gate: hold until $k+\mu$ particles lie within $\epsilon$ & \texttt{additional\_needed\_inds} & $k$ \\
 & bandwidth floor; a positive value puts the run on $\{\epsilon_\infty>0\}$ (\S\ref{sec:theory}) & \texttt{min\_tol} & $10^{-4}\epsilon_0$ recommended; unset in the reported runs, where that value is never reached on g-and-k or Cellular Potts (\S\ref{sec:theory}) \\
$\delta$ & prior floor in $\bar q_n$ & --- & $0.5/(m{+}1)$ \\
\bottomrule
\end{tabular}
\end{table}

\paragraph{Glyphs of Fig.~\ref{fig:algorithm-strip}.} The toy problem has $k=N=12$. Row (a) is textbook PMC with a quantile schedule and a hard threshold, weights $\omega^{(t+1)}_i\propto\pi(\theta_i)/q_t(\theta_i)$ normalized over the $N$ kept. Point sizes vary where weights do (a1--a3 by $\omega^{(t)}$, a6 by $\omega^{(t+1)}$, b3 by the archive weights $\tilde W_j$); shading in b1 is the discrepancy (light: far); dotted ellipses are $\Sigma_t$ and $\Sigma_n$; gray contours are the current proposal in a5 and b5 and the stored snapshots in b6 (oldest lightest); the dashed ellipse in row (a) is the hard acceptance region; the inset in column 2 is the kernel over $\rho$, the Gaussian $K_{\epsilon_n}$ in (b) against the indicator in (a). The thin box in every panel is the prior box, identical throughout. Bootstrap draws are prior draws with $w^\star=1$ and no stored bandwidth.

\paragraph{Bandwidth schedule.} The schedule is a function of the evaluated history and is \emph{monotone by construction}: the propagator clips any proposed bandwidth above the history-reconstructed running minimum $\epsilon_{\text{hist}}$ (\S\ref{sec:method-tolerance}), so the effective bandwidth can only decrease. Let $\mu$ denote a margin of additional particles (default $\mu=k$); it is unrelated to the snapshot count $m$ of \S\ref{sec:theory}. The bandwidth is held until $k+\mu$ evaluated particles lie within it (the acceptance gate). After that, a search proposes the largest $\epsilon$ at which the kernel-weighted effective sample size of those particles retains a target fraction (default $0.95$) of its current value, tightening by at most \texttt{max\_tighten\_factor} per search. The search runs on a rank's first call, then once per \texttt{bisect\_interval} ($=k$) of its own calls, and again whenever the rank's cached history is rebuilt after an out-of-order arrival; asynchronous arrivals cause such rebuilds and lockstepped ones do not, which is why the synchronized variant's bandwidth tightens less often than the asynchronous sampler's at equal call counts (\S\ref{sec:results-barrier}). The implementation also offers quantile, geometric-decay and acceptance-rate rules (\texttt{scheduler\_type}); they act only under the hard kernel, and with a smooth kernel each reduces to the rule above, so the option has no effect in any reported run.

\paragraph{Proposal covariance.} The archive $A_n$ is the top-$k$ history members by smallest discrepancy. Component weights are formed in log-space as $\tilde W_j \propto w_j\,K_{\epsilon_n}(\rho_j)$ and normalized (a uniform fallback is used if all smooth weights underflow). The perturbation covariance is $\Sigma_n = s\,[\,\widehat{\mathrm{Cov}}_{\tilde W}(A_n) + \lambda_n I\,]$ with $s$ the \texttt{perturbation\_scale} and a two-part jitter $\lambda_n = 10^{-9}\,\mathrm{tr}\,\widehat{\mathrm{Cov}}_{\tilde W}(A_n)/d + 10^{-12}\,\bar b^2$, where $\bar b^2$ is the mean squared prior-box width; it is Cholesky-factored once per call. The first term is scale-aware, so a tightly converged archive (marginal standard deviation ${\sim}10^{-4}$) is not swamped by the floor as a fixed $10^{-6}$ would swamp it; the second is a run-independent absolute floor that keeps the matrix factorable for a degenerate archive and supplies the uniform lower spectral bound Assumption~\ref{ass:class} needs (Table~\ref{tab:assumption-status}). Each mixture component is truncated to the prior box, with its log normalizing mass computed as the product of the per-axis in-box masses. That is exact for a diagonal covariance, in which case the proposal integrates to one on the support; for a correlated covariance it is an approximation, so the evaluated proposal is a positive function that need not integrate to one. This is why \S\ref{sec:theory} states its denominator assumption for positive functions and assigns the discrepancy to factor~(c) of $r$ (\S\ref{sec:theory-r}).

\paragraph{Matched pyABC acceptor.} For pyABC, pyABC's \texttt{UniformAcceptor} is replaced, for smooth kernels, by a probabilistic-rejection acceptor that accepts a candidate of discrepancy $\rho$ with probability $K_{\epsilon}(\rho)$ (peak-normalized so $K_{\epsilon}(0)=1$), using the \emph{same} kernel $K_{\epsilon}$ as the asynchronous sampler; the hard kernel falls back to \texttt{UniformAcceptor}. The rejection step of each replicate draws from an independently seeded random-number generator. This is the sense in which the kernel and acceptor are matched across methods (\S\ref{sec:experiments-baselines}).

\paragraph{Bounded-memory posterior estimator.} The retroactive AMIS estimator \eqref{eq:posterior-estimator} evaluates the snapshot denominator $\bar q_n$ at each of the $n$ evaluated particles. The denominator is a draw-proportional mixture of $m\le S{+}1$ history-reconstructed proposals ($21$ at the $S=20$ of every experiment) and a defensive prior component (mass floored at $\delta=0.5/(m{+}1)$). The total cost is $\mathcal{O}(n\,m\,k)=\mathcal{O}(nk)$ with $m$ fixed. The snapshot indices are spread over the $n_{\text{arch}}$ archive-phase calls: $0,\text{step},2\,\text{step},\dots$, with $\text{step}=\max(1,\lfloor n_{\text{arch}}/S\rfloor)$. The most recent archive-phase proposal is added if it is not already among them. Hence $m\le S{+}1$ rather than $S$, and $m=2$ when $S=0$. Each is rebuilt from the history prefix the call saw, so the pass never reads the online snapshot buffer of \S\ref{sec:method-amis}. For inexpensive simulators the history reaches $\mathcal{O}(10^7)$ particles, so the pass is accumulated in chunks of $2^{16}=65{,}536$ history points: peak memory is $\mathcal{O}(\text{chunk}\cdot m\cdot k)$ rather than $\mathcal{O}(n)$. Each particle's denominator is evaluated independently within a single chunk (there is no log-sum-exp reduction \emph{across} chunk boundaries), so the result is bit-identical to the unchunked computation. Benchmarked at $k=100$, $m=20$ over $n$ from $2.5\times10^4$ to $10^6$, the pass costs $49.3$~s per $10^6$ particles on a single core and is linear in $n$, so the $3\times10^7$-particle Gaussian history extrapolates to ${\approx}25$~minutes; peak memory is flat in $n$, as the chunking predicts. The pass runs after the timed budget and is excluded from the throughput measurements; for an inexpensive simulator it is comparable to the timed budget itself, for expensive simulators negligible against the simulation time.

\section{Additional Experimental Protocol}\label{app:protocol}

\paragraph{Effective simulating ranks.} A ``$W$-worker'' point allocates $W$ physical cores to both samplers. Every rank of the asynchronous sampler evaluates; pyABC's MPI sampler is root-driven, so rank $0$ dispatches without simulating and pyABC has $W{-}1$ simulating ranks. We report the allocated $W$ for both and do not correct the resulting advantage of $1/W$ for the asynchronous sampler ($2.1\%$ at $W{=}48$, $0.3\%$ at $384$).

\paragraph{Seeding.} Each experiment derives its replicate seeds deterministically from a single base seed via \texttt{make\_seeds}: a \texttt{random.Random(base)} stream emits unique non-negative 31-bit integers, one per replicate, portable across platforms and independent of any global random-number-generator state. Derived seeds (observed-data generation, the pyABC rejection step) use a structured BLAKE2b hash (\texttt{stable\_seed}) of their inputs, so identical inputs reproduce identical seeds across fresh and resumed (\texttt{--extend}) runs.

\paragraph{Crash recovery (kill-and-resume).} On the Gaussian-mean benchmark ($4$ workers, a $1.2\times10^{5}$-simulation budget) a run hard-killed partway through inference and relaunched with an identical command reloads the periodic history checkpoint, completes the same total simulation budget with no duplicated records, and recovers the analytic posterior mean (absolute error $7.6\times10^{-3}$, against $7.5\times10^{-4}$ for an uninterrupted run). It is not a bit-identical replay (the two posterior means differ by $6.9\times10^{-3}$), because the sampler's random-number generator is re-seeded rather than checkpointed, MPI arrival order is not deterministic, and the online AMIS snapshot buffer is rebuilt empty on restart. The post-restart proposal sequence therefore follows a different, equally valid adaptation trajectory: recovery yields a valid posterior over the actual evaluated history, not the identical one (\S\ref{sec:method-constraints}).

\paragraph{Cluster and wall-clock control.} Experiments run on JUWELS (Forschungszentrum J\"ulich) under ParaStation MPI, 48 cores per node, with jobs queued through SLURM. All wall-clock-budgeted methods stop at a fixed wall-clock budget enforced uniformly by a monotonic-clock deadline. Once any rank trips the deadline, it serializes its current state into the records and returns, so a partial final generation is recorded rather than discarded. The simulation-limited studies marked $\dagger$ in Table~\ref{tab:configs}, the simulation-based calibration (SBC) and kill-and-resume runs, instead stop at a fixed per-trial simulation count. Posterior estimation and plotting run off the timed inference path, so they do not count against the budget. Matched-budget tolerances (\S\ref{sec:experiments-metrics}) count only records of completed simulations; pyABC also writes one record per accepted particle, and mixing the two would flatter it.

\paragraph{Benchmark configurations.} Table~\ref{tab:configs} lists the resolved settings ($k$ is the archive/population size; the budget is per method run). The strong-scaling studies sweep the worker count from $1$ to $288$ (Lotka--Volterra; sub-node $1/4/16$ and fully-packed multiples of $48$ up to six nodes) and to $384$ (Cellular Potts; up to eight nodes), with archive size $k\in\{100,1000\}$; all other studies fix the workers shown.

\begin{table}[t]
\caption{Resolved experiment configurations. Workers, per-run wall-clock budget, replicates (trials for SBC), archive size $k$, and methods (A: asynchronous; S: pyABC (matched kernel); R: rejection ABC; T: synchronized variant). The synchronized variant row covers the straggler, heterogeneity and Cellular Potts $50^3$ studies at $16$, $48$ and $48$--$384$ workers. $^\dagger$~Simulation-limited: the budget is per-trial (SBC) or per-run (kill-and-resume) \emph{simulations}, not wall-clock seconds; the synchronized variant, and the asynchronous re-run matched to it, stop at a fixed simulation count per level (\S\ref{sec:experiments-baselines}, Table~\ref{tab:twin}). $^\ddagger$~Also run with AMIS on/off, a drain-after-deadline variant, and a steeper coupling regime (base delay $0.5$\,s, cap $8$\,s). \textsuperscript{\S}~The Cellular Potts rows of these studies ran the benchmark's earlier configuration (Appendix~\ref{app:cpm}).}\label{tab:configs}
\centering
\begin{tabular}{@{}lccccc@{}}
\toprule
Experiment & Workers & Budget (s) & Reps & $k$ & Methods \\
\midrule
Gaussian mean          & 48 & 300  & 5   & 100 & A,S,R \\
g-and-k                & 48 & 600  & 5   & 100 & A,S,R \\
Lotka--Volterra        & 48 & 600  & 5   & 100 & A,S \\
Cellular Potts $50^3$  & 48 & 3600 & 5   & 100 & A,S,R \\
Cellular Potts $80^3$  & 48 & 3600 & 3   & 100 & A,S,R \\
Straggler              & 16 & 300  & 5   & 100 & A,S \\
Runtime heterogeneity  & 48 & 60   & 5   & 100 & A,S \\
Parameter-coupled runtime$^\ddagger$ & 48 & 60 & 5 & 100 & A,S \\
Strong scaling (LV)    & 1--288 & 180  & 5 & 100/1000 & A,S \\
Strong scaling (CPM)\textsuperscript{\S} & 1--384 & 1800 & 3--5 & 100 & A,S \\
Synchronized variant\textsuperscript{\S}  & 16 / 48 / 48--384 & sims$^\dagger$ & 5 & 100 & T,A \\
Sensitivity            & 48 & 300  & 5   & 50/200 & A \\
Ablation               & 48 & 300  & 5   & 100 & A \\
SBC (Gaussian, 1-D)    & 48 & 20k$^\dagger$  & 1000 & 100 & A,S \\
SBC (bimodal)          & 48 & 20k$^\dagger$  & 1000 & 100 & A,S \\
SBC (g-and-k, 4-D)     & 48 & 40k$^\dagger$  & 1000 & 200 & A,S \\
SBC (g-and-k, $2\times$) & 48 & 80k$^\dagger$ & 500 & 200 & A,S \\
Kill-and-resume        & 4  & 120k$^\dagger$ & 1    & 100 & A \\
\bottomrule
\end{tabular}
\end{table}

\paragraph{Benchmark-specific notes.} Each benchmark fixes one observed dataset (generated from seed $42$) and infers parameters under uniform priors; the discrepancy is a distance between simulated and observed summary statistics. \emph{Gaussian mean:} $n_{\text{obs}}=100$ observations $\sim\mathcal{N}(0,1)$, prior $\mu\sim\mathcal{U}[-5,5]$, summary the sample mean, discrepancy $|\bar x_{\text{sim}}-\bar x_{\text{obs}}|$; the closed-form Gaussian posterior is the calibration and recovery reference. \emph{g-and-k:} the quantile-defined g-and-k distribution ($c=0.8$), $n_{\text{obs}}=1000$, true $(A,B,g,k)=(3,1,2,0.5)$, priors $A\sim\mathcal{U}[0,6]$, $B\sim\mathcal{U}[0.1,4]$, $g\sim\mathcal{U}[0,5]$, $k\sim\mathcal{U}[0,1]$; summary the seven octiles (the $0.125,\dots,0.875$ quantiles), discrepancy Euclidean. \emph{Lotka--Volterra:} a Gillespie stochastic simulation \citep{gillespie1977} of the four-reaction predator--prey system from $(X_0,Y_0)=(50,100)$ over $T_{\max}=30$, true rates $\theta=(0.5,0.025,0.025,0.5)$, priors $\theta_1,\theta_4\sim\mathcal{U}[0.05,2]$ and $\theta_2,\theta_3\sim\mathcal{U}[0.001,0.2]$; summary the six statistics $(\overline X,\operatorname{sd}X,\overline Y,\operatorname{sd}Y,\log(1{+}\operatorname{var}X/\overline X),\log(1{+}\operatorname{var}Y/\overline Y))$, discrepancy Euclidean. \emph{Cellular Potts:} the real \texttt{cellsInSilico} (NAStJA) simulator; the setup and the screening that fixed it are in Appendix~\ref{app:cpm}. Its per-evaluation cost dominates, which is why it uses the longest wall-clock budget.

\paragraph{Computing environment.} All experiments ran on the JUWELS Cluster at the J\"ulich Supercomputing Centre: dual-socket Intel Xeon Platinum 8168 (Skylake) nodes with $48$ physical cores and ${\approx}94$\,GB memory per node, connected by Mellanox EDR InfiniBand. The software stack is ParaStationMPI~$5.10$ (the 2025 GCC toolchain), Python~$3.12.3$, NumPy~$1.26.4$, SciPy~$1.13.1$, mpi4py~$4.0.1$, and pyABC~$0.12.17$, with POT \citep{flamary2021pot} for the sliced Wasserstein distance; the asynchronous sampler runs as a propagator in a fork of Propulate. Unless noted, ranks are placed one per physical core, so a ``$W$-worker'' point occupies $W$ physical cores packed onto $\lceil W/48\rceil$ fully-reserved nodes. The production experiment of the three benchmarks with inexpensive simulators (Gaussian mean, g-and-k, Lotka--Volterra) instead placed its $48$ ranks as $12$ per node on four nodes under whole-node reservation because the single-node submissions failed. This placement is why its asynchronous throughput is below the packed single-node scaling run's on the same benchmark (\S\ref{sec:results-boundary}). The Cellular Potts production run and the straggler, heterogeneity, synchronized variant and scaling experiments are single-node or fully packed. Each strong-scaling point reports five replicates (three to five for the largest Cellular Potts points), and figures show medians with inter-quartile ranges.

\section{The Cellular Potts inference setup}\label{app:cpm}
The benchmark runs the \texttt{cellsInSilico} Cellular Potts tissue model \citep{berghoff2020cis} through the NAStJA framework \citep{berghoff2018nastja}: a $50^3$ lattice for $501$ steps (an $80^3$ lattice for $1001$ steps in the larger configuration), seeded with a small cluster of cells that divide and move, summarized at the final step. The inference problem was designed from a screening experiment of about $12{,}000$ simulations at parameter settings chosen to cover the space of six candidate parameters. We scored each summary block, a scalar summary or a ten-dimensional curve of summary statistics, by the ratio of between-parameter to within-parameter variation. The configuration we started from weighted seven summary blocks equally. Two scalars, the log cell count and the log $95$th-percentile radius, carry nearly all of the signal: their signal-to-noise ratios are $12$ and $9$, against about $1$ for the four ten-dimensional curve blocks. We therefore compute the discrepancy from those two scalars, weighted equally. One evaluation is the feature-average of four replicate simulations at the same parameters, because a single realization is too noisy for the discrepancy to order nearby parameters. The two summaries resolve exactly two directions: how many cells there are and how large the cluster is at a given count. We therefore infer the two parameters that move those directions independently, the division rate (count) and the target cell volume (radius at fixed count); their responses are near-orthogonal ($|\cos|=0.07$). Every other candidate parameter moves the summaries only through the cell count. At $50^3$, cell motility, adhesion (how energetically favorable contact between two cells is) and surface tension (how energetically favorable a spherical cell shape is) are confounded with the division rate at $|\cos|\ge0.8$. At $80^3$, with $6.4\times$ the cells, the confounding grows stronger rather than breaking ($0.81\to0.98$ for the surface term). The benchmark is therefore two-dimensional as a property of the model, not for want of screening. Priors are log-uniform on $[0.001,0.2]$ and $[200,1200]$ with truths $0.009$ and $500$; motility is fixed at $1400$. Rejection ABC on $2000$ prior draws forecast $91\%$ and $64\%$ contraction for the two parameters before any adaptive run, against the $92\%$ and $81\%$ the asynchronous estimator reports (\S\ref{sec:results-posterior}).

\paragraph{Two configurations.} The \emph{production} configuration is the two-parameter setup above and supplies every posterior result, Tables~\ref{tab:matched-eps} and \ref{tab:cpm-production}, the $80^3$ systems comparison, and the theory diagnostics of Appendix~\ref{app:theory-more}. The \emph{earlier} configuration was used by the $50^3$ synchronized variant and scaling experiments (the $50^3$ points of Fig.~\ref{fig:barrier}c, Table~\ref{tab:twin-cpm}, Fig.~\ref{fig:scaling} right, Fig.~\ref{fig:cpm-util}). It inferred the cell division rate and motility on normalized $[0,1]$ priors mapping to $[6{\times}10^{-5},0.6]$ and $[0,10^4]$ (true normalized values $0.0499$ and $0.2$). Each evaluation comprised one simulation ($5.1$\,s at $50^3$, in-run coefficient of variation $0.10$) and used a whitened distance over morphological summary features: radial density and pair-correlation profiles, free-area and structure-factor profiles, a DBSCAN (density-based spatial clustering of applications with noise) dispersion fraction, and log cell count and radius. The screening above found those two parameters confounded, which is why the production configuration replaced them. The systems measurements depend on the configuration only through its runtime distribution, which the barrier-cost predictor takes from the run itself.

\section{Additional results}\label{app:moved}

\subsection{Runtime heterogeneity: archive quality}

\paragraph{The synthetic instance.} The heterogeneity study shows the same conversion of throughput into accuracy where the runtime distribution is known by construction (Fig.~\ref{fig:hetero-quality}): at a fixed $60$\,s budget on the Gaussian benchmark pyABC completes fewer simulations as the runtime spread $\sigma$ grows and its archive's posterior-mean error to the analytic posterior climbs from ${\approx}0.05$ to ${\approx}0.28$, while the asynchronous sampler completes several-fold more and its archive holds ${\approx}0.007$--$0.011$. These are archive means; the posterior estimates of both samplers in this experiment are scored in Table~\ref{tab:twin-hetero}.

\begin{figure}[h]
\centering
\includegraphics[width=\linewidth]{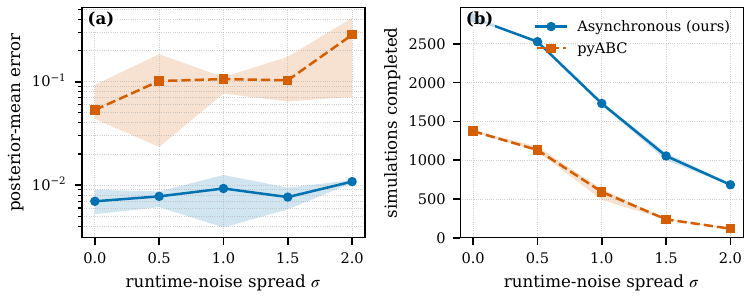}
\caption{Inference per unit wall clock under runtime heterogeneity (Gaussian mean, analytic posterior, $48$ workers, $60$\,s budget, five replicates; bands are inter-quartile ranges). Left: error of the archive mean (the unweighted top-$k$ archive, not $\widehat\pi_n$; \S\ref{sec:results-posterior}) to the analytic posterior mean, against the runtime spread $\sigma$. Right: simulations completed within the budget. The barrier starves pyABC as $\sigma$ grows; the asynchronous sampler's error stays flat.}\label{fig:hetero-quality}
\end{figure}

\subsection{The synchronized variant on the straggler benchmark, and pyABC under injected heterogeneity}\label{app:twin-straggler}
\paragraph{A persistent straggler.} One of $16$ workers is given a permanent post-evaluation delay of $0.1$\,s scaled by $0,1,5,10,20$ (Table~\ref{tab:twin}). The generation time is the slow worker's recorded evaluation time ($2.000$\,s at $20\times$) plus the $3.6$\,ms per-evaluation overhead measured on the fast workers. The synchronized variant therefore cannot exceed $16/2.004=7.98$ simulations per second. Dividing the measured asynchronous rate by that ceiling gives $401\times$, against a measured $402\times$. The agreement is the same at $5\times$ and $10\times$. At $0\times$ and $1\times$ the prediction fails: with no straggler to wait for, the collective itself costs about $0.3$\,s per barrier on a $4$\,ms workload, a term the model does not contain, and taking the barrier only once per $112$ evaluations (pyABC's population) spreads that cost over more evaluations. From $5\times$ up the two barrier granularities agree with each other and with the prediction within $0.5\%$ ($0.2\%$ at $10\times$, $0.06\%$ at $20\times$): once a straggler sets the generation time, the barrier's cost does not depend on how often it is taken.

\begin{table}[h]
\caption{Synchronized variant on the straggler benchmark ($W=16$, five replicates, medians). Throughput in simulations per second; \emph{predicted} as in \S\ref{sec:experiments-metrics}, and at $0\times$ and $1\times$ the collective's latency, absent from the model, dominates. Lower block: $W_1$ of each sampler's posterior to the analytic posterior at the synchronized variant's simulation count and seeds, at the bandwidth its schedule reached and, last row, re-reported at $\epsilon_{(100)}$ (\S\ref{sec:results-posterior}).}\label{tab:twin}
\centering
\begin{tabular}{@{}lccccc@{}}
\toprule
Straggler slowdown & $0\times$ & $1\times$ & $5\times$ & $10\times$ & $20\times$ \\
\midrule
Asynchronous, sims/s              & 3888 & 3112 & 3241 & 3218 & 3205 \\
Synchronized variant, barrier every $W$           & 48.7 & 65.3 & 31.6 & 15.9 & 8.0 \\
Synchronized variant, barrier every $112$         & 162.8 & 148.4 & 31.8 & 15.9 & 8.0 \\
Synchronized variant, predicted                   & (3870) & (154) & 31.8 & 15.9 & 7.98 \\
\midrule
Ratio, measured (every $W$)       & $80\times$ & $48\times$ & $103\times$ & $202\times$ & $402\times$ \\
Ratio, predicted                  & --- & --- & $102\times$ & $202\times$ & $401\times$ \\
\midrule
\multicolumn{6}{@{}l}{\emph{Posterior recovery, $W_1$ to the analytic posterior, at the synchronized variant's simulation count}} \\
Evaluations ($=$ calls per rank $\times16$) & 75{,}104 & 44{,}512 & 12{,}304 & 6{,}304 & 3{,}200 \\
Bandwidth reached, asynchronous   & $5\times10^{-4}$ & $7\times10^{-4}$ & $0.003$ & $0.006$ & $0.16$ \\
Bandwidth reached, synchronized variant every $W$ & $5\times10^{-4}$ & $8\times10^{-4}$ & $0.06$ & $1.3$ & $2.9$ \\
$\epsilon_{(100)}$ of the same history & $3\times10^{-4}$ & $5\times10^{-4}$ & $0.002$ & $0.003$ & $0.008$ \\
Asynchronous at that count        & 0.011 & 0.009 & 0.010 & 0.010 & 0.068 \\
Synchronized variant, barrier every $W$           & 0.013 & 0.007 & 0.016 & 0.789 & 1.713 \\
Synchronized variant, barrier every $112$         & 0.009 & 0.009 & 0.009 & 0.017 & 1.082 \\
Synchronized variant every $W$, re-reported at $\epsilon_{(100)}$ & 0.015 & 0.010 & 0.010 & 0.010 & 0.010 \\
\bottomrule
\end{tabular}
\end{table}

\paragraph{The schedule must move.} The synchronized variant shows the bandwidth transient of \S\ref{sec:results-posterior} on a near-instantaneous simulator. At the synchronized variant's own simulation count and seeds the asynchronous sampler reports $W_1=0.010$ and $0.068$ to the analytic posterior at $10\times$ and $20\times$ where the synchronized variant reports $0.79$ and $1.71$ (Table~\ref{tab:twin}, lower block). The propagator is the same, and so are the $200$ calls per rank at $20\times$. But the synchronized variant's schedule reached a bandwidth of $2.9$, against $0.16$, on histories whose $100$th-smallest discrepancy is $0.008$, because under lockstep it takes fewer search steps (Appendix~\ref{app:implementation}). Re-reporting the synchronized variant's histories at $\epsilon_{(100)}$ brings them to $0.010$--$0.015$.

\begin{figure}[h]
\centering
\includegraphics[width=\linewidth]{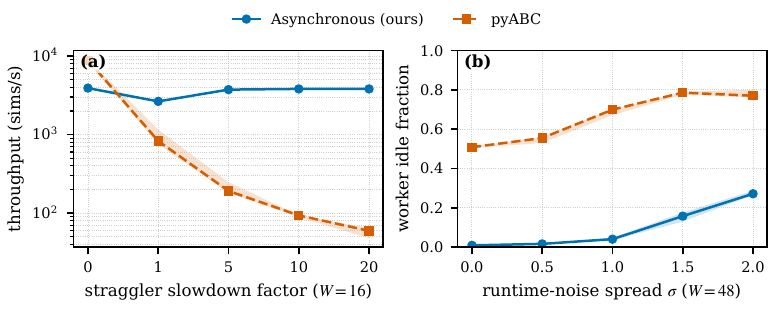}
\caption{pyABC against the asynchronous sampler under injected runtime heterogeneity on the Gaussian mean (\S\ref{sec:results-barrier}). (a) Throughput under a persistent straggler: one of $16$ workers carries a permanent post-evaluation delay of $0.1$\,s scaled by the factor ($300$\,s; medians and inter-quartile ranges over five replicates; logarithmic axis). (b) Fraction of worker time spent idle under a lognormal runtime multiplier of spread $\sigma$ ($48$ workers, $60$\,s).}\label{fig:barrier-pyabc}
\end{figure}

\subsection{The synchronized variant under runtime heterogeneity and on Cellular Potts}
Table~\ref{tab:twin-hetero} gives the synchronized variant's heterogeneity experiment behind \S\ref{sec:results-barrier}, with the posteriors of every sampler at the bandwidth its schedule reached and re-reported at the order statistic (\S\ref{sec:results-posterior}); Table~\ref{tab:twin-cpm} factorizes the synchronized variant's Cellular Potts slowdown into the barrier's share and the unattributed duration ratio.
\begin{table}[h]
\caption{Synchronized variant under runtime heterogeneity ($W=48$, lognormal runtime multiplier of scale $\sigma$, five replicates, medians of simulations per second of each sampler's own elapsed wall clock). \emph{Predicted} is from the injected runtime distribution and the asynchronous run's measured overhead. The $\sigma=0$ control costs almost nothing because these evaluations take $1$\,s and barrier latency is negligible against them; the cost is the wait for the slowest draw and grows with $\sigma$. pyABC degrades less than the synchronized variant because it discards latecomers instead of waiting for them. The lower block scores the posteriors against the analytic posterior at the bandwidth each schedule reached, then re-reported at the $100$th order statistic of the same history (\S\ref{sec:results-posterior}); no rank made $k$ calls in the asynchronous run, so its schedule never ran at $\sigma\le0.5$, and the synchronized variant's ran only at $\sigma\le0.5$.}\label{tab:twin-hetero}
\centering
\small
\setlength{\tabcolsep}{4pt}
\begin{tabular}{@{}lccccc@{}}
\toprule
Heterogeneity $\sigma$ & $0$ & $0.5$ & $1.0$ & $1.5$ & $2.0$ \\
\midrule
Asynchronous, sims/s              & 47.6 & 42.1 & 28.9 & 17.6 & 11.4 \\
Synchronized variant, barrier every $W$           & 44.8 & 15.1 & 4.37 & 1.06 & 0.37 \\
Synchronized variant, barrier every $96$          & 45.1 & 19.8 & 6.50 & 1.75 & 0.45 \\
pyABC                             & 23.1 & 18.9 & 9.91 & 4.35 & 2.39 \\
\midrule
Ratio, measured (every $W$)       & $1.06\times$ & $2.8\times$ & $6.6\times$ & $16.6\times$ & $29.6\times$ \\
Ratio, predicted                  & $1.00\times$ & $2.77\times$ & $6.34\times$ & $14.0\times$ & $35.4\times$ \\
Ratio, pyABC                      & $2.1\times$ & $2.2\times$ & $2.9\times$ & $4.1\times$ & $4.8\times$ \\
\midrule
\multicolumn{6}{@{}l}{\emph{Posterior recovery, $W_1$ to the analytic posterior (calls per rank: async $59$--$13$, synchronized variant $440$--$10$)}} \\
Evaluations, async                & 2832 & 2527 & 1731 & 1055 & 684 \\
Evaluations, synchronized variant                 & 21{,}120 & 6{,}912 & 2{,}112 & 672 & 480 \\
Bandwidth reached, async          & 5 & 5 & 0.92 & 1.25 & 0.78 \\
Bandwidth reached, synchronized variant every $W$ & 0.36 & 2.4 & 5 & 5 & 5 \\
Async, at the schedule's bandwidth & 2.12 & 2.04 & 0.68 & 0.93 & 0.56 \\
Synchronized variant every $W$, at the schedule's bandwidth & 0.19 & 1.43 & 2.20 & 2.22 & 2.28 \\
pyABC, last generation            & 0.53 & 0.61 & 1.36 & 2.19 & 2.21 \\
Async, re-reported at $\epsilon_{(100)}$ & 0.011 & 0.011 & 0.010 & 0.013 & 0.042 \\
Synchronized variant every $W$, re-reported at $\epsilon_{(100)}$ & 0.015 & 0.013 & 0.012 & 0.040 & 0.203 \\
Async archive (top $100$)         & 0.018 & 0.017 & 0.016 & 0.014 & 0.019 \\
\bottomrule
\end{tabular}
\end{table}
\begin{table}[h]
\caption{The synchronized variant's slowdown on Cellular Potts $50^3$ (earlier parameterization, Appendix~\ref{app:cpm}), factorized (five replicates, medians). Throughput is utilization over mean simulation time, so the ratio is the product of the two columns that follow. The utilization ratio is what the barrier costs; it agrees with the expected-maximum-to-mean runtime ratio measured within the synchronized variant's generations and with the prediction from the asynchronous run's runtime distribution. The duration ratio (the synchronized variant's simulations running longer, with replicate means from $6$ to $36$\,s against $5.1$\,s on every asynchronous replicate) is not a barrier effect and is not attributed here.}\label{tab:twin-cpm}
\centering
\begin{tabular}{@{}lcccc@{}}
\toprule
Workers ($=$ cores) & 48 & 96 & 192 & 384 \\
\midrule
Throughput ratio, async / synchronized variant        & $1.91\times$ & $1.87\times$ & $1.86\times$ & $2.82\times$ \\
\quad $=$ utilization ratio           & $1.18\times$ & $1.19\times$ & $1.21\times$ & $1.21\times$ \\
\quad $\times$ simulation-duration ratio & $1.63\times$ & $1.59\times$ & $1.52\times$ & $2.33\times$ \\
Straggler factor of the synchronized variant's generations & 1.21 & 1.28 & 1.21 & 1.29 \\
Predicted from asynchronous timing    & $1.26\times$ & $1.30\times$ & $1.32\times$ & ($2.34\times$) \\
\bottomrule
\end{tabular}
\end{table}

\subsection{Tolerance curves}
Figure~\ref{fig:eps-curves} shows the curves Table~\ref{tab:matched-eps} is read from.
\begin{figure}[h]
\centering
\includegraphics[width=0.85\linewidth]{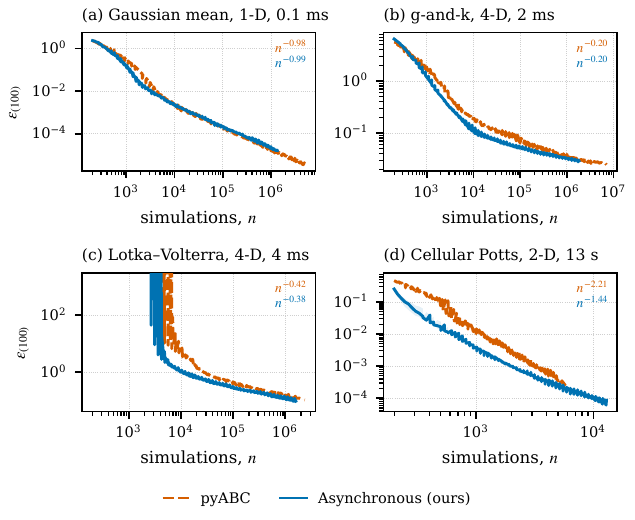}
\caption{Tolerance against simulations spent, $\epsilon_{(100)}(n)$, asynchronous against pyABC on every benchmark (medians and replicate range over five replicates; the slope over the last decade of $n$ in the legend). The vertical gap at equal $n$ is the per-simulation efficiency, the horizontal extent is the extra simulations the wall clock allowed. On Lotka--Volterra $98\%$ of simulations return the extinction fallback, so $\epsilon_{(100)}$ sits on that plateau until about $5000$ draws have produced $100$ survivors; the plateau is cut off.}\label{fig:eps-curves}
\end{figure}

\subsection{Parameter-coupled runtime}
The heterogeneity study above slows simulations independently of their parameters; a sharper test couples runtime \emph{to the inferred parameter}, probing the length bias of deadline-limited sampling (\S\ref{sec:limitations}). On the Gaussian-mean benchmark ($48$ workers, $60$\,s budget, five replicates), the post-evaluation delay grows with the inferred mean $\mu$ as $d(\mu)=\min\!\big(d_0\,e^{\,c(\mu-\mu_c)/\ell},\,d_{\max}\big)$ with $d_0=0.02$\,s, $\mu_c=0$, $\ell=0.15$ and $d_{\max}=0.5$\,s. We sweep $c\in\{0,0.5,1,2\}$; $c=0$ gives uniform runtime. One region of parameter space therefore evaluates far faster than its mirror image, and keeping every evaluated particle over-represents that region in the raw archive. The coupling has the intended effect on the sampling rate: as $c$ increases, aggregate throughput falls by ${\approx}43\%$ for the asynchronous sampler, and the decrease with $c$ is larger for pyABC (Fig.~\ref{fig:param-bias}a).

The posterior does not follow. The asynchronous posterior-mean error to the analytic posterior stays ${\approx}0.01$ at every coupling, comparable to pyABC, which discards latecomers (Fig.~\ref{fig:param-bias}b). This posterior-mean result holds with or without AMIS ($S{=}0$), for both the unweighted archive and the AMIS-reweighted estimator \eqref{eq:posterior-estimator}. It also holds under a stronger runtime coupling (base delay $0.5$\,s, cap $8$\,s). The reweighting changes the spread, not the mean: it lowers the effective-sample-size fraction from ${\approx}0.99$ to ${\approx}0.88$ and widens the central $90\%$ interval by ${\approx}15\%$. A drain-after-deadline variant, which lets every in-flight simulation finish past the deadline, moves the weighted posterior mean by at most $0.013$, within the replicate spread and with no systematic direction, so the deadline censoring is negligible at these couplings, as Proposition~\ref{prop:censoring} predicts for $W/n\ll1$. AMIS neither creates nor removes a mean bias here, and a correction in the manner of anytime Monte Carlo \citep{murray2021anytime} is future work.

\begin{figure}[t]
\centering
\includegraphics[width=0.92\linewidth]{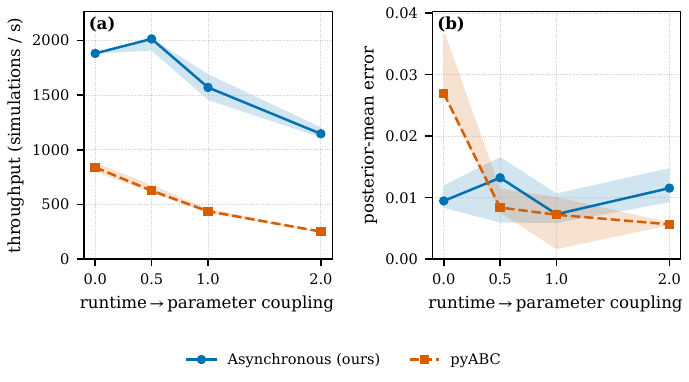}
\caption{Parameter-coupled runtime on the Gaussian-mean benchmark (analytic posterior; $48$ workers, $60$\,s budget, five replicates; bands are inter-quartile ranges). The post-evaluation delay grows with the inferred mean, so one region of parameter space is sampled far more often than its mirror image. (a) Aggregate throughput falls as the coupling steepens (monotonically for pyABC, and from a peak at $c=0.5$ for the asynchronous sampler), consistent with the fast region dominating the raw sample counts. (b) Posterior-mean error to the analytic posterior nonetheless does not increase with coupling and stays comparable between the asynchronous sampler and pyABC, which discards latecomers, so the over-representation does not measurably distort the posterior estimate at these coupling levels.}\label{fig:param-bias}
\end{figure}

\subsection{Strong scaling on an inexpensive simulator}
Lotka--Volterra is an adversarial simulator for a scaling study. Each evaluation takes ${\approx}4$\,ms, so there is little per-evaluation compute to distribute, and an instrumented run (Fig.~\ref{fig:lv-timing}) shows a worker inside the simulator ${\approx}25\%$ of its time at the $48$-worker throughput peak, falling to ${\approx}8\%$ and ${\approx}2\%$ as the job spreads from one node to three to six. The method keeps one history shared by every worker, so every completed evaluation is exchanged with all workers and each can propose from the current archive. This all-to-all exchange has a per-arrival cost that grows with the worker count. Aggregate throughput therefore rises almost linearly while the job fits on one node and then declines across fully packed multi-node jobs (Table~\ref{tab:scaling}), because adding ranks beyond one node lengthens the exchange faster than it adds useful work. pyABC is given a population equal to the worker count at high core counts (a smaller population cannot occupy every core); it scales steadily and overtakes the asynchronous sampler beyond three nodes. This is the boundary of the asynchronous advantage on an inexpensive, homogeneous simulator, and it lifts once the simulator carries real per-evaluation cost (Fig.~\ref{fig:scaling}, right). The archive size moves the boundary, because the per-arrival proposal reconstruction is $\mathcal{O}(k)$. Raising $k$ to $1000$ costs $39\%$ of the single-node peak throughput. Sweeping $k$ over $\{50,100,200,400,800,1000\}$ shows that throughput falls smoothly and monotonically as $k$ grows, and that the node count at which pyABC overtakes moves by a single step, from beyond three nodes to three nodes at $k\ge400$ (Table~\ref{tab:k-frontier}).

\begin{table}[t]
\caption{Strong scaling on Lotka--Volterra: median simulation throughput (sims/s) at a 180\,s budget, over five replicates (Fig.~\ref{fig:scaling}, left, shows the inter-quartile spread). Asynchronous (ours) at both archive sizes vs.\ pyABC. Multi-node points use \emph{fully packed} nodes (worker counts that are exact multiples of $48$); at counts above $100$ pyABC is given \emph{population size equal to the worker count} so it can occupy every core (a population of $100$ leaves the surplus idle), which is why it keeps scaling here. The asynchronous sampler peaks on one node and then declines as multi-node coordination dominates, so pyABC overtakes it beyond three nodes at $k=100$. Enlarging the archive tenfold costs throughput at every scale, because the per-arrival proposal reconstruction is $\mathcal{O}(k)$, and moves that crossover in from beyond three nodes to three.}\label{tab:scaling}
\centering
\begin{tabular}{@{}lcccccccc@{}}
\toprule
Workers & 1 & 4 & 16 & 48 & 144 & 192 & 240 & 288 \\
Nodes & $<$1 & $<$1 & $<$1 & 1 & 3 & 4 & 5 & 6 \\
\midrule
Synchronous & 105 & 96 & 437 & 2952 & 3757 & 4322 & 4539 & 6034 \\
Asynchronous ($k=100$) & 149 & 369 & 1983 & 6240 & 4195 & 2615 & 2088 & 1729 \\
Asynchronous ($k=1000$) & 106 & 133 & 620 & 3804 & 2804 & 1957 & 1587 & 1336 \\
\bottomrule
\end{tabular}
\end{table}

\begin{figure}[t]
\centering
\includegraphics[width=0.66\linewidth]{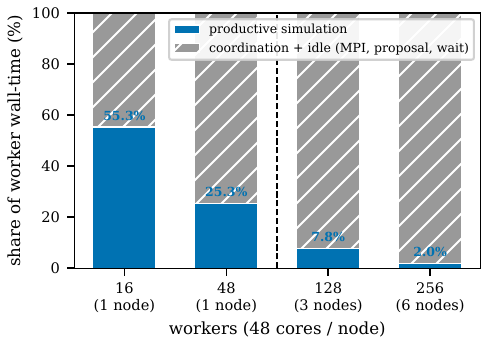}
\caption{Measured decomposition of asynchronous worker wall-time on Lotka--Volterra ($k=100$; $180$\,s budget) from a dedicated instrumented run. The productive simulation fraction (blue, annotated), the share of worker wall-clock spent inside the simulator, collapses from $55\%$ at $16$ workers to $2\%$ at $256$, while coordination and idle (MPI exchange, per-arrival proposal reconstruction, serialization, waiting) dominate as the job spreads beyond one node (16/48 = 1 node; 128 = 3 nodes; 256 = 6 nodes). The strong-scaling decline of Fig.~\ref{fig:scaling} is therefore a measured property of coordinating low-cost evaluations.}\label{fig:lv-timing}
\end{figure}

\subsection{Calibration against the archive size and the snapshot buffer}
The calibration tests of \S\ref{sec:results-posterior} fix $k$ and $S$ per benchmark, which leaves open whether the two settings must be retuned as the dimension grows. To test this, we sweep them directly on a $d$-dimensional Gaussian-mean target, $y_i\sim\mathcal N(\theta,\sigma^2 I_d)$ with $\theta\in\mathbb R^d$ under a flat box prior, whose posterior is known exactly at every $d$. Simulation-based calibration against that exact posterior returns $0.496/0.794/0.905/0.952$, so the target itself is calibrated and any departure below is a property of the method. The grid is $d\in\{1,2,4,8,16\}\times k\in\{50,100,200,400,800\}\times S\in\{0,5,20,50\}$, $1000$ trials per cell, reporting the same full-history estimator (Table~\ref{tab:ks-sweep}). $S$ sets both the online buffer length and the number of reconstructed proposals in the posterior denominator (\S\ref{sec:method-amis}), and the sweep varies the two together; at $S=0$ the denominator holds only the first and the last reconstructed proposal, so the $S=0$ column tests the posterior-side mixture as much as the online buffer. At $S=0$ the estimator is over-confident in one and two dimensions (mean deviation $-0.088$); every $S\ge5$ removes that deviation, and the experiments use $S=20$.

\paragraph{The archive size does not, in this range, trade calibration against throughput.} The archive size has both a statistical effect (above) and a systems cost (per-arrival proposal reconstruction is $\mathcal{O}(k)$, so a larger archive is slower at every worker count; \S\ref{sec:results-boundary}). Table~\ref{tab:k-frontier} puts the throughput cost of $k$ next to its worst-case miscalibration over the dimensions swept, using a throughput grid matched on everything but $k$ (Lotka--Volterra, $180$\,s wall cap, five replicates, worker counts $1$--$288$: $k\in\{100,1000\}$ from the main scaling run and $k\in\{50,200,400,800\}$ from a dedicated $k$-frontier run sharing that configuration). Of the six archive sizes, only the two smallest are Pareto-optimal. The setting $k=100$ is both the best-calibrated one tested (worst deviation $0.064$, at $d=4$) and within $7\%$ of the fastest ($6240$ vs $6684$ simulations/s at the single-node peak); every larger archive is dominated by it on both axes: $k=200$ is $5\%$ slower and nearly three times worse calibrated ($0.180$), $k=400$ is $14\%$ slower with a deviation of $0.195$, and $k=800$ costs $31\%$ of throughput and is still slightly worse calibrated than $k=100$ ($0.074$). The only faster archive, $k=50$, fails calibration ($0.250$ at $d=8$). Above that small-archive failure region, enlarging the archive adds no calibration and costs throughput monotonically.

The two axes come from different benchmarks by design (calibration needs a target whose posterior is known at every $d$; throughput needs the regime where an $\mathcal{O}(k)$ per-arrival cost is most exposed), so the throughput column is an upper bound on the relative cost of $k$. At scale, $k$ shifts the crossover with pyABC (population equal to the worker count) by a single step: pyABC overtakes beyond three fully packed nodes for $k\le200$ and at three nodes for $k\ge400$ (last column).

\begin{table}[t]
\caption{Calibration sensitivity to the archive size $k$ and the AMIS snapshot buffer $S$ on the $d$-dimensional Gaussian-mean target ($1000$ trials per cell; entries are the mean signed deviation of empirical from nominal coverage over the $0.50/0.80/0.90/0.95$ levels and the $d$ parameters). Negative is over-confident, positive is conservative; $0.000$ is exactly calibrated. \textbf{Left:} varying $k$ at $S=20$. \textbf{Right:} varying $S$ at $k=100$.}\label{tab:ks-sweep}
\centering
\begin{tabular}{@{}lccccc@{}}
\toprule
& \multicolumn{5}{c@{}}{archive size $k$ (at $S=20$)} \\
\cmidrule(l){2-6}
& 50 & 100 & 200 & 400 & 800 \\
\midrule
$d=1$  & -0.014 & -0.003 & +0.006 & +0.006 & +0.074 \\
$d=2$  & -0.020 & -0.006 & +0.009 & +0.030 & +0.072 \\
$d=4$  & +0.025 & +0.064 & +0.094 & +0.133 & +0.063 \\
$d=8$  & \textbf{-0.250} & +0.055 & +0.180 & +0.195 & +0.061 \\
$d=16$ & -0.047 & +0.028 & +0.049 & +0.053 & +0.055 \\
\bottomrule
\end{tabular}
\vspace{0.6em}

\begin{tabular}{@{}lcccc@{}}
\toprule
& \multicolumn{4}{c@{}}{snapshot buffer $S$ (at $k=100$)} \\
\cmidrule(l){2-5}
& 0 & 5 & 20 & 50 \\
\midrule
$d=1$  & \textbf{-0.088} & -0.001 & -0.003 & -0.005 \\
$d=2$  & \textbf{-0.088} & -0.001 & -0.006 & -0.003 \\
$d=4$  & +0.039 & +0.068 & +0.064 & +0.064 \\
$d=8$  & +0.065 & +0.006 & +0.055 & +0.074 \\
$d=16$ & -0.010 & +0.010 & +0.028 & +0.034 \\
\bottomrule
\end{tabular}
\end{table}

\begin{table}[t]
\caption{The archive size on both of its axes at once. \textbf{Throughput}: median asynchronous throughput at the single-node peak ($48$ workers) on Lotka--Volterra, $180$\,s wall cap, five replicates ($k\in\{100,1000\}$ from the main scaling run, $k\in\{50,200,400,800\}$ from the matched $k$-frontier run), and its change relative to the $k{=}100$ we use throughout. \textbf{Calibration}: worst absolute deviation of empirical from nominal coverage over the dimensions of the $d$-dimensional Gaussian-mean sweep at $S{=}20$ (Table~\ref{tab:ks-sweep}), with the dimension attaining it. \textbf{Crossover}: fully packed nodes at which pyABC (population $=$ worker count) overtakes the asynchronous sampler. The sweep did not include $k{=}1000$. Bold marks the best value per column.}\label{tab:k-frontier}
\centering
\begin{tabular}{@{}lccccc@{}}
\toprule
$k$ & sims/s at $W{=}48$ & vs.\ $k{=}100$ & worst $|\Delta\text{cov}|$ & at & pyABC overtakes \\
\midrule
50   & \textbf{6684} & {\boldmath$+7\%$}  & 0.250 & $d{=}8$ & \textbf{beyond 3 nodes} \\
100  & 6240 & ---     & \textbf{0.064} & $d{=}4$ & \textbf{beyond 3 nodes} \\
200  & 5904 & $-5\%$  & 0.180 & $d{=}8$ & \textbf{beyond 3 nodes} \\
400  & 5393 & $-14\%$ & 0.195 & $d{=}8$ & 3 nodes \\
800  & 4331 & $-31\%$ & 0.074 & $d{=}1$ & 3 nodes \\
1000 & 3804 & $-39\%$ & ---   & ---     & 3 nodes \\
\bottomrule
\end{tabular}
\end{table}

\begin{figure}[t]
\centering
\includegraphics[width=0.48\linewidth]{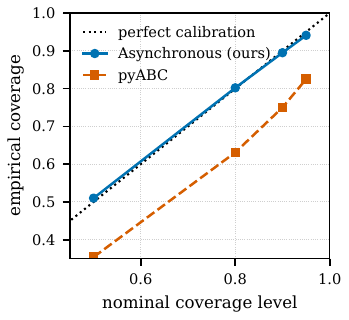}\hfill
\includegraphics[width=0.48\linewidth]{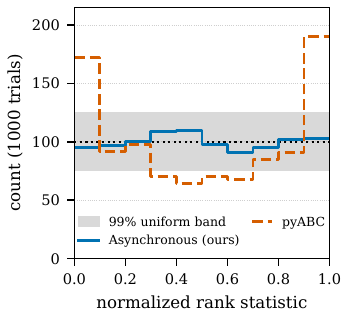}
\caption{Simulation-based calibration on Gaussian-mean inference (1000 trials; the full-history AMIS posterior): empirical coverage versus nominal level (left) and rank histogram (right). The asynchronous estimator is close to calibrated.}\label{fig:sbc}
\end{figure}

\subsection{Worker utilization on Cellular Potts at $50^3$}
\begin{figure}[t]
\centering
\includegraphics[width=0.52\linewidth]{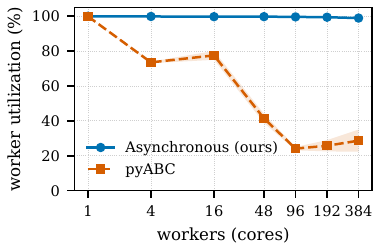}
\caption{Cellular Potts $50^3$ worker utilization, earlier parameterization (fraction of worker wall-clock spent inside the simulator) versus worker count ($1800$\,s budget; error bars are $\pm1$ standard deviation); pyABC uses a population equal to the worker count (Fig.~\ref{fig:scaling}, right). The asynchronous sampler stays ${\approx}98.9$--$99.8\%$ utilized at every scale. pyABC fills every core within a generation but then idles at the barrier waiting for the slowest simulation, so its utilization falls to ${\approx}24$--$29\%$ ($41\%\to24\%\to26\%\to29\%$ at $48/96/192/384$ workers, i.e.\ ${\approx}70$--$76\%$ idle at every multi-node scale). The figure is the measured utilization behind pyABC's sub-linear scaling in Fig.~\ref{fig:scaling} (right).}\label{fig:cpm-util}
\end{figure}

\subsection{Lotka--Volterra extinction}
The simulator has an absorbing state: if either population reaches zero before $T_{\max}$, the run returns a fallback discrepancy, which a smooth kernel weights to effectively zero. At the configured parameters that is the usual outcome (survival $2.6\%$ of $2000$ simulations at the true parameter, $1.7\%$ over $400$ uniform prior draws). The observed dataset is itself re-drawn until a trajectory survives, so the target is $p(\theta \mid s_{\mathrm{obs}}, \text{survived})$, which is internally consistent: the extinct particles' zero weight is the $P(\text{survive}\mid\theta)$ factor of the likelihood. Throughput and scaling are unaffected, because an extinct simulation costs what a surviving one costs. Posterior recovery is affected: the informative sample is ${\approx}2\%$ of the nominal budget, so we report Lotka--Volterra as a systems benchmark and treat its posterior curves as diagnostics that bound no method's attainable accuracy.

\end{document}